\documentclass[letterpaper, 10 pt, conference]{ieeeconf}  

\IEEEoverridecommandlockouts                              

\usepackage{amsthm}

\usepackage{amsmath} 
\usepackage{amssymb}  

\usepackage{dsfont}
\usepackage{multirow}
\usepackage{algorithm}
\usepackage[noend]{algpseudocode}
\usepackage{subcaption}
\usepackage{xspace}
\usepackage{cite}
\usepackage{graphicx}

\DeclareFontFamily{U}{mathx}{\hyphenchar\font45}
\DeclareFontShape{U}{mathx}{m}{n}{
      <5> <6> <7> <8> <9> <10>
      <10.95> <12> <14.4> <17.28> <20.74> <24.88>
      mathx10
      }{}
\DeclareSymbolFont{mathx}{U}{mathx}{m}{n}
\DeclareMathSymbol{\bigtimes}{1}{mathx}{"91}

\newtheorem{assumption}{Assumption}
\newtheorem{remark}{Remark}
\newtheorem{definition}{Definition}
\newtheorem{proposition}{Proposition}
\newtheorem{lemma}{Lemma}
\newtheorem{problem}{Problem}
\newtheorem{theorem}{Theorem}
\newtheorem{corollary}{Corollary}

\newcommand{\reals}{\mathbb{R}}
\newcommand{\naturals}{\mathbb{N}}

\newcommand{\Lcal}{\mathcal{L}}
\newcommand{\Scal}{\mathcal{S}}

\newcommand{\Pcal}{\mathcal{P}}

\newcommand{\Tau}{\mathcal{T}}
\newcommand{\Vcal}{\mathcal{V}}
\newcommand{\Ucal}{\mathcal{U}}
\newcommand{\Acal}{\mathcal{A}}

\newcommand{\strategy}{\pi_s}

\newcommand{\A}{\mathcal{A}}
\newcommand{\LTLf}{LTL$_f$\xspace}

\newcommand{\AP}{\mathrm{AP}}
\newcommand{\Prop}{\mathfrak{p}}
\newcommand{\safe}{\mathrm{safe}}
\newcommand{\unsafe}{\text{unsafe}}

\newcommand{\Xsafe}{X_\safe}
\newcommand{\Xunsafe}{X_\unsafe}

\newcommand{\abs}{\text{abs}}
\newcommand{\Xabs}{X_\abs}

\newcommand{\Bcal}{\mathcal{B}}
\newcommand{\Fcal}{\mathcal{F}}
\newcommand{\W}{\mathcal{W}}

\usepackage{xcolor}

\newcommand{\ig}[1]{\textcolor{blue}{[IG: #1]}}
\newcommand{\ml}[1]{\textcolor{cyan}{[ML: #1]}}

\title{\LARGE \bf
On the Optimality of Uncertain MDP Abstractions
}

\author{Ib\'on Gracia and Morteza Lahijanian
\thanks{Authors are with the Department of Aerospace Engineering Sciences,
        University of Colorado Boulder, 3775 Discovery Dr., Boulder, Colorado
        {\tt\small \{ibon.gracia, morteza.lahijanian\}@colorado.edu}}%
}

\begin{document}

\maketitle
\thispagestyle{empty}
\pagestyle{empty}


\begin{abstract}

We study the asymptotic optimality of abstraction-based control synthesis algorithms. Specifically, we consider uncertain MDP (UMDP) abstraction, and investigate whether refinement leads to optimal results, i.e., an optimal controller and zero error bound. Additionally, we study completeness of abstraction-refinement algorithms, i.e., that the algorithm produces near-optimal results in finite time. The focus is on nonlinear stochastic systems with general vector fields and temporal logic specifications. We present an algorithm that abstracts the system into a UMDP and synthesizes a controller with performance guarantees via robust dynamic programming. Then, the algorithm iteratively refines the abstraction until a near-optimality criterion is met. A thorough theoretical analysis reveals a sufficient condition, which we denote \emph{vanishing ambiguity}, guaranteeing asymptotic optimality of the abstraction process and completeness of the algorithm. We show that set-valued MDP abstractions satisfy this criterion, whereas interval MDP abstractions lack such a guarantee.

%
\end{abstract}


\section{Introduction}

Automatic control of \emph{safety-critical} cyber-physical systems requires formal guarantees, but in practice these rely on approximations with quantified (and ideally tight) error bounds. This is particularly challenging for complex, nonlinear, and stochastic robotic systems, where loose bounds (e.g., an autonomous car with safety probability lower-bound $0.90$ and upper-bound $1.0$) are insufficient for deployment.
Abstraction-based methods enable formal synthesis for such systems, and even for complex \emph{temporal logic} specifications, by modeling the systems as finite-state \emph{uncertain MDPs} (UMDPs). While these approaches provide sound probability bounds, they typically lack guarantees of asymptotic optimality, i.e., that refinement of the abstraction converges to the true satisfaction probability. Indeed, finer discretizations do not always improve bounds and can even degrade them \cite{Gracia:L4DC:2025}.
This raises a fundamental question: ``\emph{under what conditions are UMDP abstraction-based methods asymptotically optimal?}'' In this work, we aim to address it by identifying a sufficient condition under which asymptotic optimality is guaranteed and UMDP classes that satisfy this condition.

When it comes to formal control via UMDP abstractions, work \cite{abate2010approximate} focuses on stochastic hybrid systems with regular transition and reset kernels. It abstracts the continuous-state system as an MDP and then quantifies the error between the safety probabilities of the original system and the abstraction. The obtained error bound vanishes as the discretization becomes finer, meaning that the abstraction approach is asymptotically optimal. However, the obtained bounds are loose. In consequence, \cite{lahijanian2015formal} proposes to soundly abstract the original system into a UMDP class named \emph{interval MDP} (IMDP). This approach effectively embeds the discretization error into the ambiguous, i.e., not fixed, transitions of the abstraction. 
Such works obtain the tightest bounds on the transition probabilities under restrictive assumptions on the system's dynamics, like additive Gaussian disturbances and/or linear dynamics \cite{cauchi2019efficiency,adams2022formal}. 
In consequence, these abstraction processes are also asymptotically optimal. Additional works also obtain IMDP abstractions assuming additive and/or unimodal disturbances \cite{dutreix2022abstraction,jiang2022safe, van2022temporal}. On the other hand, for general nonlinear dynamics, most works employ $1$-step reachable sets to obtain sound abstractions \cite{gracia2024data,nazeri2025data,10347405, gracia2025beyond}. While these approaches yield correct bounds in the satisfaction probabilities, it is not clear if they are asymptotically optimal.

Recent works \cite{gracia2024data,gracia2025efficient,gracia2025beyond,Gracia:L4DC:2025, coppola2024enhancing} employ more general UMDP abstraction classes than IMDPs to account for additional, e.g., epistemic, uncertainties \cite{gracia2024data,gracia2025efficient}, or to obtain tighter performance bounds \cite{gracia2025beyond,Gracia:L4DC:2025, coppola2024enhancing}. 
Some works also propose heuristic abstraction-refinement algorithms \cite{esmaeil2013adaptive,dutreix2022abstraction,adams2022formal} in order to improve the obtained guarantees. However, a discussion on asymptotic optimality is again missing in these works. In fact, to the best of our knowledge, no UMDP abstraction-based approach ensures asymptotic optimality when considering temporal logic specifications, as the value functions become discontinuous, making convergence analysis non-trivial \cite{abate2010approximate}.
%

In this work we lay the foundation for the general theory of asymptotic optimality in UMDP abstractions under specifications given as \emph{linear temporal logic over finite traces} (\LTLf) \cite{Giacomo2013} formulas. We considering a fully known system model and general rectangular UMDPs \cite{iyengar2005robust}. Our contributions are threefold: (i) we derive a sufficient condition on the abstraction procedure that guarantees its asymptotic optimality; (ii) we show that, for general nonlinear dynamics and when the abstraction is obtained via reach set computations, state-of-the-art \emph{set-valued MDP} (SMDP) abstractions~\cite{gracia2025beyond}, are indeed asymptotically optimal, while others, relying on the popular IMDP abstractions~\cite{10347405}, are not; (iii) we propose a general abstraction-refinement algorithm that synthesizes $\varepsilon$-optimal controllers and satisfaction probability bounds in finite time for any asymptotically optimal UMDP abstraction.

\section{Basic Notation}

We let $\naturals_0 := \naturals\cup\{0\}$. Given a set $A\subseteq X$, we let $\partial A$ denote the boundary of $A$ and denote by $\mathds{1}_A:X\mapsto \{0,1\}$ the indicator function of $A$, which returns $\mathds{1}_A(x) = 1$ if $x\in A$ and $0$ otherwise. Given a subset $A$ of a metric space $(X,d)$, we let $\mathrm{diam}_d(A) := \sup_{x,x'\in A} d(x,x')$ be the diameter of $A$. We also denote by $B(0,\epsilon) \subset X$ the closed $\epsilon$-ball centered at $0$. Given subsets $A, B \subset X$, we denote by $A \oplus B$ their \emph{Minkowski sum}. 
%
Given a set $X$, we denote by $\Fcal(X)$ its \emph{Borel $\sigma$-algebra} and by $\Pcal(X)$ the set of \emph{Borel probability distributions} (measures) over $X$. Given a measurable space $(X, \Fcal(X))$, we denote by $\mathcal{L}$ the \emph{Lebesgue measure}. When the space is equipped with a metric $d$, we denote the \emph{Wasserstein distance} between measures $\gamma,\gamma'\in\Pcal(X)$ by $\W(\gamma, \gamma') := \inf_{\pi\in \Pi(\gamma,\gamma')} \int_{X^2}d(x,x')d\pi(x,x')$, where $\Pi(\gamma,\gamma') \in \Pcal(X^2)$ is the set of probability measures on $X^2$ with marginals $\gamma$ and $\gamma'$. 
Given a subset $A$ of $(X, d)$, the measurable space $(A,\Fcal(A))$ and the probability measure $P \in \Pcal(A)$, with a slight abuse of notation, we also interpret this measure as an element of $\Pcal(X)$, without formally defining its extension on $(X, \Fcal(X))$. In consequence, we sometimes write $\int_X f dP := \int_A f|_A dP$ for $f : X\rightarrow \reals$, with $f|_A$ being the restriction of $f$ to $A$. 

\section{Problem Formulation}

We consider a discrete-time stochastic system of the form
\begin{align}
\label{eq:sys}
    x_{t+1} = f(x_t,u_t,w_t),
\end{align}
with $x_t \in X\subseteq \reals^n$ being the state at time $t\in\naturals_0$, $u_t \in U$ denoting the action where $U$ is a finite set, and $w_t$ is a random disturbance. We assume that $\{w_t\}_{t\in\naturals_0}$ is an i.i.d. stochastic process taking values in a compact set $W \subset \reals^d$, where $w_t \sim P_W$. Given $x_0,\ldots, x_T \in X$, $u_0,\dots,u_{T-1}\in U$, and $T \in \naturals_0$, we denote a finite \emph{trajectory} of System~\eqref{eq:sys} by $\omega^X =x_0 \xrightarrow{u_0}
\ldots \xrightarrow{u_{T-1}} x_T$. We let $\Omega^X_T$ denote the set of all trajectories of length $T \in \naturals_0$. A time-varying \emph{controller} $\kappa := (\kappa_0,\dots, \kappa_{T-1})$ is a set of functions indexed by time where each $\kappa_t$ is a function $\kappa_t : \omega_t^X \rightarrow U$ that maps the current trajectory to the next action $u_t\in U$. We denote by $\mathcal{K}$ the set of all such controllers. 
%
The \emph{transition kernel} of System~\eqref{eq:sys} is a function $\Tau:\Fcal(X)\times X\times U\rightarrow [0,1]$ that, given current state $x_t$ and action $u_t$, and the Borel set $D$, returns the probability that the successor state belongs to $D$, i.e., $x_{t+1} \in D $ with probability $\Tau(D \mid x,u)  := \int_W \mathds{1}_D(f(x,u,w))dP_W(w)$.

We are interested in System~\eqref{eq:sys} remaining within the safe set $\Xsafe$ while satisfying certain temporal properties regarding a finite set of regions of interest $R := \{r_1,\dots, r_{|R|}\}$, where $r_1,\dots,r_{|R|-1} \subset X$ and $r_{|R|} = X\setminus \Xsafe =: X_\unsafe$. Without loss of generality, we assume that $R$ is a partition of $X$. We also consider a set of \emph{atomic propositions} $\rm{AP} := \{\Prop_1,\dots,\Prop_{|\rm{AP}|}\}$, and let $L:X\rightarrow 2^{\rm{AP}}$ be the labeling function, mapping a state $x$ to the set of atomic propositions that are ``true'' at region $r \ni x$. For example, we let $\Prop_1 := \Prop_\safe \in L(x)$ for all $x \in \Xsafe$. Consequently, each trajectory $\omega^X_T = x_0 \xrightarrow{u_0} \ldots \xrightarrow{u_{T-1}} x_T$
results in the (observation) \emph{trace} 
$l = l_0\dots l_T, $
where $l_t := L(x_t)$.
Formally, we consider specifications given as \LTLf formulas \cite{Giacomo2013} over the set $\AP$, which are able to unambiguously characterize temporal behaviors of the system. 

An \LTLf formula is recursively defined as
\begin{align*}
    \varphi := \top \mid \Prop \mid \neg \varphi \mid \varphi_1 \land \varphi_2 \mid \bigcirc\varphi_1 \mid \varphi_1 \mathbin{\mathrm{U}} \varphi_2,
\end{align*}
where $\Prop \in \AP$, $\neg$ (negation) and $\land$ (and) are boolean operators, $\bigcirc$ (next) and $\mathrm{U}$ (until) are temporal operators, and $\varphi_1,\varphi_2$ are \LTLf formulas. From $\mathrm{U}$, we define 
$\Diamond$ (eventually) and
$\square$ (globally) as $\Diamond \varphi \equiv \top \mathrm{U} \varphi$ and $\square \varphi \equiv \neg \Diamond \neg \varphi$. \LTLf formulae are semantically interpreted over finite traces \cite{de2013linear}: we say that trajectory $\omega^X$ satisfies formula $\varphi$, i.e., $\omega^X \models \varphi$, if some prefix of its trace $l$ satisfies $\varphi$. Specifically, we 
consider formulas of the form $\square \Prop_{\rm{safe}}\land \varphi$, where $\varphi$ is an \LTLf formula. 
Additionally, in this work, we consider the problem of satisfying an \LTLf formula in finite time. 
%
\begin{definition}[System]
    We refer by \emph{the system} to the tuple $\mathcal{S} := (X,U,W,f,P_W,R,\Xsafe,\AP,L)$. Given a controller $\kappa$, we denote by $\mathrm{Pr}_{x_0}^\kappa$ the probability measure~\cite{bertsekas1996stochastic} over the finite trajectories of $\mathcal{S}$ under $\kappa$ with initial state $x_0\in X$.
\end{definition}
Our results require the following three standard assumptions on $\Scal$:
\begin{assumption}
\label{ass:measure_0}
    The set $X_\safe$ is compact. Additionally, $\Lcal(\partial r) = 0$ for all $r\in R$.
\end{assumption}

\begin{assumption}
    \label{ass:density}
    For all $x\in X$ and $u\in U$, the conditional transition probability distribution $\Tau(\cdot\mid x, u)$ is continuous with respect to the Lebesgue measure, and has a conditional density $\tau(x' \mid x,u)$ which is bounded: $\overline \tau := \sup_{x,x'\in X, u\in U}\tau(x'\mid x, u) < \infty$.
\end{assumption}
\begin{assumption}
\label{ass:lipschitz}
    The vector field $f$ is globally Lipschitz continuous in its first and third arguments, i.e., there exists $L_f \ge 0$ such that $\|f(x,u,w) - f(x', u, w')\| \le L_f(\|x - x'\| + \|w - w'\|)$ for all $x, x' \in X$, $w, w'\in W$, $u\in U$.
\end{assumption}
%

Given a controller $\kappa$, initial state $x_0 \in X$, an \LTLf formula $\varphi$, and a time horizon $T \in \naturals$, we denote by $\mathrm{Pr}_{x_0}^\kappa[\varphi, T]$ the probability that the trajectory of system $\mathcal S$ from $x_0$ under $\kappa$ satisfies $\varphi$ within $T$ time-steps. Similarly, we denote by $\mathrm{Pr}_{x_0}^*[\varphi, T] := \sup_{\kappa \in \mathcal{K}} \mathrm{Pr}_{x_0}^\kappa[\varphi, T]$ the optimal satisfaction probability. 

In this work we consider the problem of synthesizing a controller that ensures a near-optimal satisfaction probability, and obtaining tight bounds in the satisfaction probability.
\begin{problem}
\label{prob:prob}
    Given system $\mathcal S$, an \LTLf specification $\varphi = \square \Prop_\safe \land \varphi'$, a time horizon $T \in\naturals_0$ and an optimality margin $\varepsilon > 0$, synthesize a controller $\kappa^\varepsilon$ and obtain performance bounds $\underline p^\varepsilon$ and $\overline p^\varepsilon$ that ensure, for all $x_0 \in X$:
    \begin{itemize}
        \item[(i)] $|\mathrm{Pr}_{x_0}^{\kappa^\varepsilon}[\varphi, T] - \mathrm{Pr}_{x_0}^*[\varphi, T]|\le \varepsilon$,
        \item[(ii)] $\mathrm{Pr}_{x_0}^{\kappa^\varepsilon}[\varphi, T] \in [\underline p^\varepsilon(x_0), \overline p^\varepsilon(x_0)]$,
        \item[(iii)] $\overline p^\varepsilon(x_0) - \underline p^\varepsilon(x_0) \le \varepsilon$.
    \end{itemize}
\end{problem}

\paragraph*{Overview of the Approach}

We approach Problem~\ref{prob:prob} through an abstraction-refinement framework and mainly focus on analyzing the conditions under which termination is guaranteed.  
In this framework, at each iteration, we partition the state space and construct a UMDP abstraction of $\Scal$ that captures all relevant behaviors. We then synthesize a controller via robust dynamic programming, which also provides bounds on the satisfaction probability; thus satisfying condition (ii), as in existing UMDP-based methods \cite{10347405,gracia2025beyond}. If the bounds meet the termination criterion, conditions (i)-(iii) are satisfied; otherwise, we refine the partition and repeat.

However, refinement does not always tighten the bounds. To address this, we introduce a sufficient condition, \emph{vanishing ambiguity}, under which the algorithm is \emph{complete}, i.e., it terminates in finite time for any $\varphi'$ and $\varepsilon$. This condition further ensures \emph{asymptotic optimality}: as the partition becomes infinitely fine, all conditions in Problem~\ref{prob:prob} are satisfied with $\varepsilon = 0$. Finally, we analyze completeness for different abstractions, showing that SMDPs guarantee asymptotic optimality, whereas IMDPs do not. All the proofs are provided 
in
Appendix~\ref{app:proofs}.

\section{Abstraction-Based Controller Design}

In this section, we review UMDP abstractions in a general way. We characterize the soundness and define the ambiguity diameter of a UMDP abstraction. Additionally, we state several standard definitions in abstraction-based controller design, such as the \emph{deterministic finite automaton} (DFA) of an \LTLf formula, the product UMDP, the product between $\Scal$ and the DFA and the corresponding Bellman operators.

\subsection{UMDP Abstraction}

We first define the set $\Xabs$, which contains $\Xsafe$ and also its possible successor states $x_{t+1}$ with probability $1$: $\Xabs := \Xsafe \cup \{f(x,u,w) \in X : x\in X_\safe, u\in U, w\in W\}$. Next, we partition this set into a finite set of regions that respect the regions of interest:
\begin{definition}
    We say that the partition $B$ of $\Xabs$ is $R$-respecting if, for each $b\in B$ and $r\in R$, either $b \subseteq r$ or $b \subseteq X \setminus r$. Furthermore, we let $\eta := (1/2)\max_{b\in B} \mathrm{diam}(b)$ be the \emph{size of the partition}, and we denote the partition by $B \equiv B^\eta$.
\end{definition}
Given an $R$-respecting partition $B^\eta$ of $\Xabs$, we pick, for each region $b \in B$, a representative point $s \in b$ such that its distance to the boundary of $b$ is strictly positive. We denote the set of representative points of $B^\eta$ by $S^\eta$. We let $S^\eta$ be the state space of the UMDP abstraction. This is slightly different from standard abstraction-based approaches, where the state space of the abstraction is simply $B^\eta$. The reason for this choice is that our approach leverages results of convergence of probability measures defined on a common space, i.e., $X$, which contains any $S^\eta$ defined as above, but not $B^\eta$. Figure~\ref{fig:sketch12} illustrates the partition process and the set $S^\eta$.
\begin{figure}[t]
    \centering
    \begin{subfigure}{0.49\linewidth}
        \centering
        \includegraphics[width=\linewidth]{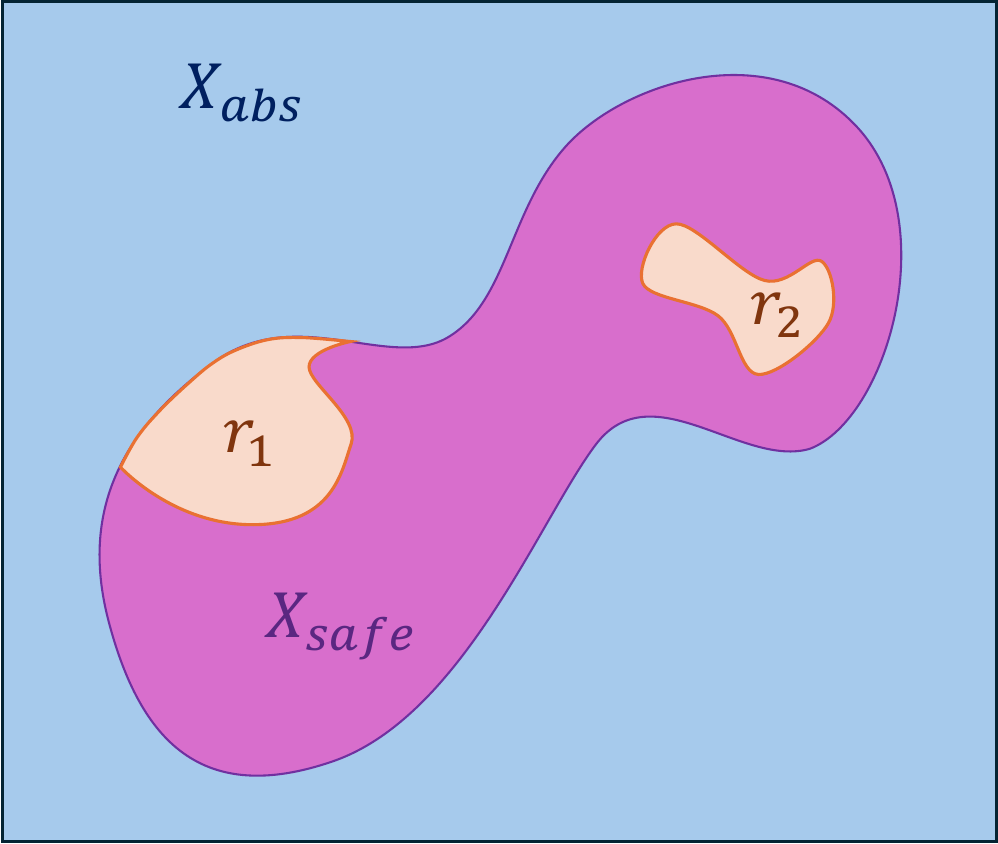}
        \caption{}
    \end{subfigure}
    \hfill
    \begin{subfigure}{0.49\linewidth}
        \centering
        \includegraphics[width=\linewidth]{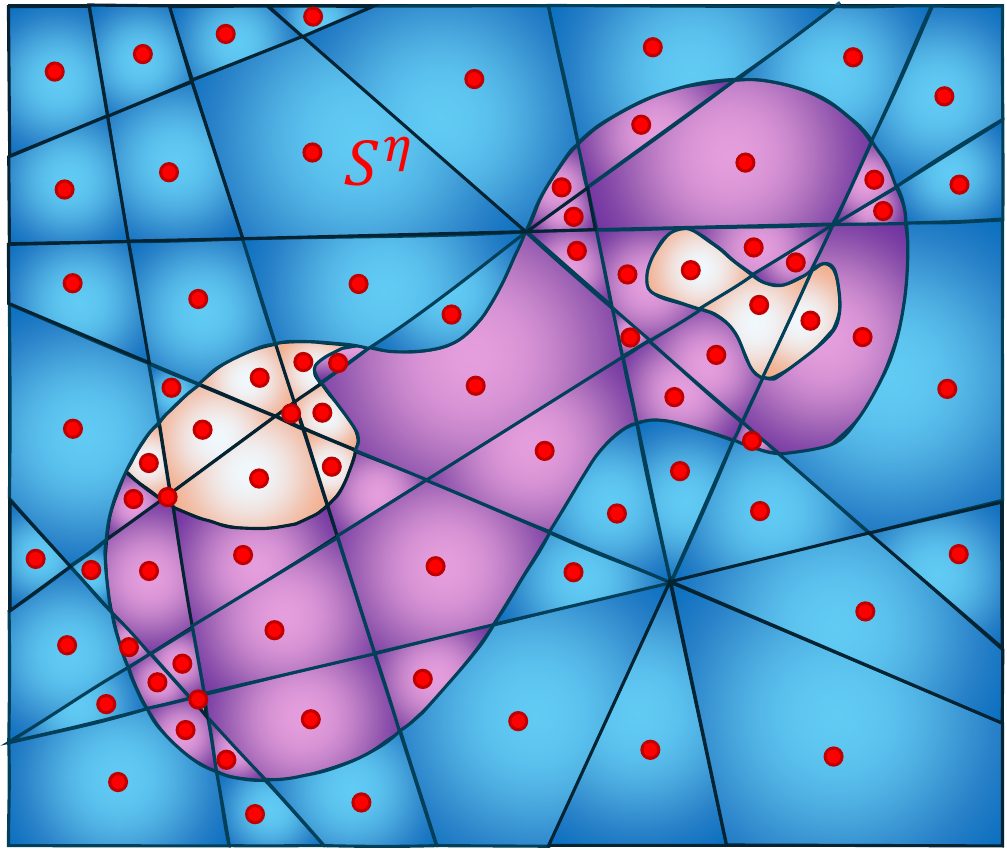}
        \caption{}
    \end{subfigure}
    \caption{Partition of $\Xabs$: (a) shows the continuous state-space with sets $\Xabs,\Xsafe$ and two regions of interest $r_1$ and $r_2$; (b) shows an $R$-respecting partition $B^\eta$, and the set of representative points $S^\eta$ in red.}
\label{fig:sketch12}
\end{figure}

The partition $B^\eta$ defines a \emph{refinement function} $\rho^\eta:\Xabs \to S^\eta$ that maps each continuous state $x\in\Xabs$ to the representative point $s\in S^\eta$ of the region $b\in B^\eta$ that contains $x$. For the sake of conciseness we also write $\rho^{-\eta}(x) := b$ for each $x\in \Xabs$, with $b\in B^\eta$ being the region $x$ belongs to. We also partition the disturbance set $W$ into the finite set of regions $C^\eta$, with size (diameter) at most $2\eta$. 
%

We now formally define a UMDP abstraction (without specifying how its transitions are obtained).
\begin{definition}[UMDP]
    Given an $R$-respecting partition $B^\eta$ of $\Xabs$, its set of representative points $S^\eta$, and the partition $C^\eta$ of $W$, a UMDP abstraction is a tuple $\Ucal^\eta := (S^\eta, A, \Gamma^\eta, \AP, L)$ where $S^\eta$ is the finite state space (with $S_\safe^\eta := S^\eta \cap \Xsafe$ denoting the safe states), $A := U$ is the finite set of actions, $\Gamma^\eta := \{\Gamma_{s,a}^\eta : s\in S^\eta, a\in A\}$ with $\Gamma_{s,a}^\eta \subseteq \Pcal(S^\eta )$\footnote{We assume closedness of the ambiguity sets $\Gamma_{s,a}^\eta$} being the set of transition probability distributions, or ambiguity set, of state-action pair $(s,a)$, and $L: S^\eta \to 2^{\AP}$ is the labeling function\footnote{More formally, it is the restriction of the labeling function $L$ of $\Scal$ to $S^\eta$. With a slight abuse of language we also denote it as $L$}.
\end{definition}
We define the \emph{ambiguity diameter} of a UMDP abstraction as a measure of the amount of ambiguity in the UMDP below, and as we show in Section~\ref{sec:algorithm}, this quantity plays a key role when it comes to completeness of our iterative algorithm.
\begin{definition}[Ambiguity Diameter]
\label{def:ambiguity_diameter}
    Let $\Ucal^\eta := (S^\eta, A, \Gamma^\eta, \AP, L)$ be a UMDP abstraction of system $\Scal$. 
    We denote the \emph{ambiguity diameter} of $\Ucal^\eta$ by $\phi(\Ucal^\eta) := \max_{s \in S^\eta_\safe, a\in A} \mathrm{diam}_{\W}( \Gamma_{s,a}^\eta)$, i.e., the maximum (Wasserstein) diameter of its ambiguity sets.
\end{definition}


Next, we discuss the semantics of UMDPs and provide the standard definitions of strategy and adversary. Given a UMDP $\Ucal^\eta$, we refer by \emph{path} of $\Ucal^\eta$ of length $T\in \naturals_0$ to a sequence $\omega_T^\eta = s_0 \xrightarrow{a_0}
\ldots \xrightarrow{a_{T-1}} s_T$ such that $s_t \in S^\eta$ for all $0 \leq t \leq T$, and $a_t \in A$ for all $0 \leq t \leq T-1$. We denote by $\Omega_T^\eta$ the sets of all such paths and let $\omega_T^\eta(t)$ denote the state of a path $\omega_T^\eta$ at time $t\in\{0, \dots, T\}$. Analogous to a controller of $\Scal$, a \emph{strategy} $(\sigma_t)_{t \in \{0,\dots, T-1\}}$ of $\Ucal^\eta$ is a sequence of functions where each $\sigma_t:\Omega_t^\eta\to A$ chooses the next action given the path followed up to time $t$. We denote by $\Sigma^\eta$ the set of all strategies of $\Ucal^\eta$. When each $\sigma_t$ only depends on $\omega_t^\eta(t)$, we say that $\sigma$ is \emph{Markovian} or \emph{memoryless}. Given a finite path of length $t$ $\omega_t^\eta \in \Omega_t^\eta$ with last state $s_t := \omega_t^\eta(t)$ and a strategy $\strategy \in \Sigma^\eta$, $\Ucal^\eta$ transitions from $s_{t}$ under $a_t := \sigma_t(\omega_t^\eta)$ to $s_{t+1}$ according to some probability distribution in $\Gamma_{s_t,a_t}^\eta$, which is chosen by the adversary \cite{givan2000bounded,wolff2012robust}. Formally speaking, an \emph{adversary} is a function $\xi: S^\eta \times A \times \naturals_0 \rightarrow \mathcal{P}(S^\eta)$ that chooses, given current state $s_t$, action $a_t$ and time step $t$, a probability distribution $\gamma\in\Gamma_{s_t,a_t}^\eta$, according to which $s_{t+1}$ is distributed. We denote the set of all adversaries of $\Ucal^\eta$ by $\Xi^\eta$. Given an initial state $s_0\in S^\eta$, a strategy $\sigma\in\Sigma^\eta$, and an adversary $\xi\in\Xi^\eta$, $\Ucal^\eta$ collapses to a Markov chain, with a unique probability measure over its finite paths. We denote this measure by $\text{Pr}_{\eta, s_0}^{\sigma,\xi}$.

Now we formalize the notion that the UMDP abstraction captures all behaviors of interest of $\Scal$. Such abstractions ensure that the obtained controllers and satisfaction guarantees correctly translate to the original system $\Scal$.
\begin{definition}[Soundness]
\label{dfn:soundness}
    Consider a given System~$\Scal$ and its UMDP abstraction $\Ucal^\eta := (S^\eta, A, \Gamma^\eta, \AP, L)$. We say that $\Ucal^\eta$ is a \emph{sound abstraction} of $\Scal$ if (i) for each $x\in \Xsafe$ with $\rho^\eta(x) = s$ and $a\in A$, the distribution $\Tau^\eta(\cdot \mid x,a) \in \Pcal(S^\eta)$, where $\Tau^\eta(\rho^{-\eta}(s') \mid x,a) := T(\rho^{-\eta}(s') \mid x,a)$ for all $s'\in S^\eta$, satisfies that $\Tau^\eta(\cdot \mid x,a) \in \Gamma_{s,a}^\eta$, and (ii) for each $x\notin \Xsafe$ with $\rho^\eta(x) = s$ and $a\in A$, the ambiguity set is $\Gamma_{s,a}^\eta = \{\delta_s\}$.
\end{definition}

Our approach assumes a given procedure of obtaining the abstraction. In consequence, the sufficient condition for completeness of the algorithm is a condition on the abstraction procedure. In order to keep our theoretical analysis general, we now formalize what we mean by abstraction procedure:
\begin{definition}[Abstraction Procedure]
    Given a UMDP abstraction class $\mathbb{U}$, an \LTLf formula $\varphi$ and $\eta > 0$, we define an \emph{abstraction procedure} $\mathrm{Abs}_\mathbb{U}$ as the mapping $(\mathcal{S}, \eta) \mapsto \mathrm{Abs}_\mathbb{U}(\mathcal{S}, \eta)$, which soundly abstracts System~$\mathcal{S}$ into a UMDP $\mathcal{U}^\eta := (S^\eta,A,\Gamma^\eta, \AP, L) \in\mathbb{U}$, where the size parameter of the partitions $S^\eta$ and $C^\eta$ is at most $\eta$.
\end{definition}

\subsection{Robust Dynamic Programming}

Here, we describe the standard methodology for control synthesis given a sound UMDP abstraction $\Ucal$. Leveraging the fact that for each \LTLf formula, a DFA can be constructed that precisely accepts the language of the formula \cite{Giacomo2013}, synthesis consists of four main steps: (1) obtaining the \emph{deterministic finite automaton} (DFA) $\Acal^\varphi$ that represents $\varphi$, (2) obtaining the \emph{product UMDP}, i.e., a different UMDP $\Ucal^\varphi := \Ucal\otimes \Acal^\varphi$ defined as the product between $\Ucal$ and $\Acal^\varphi$, (3) performing \emph{robust dynamic programming} (RDP) on $\Ucal^\varphi$ with a reachability objective, which yields a strategy and bounds in the satisfaction probability for the product, and (4) correctly translating these results to $\Scal$.

We define the DFA of $\varphi$ below.
\begin{definition}
    Let $\varphi$ be an \LTLf formula defined over $\AP$. The deterministic finite automaton (DFA) constructed from $\varphi$ is a tuple $\Acal^\varphi = (Z, z_\mathrm{init}, 2^\AP, \delta, Z_\text{acc})$, where $Z$ is the finite set of states, $z_\mathrm{init}\in Z$ is the initial state, $2^\AP$ is the alphabet (symbols), $\delta:Z\times 2^\AP \rightarrow Z$ is the deterministic transition function, and $Z_\text{acc} \subseteq Z$ is the set of accepting states.
\end{definition}
Given a trace $l = l_0l_1 \dots l_T \in (2^{AP})^T$, a run $z = z_0z_1 \dots z_{T+1}$ is induced on $\A^\varphi$, where $z_0 = z_\mathrm{init}$ and $z_{t+1} = \delta(z_t,l_t)$ for all $t\in \{0,\dots,T\}$. By construction of $\A^\varphi$, trace $l$ satisfies $\varphi$ iff it is \emph{accepting}, i.e., iff $z_{T+1} \in Z_\mathrm{acc}$ \cite{de2013linear}. Next, we define the product between System~$\Scal$ and $\Acal^\varphi$, whose transitions carry information about both the dynamics of $\Scal$ and satisfaction of $\varphi$.
\begin{definition}[Product MDP]
Let $\Acal^\varphi = (Z, z_\mathrm{init}, 2^\AP, \delta, Z_\text{acc})$ be the DFA corresponding to $\varphi$. The product $\Scal ^\varphi := \Scal \otimes \Acal^\varphi$ is an MDP $\Scal^\varphi := (X^\varphi, A, \Tau^\varphi, \AP, L^\varphi)$ 
over the hybrid space $X^\varphi = X\times Z$, with actions $A := U$, labeling function $L^\varphi(x,z) := L(x)$ for all $(x,z) \in X^\varphi$ and transition kernel $\Tau^\varphi$ defined as
\begin{align*}
    \Tau^\varphi(D \times \{z'\} \mid (x, z)) = 
        \Tau(\{x'\in D : z' = \delta(z, L(x'))\}),
\end{align*}
for all $x \in X$, $D \in \Fcal(X)$, $z, z' \in Z$.
\end{definition}
We note that, since we consider specifications that require the system to remain in $\Xsafe$, $\A^\varphi$ always has an absorbing state $z_\mathrm{unsafe} \in Z$ that is reached when the label of the successor state $x'$ does not contain $\Prop_\mathrm{safe}$. Once a trajectory of $\Scal^\varphi$ reaches this state, $\varphi$ cannot be satisfied. We next define the product UMDP.
\begin{definition}[Product UMDP]
    Given a UMDP $\Ucal$ and the DFA $\Acal^\varphi$, we define the \emph{product UMDP} $\Ucal^\varphi := \Ucal \otimes \Acal^\varphi$ 
    as the UMDP $\Ucal^\varphi := (S^\varphi, A, \Gamma^\varphi, \AP,L^\varphi)$, where $S^\varphi := S\times Z$, $L^\varphi(x,z) = L(x)$ for all $X \in X$ and $z\in Z$, and $\Gamma^\varphi := \{\Gamma_{(s,z),a}^\varphi : (s,z)\in S^\varphi, a\in A\} \subset \Pcal(S^\varphi)$, with%
    \begin{multline}
        \Gamma_{(s,z),a}^\varphi := \bigg\{\gamma^\varphi \in \Pcal(S^\varphi) :\\
        \gamma^\varphi(s',\delta(z,L(s'))) = \gamma(s')\:,\:\gamma\in \Gamma_{s,a}, s' \in S \bigg\},
    \end{multline}
    for all $X \in X$ and $z\in Z$.
\end{definition}
The products $\Scal^\varphi$ and $\Ucal^\varphi$ enable characterizing optimal controllers and satisfaction probabilities through RDP, which boils down to probabilistic reachability computations on the products. We next define the \emph{Bellman operator} of the product $\Scal^\varphi$, which appears in the RDP iterations:
\begin{definition}[Bellman Operator]
    Consider the product MDP $\Scal^\varphi$ of $\Scal$ with the DFA $\Acal^\varphi$. Denote by $\Vcal^\varphi$ the set of value functions on $X^\varphi$, i.e., $\Vcal^\varphi := \{v : X^\varphi \rightarrow [0,1]\}$. Given a value function $v$ the Bellman operator $\Bcal:\Vcal^\varphi \rightarrow \Vcal^\varphi$ of $\Scal^\varphi$ is defined as $\Bcal[v](x,z) = \max_{a\in A} \Bcal_a[v](x,z)$, with
\begin{align*}
    &\Bcal_a[v](x,z) := \mathds{1}_{Z_\text{acc}}(z) + \mathds{1}_{Z\setminus Z_\text{acc}}(z)\int_{X^\varphi} v d\Tau^\varphi(\cdot \mid (x, z), a)\\
    &= \mathds{1}_{Z_\text{acc}}(z) + \mathds{1}_{Z\setminus Z_\text{acc}}(z)\int_X v(x', \delta(z, L(x'))) \Tau(dx' \mid x, a),
\end{align*}
for all $x\in X$ and $z\in Z$.
\end{definition}

In analogy with the previous definition, we now define the Bellman operators of the product UMDP $\Ucal^{\eta, \varphi}$.
\begin{definition}[Ambiguous Bellman Operators]
    Consider a UMDP abstraction $\Ucal^\eta$ of $\Scal$ and its product $\Ucal^\varphi$ with $\Acal^\varphi$. For each $v \in \Vcal^\varphi$ and $z\in Z$, denote by $v^\eta \in \Vcal^\varphi$ the tightest piecewise constant minorizer of $v$ on the partition $B^\eta$, i.e., the value function satisfying $v^\eta(x,z) := \inf\{v(y,z) : y \in \Xabs, \rho^\eta(x) = \rho^\eta(y)\}$ for $x\in \Xabs$ and $v^\eta(x,z) = 0$ elsewhere. We denote by $\underline \Bcal^{\eta}, \overline \Bcal^{\eta}:\Vcal^\varphi\rightarrow\Vcal^\varphi$ (respectively) the \emph{pessimistic} and \emph{optimistic} Bellman operators, defined as $\underline \Bcal^{\eta}[\cdot](x,z) := \max_{a\in A} \underline \Bcal_a^{\eta}[\cdot](x,z)$ and $\overline \Bcal^{\eta}[\cdot](x,z) := \max_{a\in A} \overline \Bcal_a^{\eta}[\cdot](x,z)$ for all $x\in \Xsafe$ and $z\in Z$, with
    \begin{align*}
        &\underline \Bcal_a^{\eta}[v](x,z) := \mathds{1}_{Z_\text{acc}}(z) + \mathds{1}_{Z\setminus Z_\text{acc}}(z) \min_{\gamma \in\Gamma_{(s,z),a}^{\eta,\varphi}} \int_{X^\varphi} v^\eta d\gamma\\
        &= \mathds{1}_{Z_\text{acc}}(z) + \mathds{1}_{Z\setminus Z_\text{acc}}(z) \min_{\gamma\in\Gamma_{s,a}^{\eta}} \int_{X} v^\eta(\cdot,\delta(z,L(\cdot))) d\gamma,
    \end{align*}
    and
    \begin{align*}
        &\overline \Bcal_a^{\eta}[v](x,z) := \mathds{1}_{Z_\text{acc}}(z) + \mathds{1}_{Z\setminus Z_\text{acc}}(z) \max_{\gamma \in\Gamma_{(s,z),a}^{\eta,\varphi}} \int_{X^\varphi} v^\eta d\gamma\\
        &= \mathds{1}_{Z_\text{acc}}(z) + \mathds{1}_{Z\setminus Z_\text{acc}}(z) \max_{\gamma\in\Gamma_{s,a}^{\eta}}\int_{X} v^\eta(\cdot,\delta(z,L(\cdot))) d\gamma
    \end{align*}
    for each $x \in \Xsafe$ with $s = \rho(x)$, $z\in Z$ and $a\in A$. We also define $\underline \Bcal^\eta[v](x,z) = \overline \Bcal^\eta[v](x,z) := 0$ for all $x\notin \Xsafe$ and $z\in Z$.
    %

    %
\end{definition}

Given the products $\Scal^\varphi$ and $\Ucal^{\eta,\varphi}$, we next describe how to run RDP  on them. While RDP on $\Ucal^{\eta,\varphi}$ is the actual algorithm that is implemented, yielding a controller and bounds in the probability of satisfying $\varphi$, RDP on $\Scal^\varphi$ is a mere mathematical tool that enables reasoning about the correctness of the obtained controller and probability bounds.
\begin{proposition}[RDP]
\label{prop:RDP}
    Consider system $\mathcal{S}$ and its UMDP abstraction $\Ucal^\eta := (S^\eta, A, \Gamma^\eta, \AP, L)$, as well as the products $\Scal^\varphi$ and $\Ucal^{\eta,\varphi}$ with the DFA $\Acal^\varphi$. Define the sequence of \emph{pessimistic} value functions $(\underline v_k)_{k \le T} \subset \Vcal^\varphi$ satisfying $\underline v_0 = \mathds{1}_{X_\text{acc}^\varphi}$ and $\underline v_{k+1} = \underline \Bcal^{\eta}[\underline v_k]$, for all $k \le T$. Define $\underline p^\eta(\cdot) = \underline v_T(\cdot, \delta(z_\mathrm{init}, L(\cdot)))$. Then \cite{gracia2025efficient},
    \begin{itemize}
        \item $\underline p^\eta(x_0) = \sup_{\sigma\in\Sigma^{\eta, \varphi}} \inf_{\xi\in \Xi^{\eta, \varphi}} \mathrm{Pr}_{\eta,s_0}^{\sigma,\xi}[\varphi, T]$, with $s_0 = \rho(x_0)$, 
        for all $x_0 \in \Xabs$.
    \end{itemize}
    Similarly, let $(\overline v_k)_{k \le T} \subset \Vcal^\varphi$ denote the \emph{optimistic} sequence satisfying $\overline v_0 = \mathds{1}_{X_\text{acc}^\varphi}$ and $\overline v_{k+1} = \overline \Bcal^{\eta}[\overline v_k]$, for all $k \le T$. Define $\overline p^\eta(\cdot) = \overline v_T(\cdot, \delta(z_\mathrm{init}, L(\cdot)))$. Then \cite{gracia2025efficient},
    \begin{itemize}
        \item $\overline p^\eta(x_0) = \sup_{\sigma\in\Sigma^{\eta, \varphi}} \sup_{\xi\in \Xi^{\eta, \varphi}} \mathrm{Pr}_{\eta,s_0}^{\sigma,\xi}[\varphi, T]$, with $s_0 = \rho(x_0)$,
        for all $x_0 \in \Xabs$.
    \end{itemize}
    Define the sequence of value functions $(v_k)_{k \le T} \subset \Vcal^\varphi$ satisfying $v_0 = \mathds{1}_{X_\text{acc}^\varphi}$ and $v_{k+1} = \Bcal[v_k]$ for all $k \le T$. Let $p(\cdot) = v_T(\cdot, \delta(z_\mathrm{init}, L(\cdot)))$. Then \cite{bertsekas1996stochastic},
    \begin{itemize}
        \item $p(x_0) = \mathrm{Pr}_{x_0}^*[\varphi, T]$ for all $x_0 \in X$.
    \end{itemize}
    Define a Markovian strategy $\sigma^{\eta, \varphi}\in \Sigma^{\eta, \varphi}$ such that $\sigma_{t}^{\eta, \varphi}(s,z) \in \arg\max_{a\in A} \underline \Bcal_a^{\eta}[\underline v_{T-t-1}^\eta](s,z)$ for each $t\in \{0,\dots, T-1\}$, $s\in S^\eta$ and $z\in Z$. Map $\sigma^{\eta, \varphi}$ to the controller $\kappa^\eta$ of $\Scal$ by defining $\kappa^\eta(\omega_t^X) := \sigma^{\eta,\varphi}(\omega_t^{\eta,\varphi}(t))$, where $\omega_t^{\eta,\varphi} \in \Omega_t^{\eta,\varphi}$ is the path of $\Ucal^{\eta,\varphi}$ generated by path $\omega_t^X \in \Omega_t^X$ of $\Scal$. Then \cite{skovbekk2021formal}, for all $x_0 \in \Xabs$
    \begin{itemize}
        \item $\mathrm{Pr}_{x_0}^{\kappa^\eta}[\varphi, T], \mathrm{Pr}_{x_0}^{*}[\varphi, T] \in [\underline p^\eta(x_0), \overline p^\eta(x_0)]$.
    \end{itemize}
\end{proposition}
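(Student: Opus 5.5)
The first three bullets are standard facts that the statement attributes to \cite{gracia2025efficient} and \cite{bertsekas1996stochastic}, so I will only recall why they hold. The real content to argue is the sandwich property in the last bullet. Throughout I work on the product $\Scal^\varphi$ and use backward induction on $k$.

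For the first two bullets, the product UMDP $\Ucal^{\eta,\varphi}$ is finite and its ambiguity sets $\Gamma^{\eta,\varphi}_{(s,z),a}$ are closed (hence compact) and rectangular, i.e., chosen independently per state-action pair. The finite-horizon robust reachability game therefore admits a dynamic programming characterization: Markovian strategies and adversaries suffice, and the value is obtained by iterating $\min_\gamma$ (respectively $\max_\gamma$) over $\Gamma^{\eta}_{s,a}$ inside $\max_a$. Since $v^\eta$ is constant on each cell, $\underline\Bcal^\eta$ and $\overline\Bcal^\eta$ coincide with the robust Bellman operators of $\Ucal^{\eta,\varphi}$ lifted to $X^\varphi$. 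The third bullet is the standard DP result for finite-horizon reachability in the MDP $\Scal^\varphi$: accepting DFA states are made absorbing with value $1$, and measurable selection is trivial because $A$ is finite.

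For the fourth bullet, I will prove by induction on $k$ that for all $x\in\Xabs$ and $z\in Z$,
\begin{align*}
\underline v_k(x,z)\le v^{\kappa}_k(x,z)\le v_k(x,z)\le \overline v_k(x,z),
\end{align*}
where $v^\kappa_k$ is the value of $\kappa^\eta$ over the remaining $k$ steps.

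\begin{itemize}
\item \emph{Base case.} All four functions equal $\mathds{1}_{X^\varphi_{\rm acc}}$.
\item \emph{Trivial regions.} For $x\notin\Xsafe$ the DFA is in, or moves to, $z_\mathrm{unsafe}$, so every value is $0$ unless $z$ is already accepting. For $z\in Z_{\rm acc}$ all values equal $1$.
\item \emph{Inductive step, upper bound.} Fix $x\in\Xsafe$ with $s=\rho^\eta(x)$ and $z\notin Z_{\rm acc}$. For every action $a$,
\begin{align*}
\int v_k(x',\delta(z,L(x')))\,\Tau(dx'\mid x,a) \le \int \overline v_k(x',\delta(z,L(x')))\,\Tau(dx'\mid x,a).
\end{align*}
Because $B^\eta$ is $R$-respecting, $L$ is constant on each cell, and hence so is $\delta(z,L(\cdot))$. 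This lets me replace $\overline v_k$ cellwise by its cell infimum (the $v^\eta$ construction). I must be careful here: taking the infimum lowers the integrand, so the inequality only survives because $\overline v_k=\overline\Bcal^\eta[\overline v_{k-1}]$ is itself constant on cells ($\overline\Bcal^\eta$ depends on $x$ only through $\rho(x)$). The integral then equals $\int \overline v_k^\eta\,d\Tau^\eta(\cdot\mid x,a)$. Soundness, Definition~\ref{dfn:soundness}(i), places $\Tau^\eta(\cdot\mid x,a)$ in $\Gamma^\eta_{s,a}$, so the integral is at most the $\max_\gamma$. Taking $\max_a$ gives $v_{k+1}\le\overline v_{k+1}$.
\item \emph{Inductive step, lower bound.} Let $a^*=\kappa^\eta$'s action at $(s,z)$. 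The same cell-constancy argument gives
\begin{align*}
\underline v_{k+1}(x,z)=\min_{\gamma\in\Gamma^\eta_{s,a^*}}\int\underline v_k^\eta\,d\gamma \le \int \underline v_k\,d\Tau(\cdot\mid x,a^*) \le \int v^\kappa_k\,d\Tau(\cdot\mid x,a^*) = v^\kappa_{k+1}(x,z).
\end{align*}
\item \emph{Middle inequality.} $v^\kappa_k\le v_k$ holds because $v_k$ is the optimum over all controllers.
\end{itemize}

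The main obstacle is that $\kappa^\eta$ is only Markov in the abstract product state, while $v^\kappa_k$ must be defined for a time-varying controller on $\Scal^\varphi$. I handle this by noting that $\kappa^\eta$ depends on $(\rho(x_t),z_t,t)$, so its value function satisfies the policy-evaluation recursion $v^\kappa_{k+1}(x,z)=\Bcal_{\sigma_{T-k-1}(\rho(x),z)}[v^\kappa_k](x,z)$. The induction above then closes. Finally, I evaluate at $k=T$ and $z=\delta(z_\mathrm{init},L(x_0))$ to conclude.
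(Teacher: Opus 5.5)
Your proposal is correct. The paper gives no proof of its own here: all four bullets are simply attributed to the cited works. For the first three bullets you do the same, deferring to robust DP on the finite rectangular product UMDP and to standard finite-horizon DP on $\Scal^\varphi$. For the last bullet you go further and give a self-contained backward induction $\underline v_k\le v^{\kappa}_k\le v_k\le\overline v_k$. It uses only soundness (Definition~\ref{dfn:soundness}(i)) and the $R$-respecting property of $B^\eta$.

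Relative to citing, this makes explicit two points the paper glosses over:
\begin{itemize}
\item $\overline\Bcal^\eta$ is defined with the minorizer $v^\eta$ rather than a majorant. So $v_k\le\overline v_k$ propagates only because the iterates $\overline v_k$ are constant on cells, as you correctly observe.
\item $\kappa^\eta$ is Markov in $(\rho^\eta(x_t),z_t,t)$, so ordinary policy evaluation on $\Scal^\varphi$ applies.
\end{itemize}

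One small correction. The paper sets $\underline\Bcal^\eta[v](x,z)=\overline\Bcal^\eta[v](x,z)=0$ for $x\notin\Xsafe$ even when $z\in Z_\text{acc}$, whereas $\Bcal[v](x,z)=1$ there. Hence your upper inequality $v_k\le\overline v_k$ cannot hold on all of $\Xabs\times Z$, and your ``trivial regions'' claim that all values equal $1$ for accepting $z$ is false off $\Xsafe$.

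The fix is to run the induction on $\Xsafe\times Z$ together with $(\Xabs\setminus\Xsafe)\times\{z_\mathrm{unsafe}\}$. This set suffices for two reasons:
\begin{itemize}
\item For $x\in\Xsafe$, $\Tau(\cdot\mid x,a)$ is supported on $\Xabs$, and $\delta(z,L(x'))=z_\mathrm{unsafe}$ whenever $z\notin Z_\text{acc}$ and $x'\notin\Xsafe$. So these are the only pairs the integrands ever visit.
\item The set contains the evaluation point $(x_0,\delta(z_\mathrm{init},L(x_0)))$ for every $x_0\in\Xabs$.
\end{itemize}
With that restriction, the argument goes through as written.
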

%

%

Figure~\ref{fig:sketch3} illustrates some of the concepts in Proposition~\ref{prop:RDP}. Specifically, it shows the sequences of satisfaction probability functions obtained via RDP on $\Scal^\varphi$ and on the abstraction $\Ucal^{\eta,\varphi}$ for two different partition sizes.
\begin{figure}
    \centering
    \includegraphics[width=0.8\linewidth]{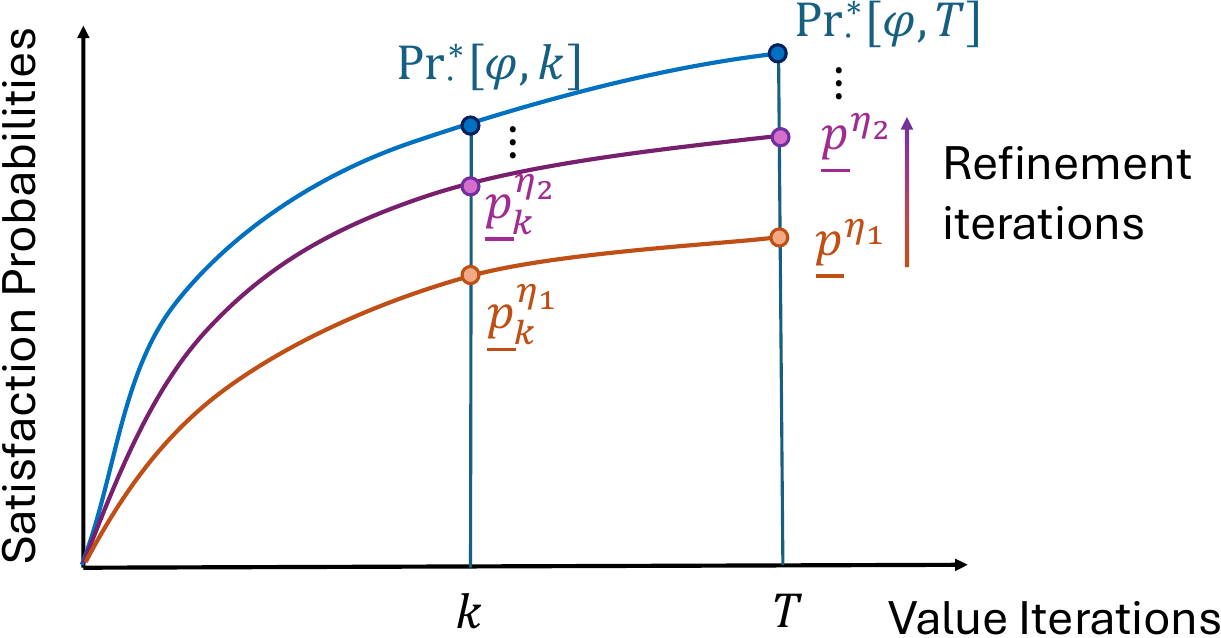}
    \caption{Graphical illustration of Proposition~\ref{prop:RDP} and Definition~\ref{def:asymptotic_optimality}. Value functions v.s. (RDP) iteration number for two discretization sizes $\eta_1 <\eta_2$. We let $\underline p_k^\eta := \underline u_k^\eta(\cdot,\delta(z_\text{init}, L(\cdot)))$. If the abstraction procedure is asymptotically optimal, then the orange and pink curves converge to the blue one.}
    \label{fig:sketch3}
\end{figure}
As previously stated, we desire an abstraction procedure that is asymptotically optimal, i.e., one for which refinement yields, in the limit, an optimal controller and zero gap in the satisfaction probability bounds. We formally define asymptotic optimality in Definition~\ref{def:asymptotic_optimality}, and illustrate it in Figure~\ref{fig:sketch3}.
\begin{definition}[Asymptotic Optimality]
\label{def:asymptotic_optimality}
    Consider the abstraction procedure $\mathrm{Abs}_\mathbb{U}$ and let $(\underline p^\eta, \overline p^\eta, \kappa^\eta)$ be the result of running RDP on $\Ucal^\eta = \mathrm{Abs}_\mathbb{U}(\mathcal{S}, \eta)$, for $\eta > 0$. We say that $\mathrm{Abs}_\mathbb{U}$ is asymptotically optimal if for each \LTLf formula $\varphi$ and horizon $T \in\naturals_0$ it holds that $P_{\cdot}^{\kappa^\eta}[\varphi, T]$, $\underline p_T^\eta$, and $\overline p_T^\eta$ uniformly converge to $P_{\cdot}^*[\varphi, T]$ as $\eta\to 0$. 
\end{definition}

\section{Algorithm for Near-Optimal Control}
\label{sec:algorithm}

In this section we present Algorithm~\ref{alg:algo} to solve Problem~\ref{prob:prob}, and a sufficient condition, namely, vanishing ambiguity, on the abstraction procedure that ensures completeness of the algorithm, i.e., that it successfully terminates in finite time. Our algorithm soundly abstracts $\Scal$ into a UMDP abstraction for a partition that is iteratively refined. At each iteration, the algorithm obtains a strategy for the abstraction and bounds in the probability of satisfying the specification through RDP on the abstraction. If the gap between the bounds is smaller than some user-defined $\varepsilon>0$, then the algorithm stops, ensuring that the synthesized controller is $\varepsilon$-optimal, i.e., that it solves Problem~\ref{prob:prob}.
\begin{algorithm}[t]\small
\caption{$\varepsilon$-Optimal Control Synthesis}
\label{alg:algo}
\begin{algorithmic}[1]
    \Require $\mathcal{S}, \varphi, T, \varepsilon, \mathrm{Abs}_{\mathbb{U}}, (\eta_i)_{i\in \naturals}$ such that $\eta_i > 0$, $\eta_i > \eta_{i+1}$ and $\lim_{i\to \infty} \eta_i = 0$
    \Ensure $\epsilon$-optimal controller $\kappa^\epsilon$, $\underline p^\epsilon, \overline p^\epsilon$
    \State Compute $\Acal$ from $\varphi$
    \State $i \gets 0$
    \While{$\max\{\overline p(x) - \underline p(x) \mid x\in X\} > \epsilon$}
    \State $i \gets i+1$
    \State $\Ucal \gets \mathrm{Abs}_{\mathbb{U}}(\mathcal{S}, \eta_i)$
    \State Obtain $\underline p, \overline p$ and $\sigma^{\varphi}$ via RDP on $\Ucal\otimes \Acal_\varphi$
    \EndWhile
    \State $\kappa^\epsilon \gets$ refine $\sigma^\varphi$ to $\Scal$
    \State $(\underline p^\epsilon, \overline p^\epsilon) \gets$ $(\underline p, \overline p)$
\end{algorithmic}
\end{algorithm}

We now state 
five lemmas
which enable us to find a sufficient condition for asymptotic optimality of an abstraction class. Proofs are provided in
Appendix~\ref{app:proofs}. Lemmas~\ref{lemma:uc_value_functions} through \ref{lemma:aux_3_term} allow us to bound the $1$-step difference in the value functions of $\Scal^\varphi$ and $\Ucal^{\eta,\varphi}$ in Lemma~\ref{lemma:uc_operator}. We then extend this result to multiple time-steps in Theorem~\ref{thm:asymptotic_optimality}. The following lemma states that, when the current state $x_t$ of $\Scal$ is perturbed to some $y$, the magnitude of the perturbation controls the variation in the expected value of uniformly continuous functions at $x_{t+1}$.
\begin{lemma}
\label{lemma:uniform_convergence_kernel}
    Let $v\in \Vcal := \{v : X \rightarrow [0,1]\}$ be uniformly continuous with $\mu:[0,\infty) \rightarrow [0, \infty)$ being a concave and non-decreasing modulus of continuity of $v$. Then,
    \begin{multline*}
        \left| \int_X v  d\Tau(\cdot\mid x, a) - \int_X v  d\Tau(\cdot\mid y, a) \right|\\
        \le \mu(\W(\Tau(\cdot\mid x, a), \Tau(\cdot\mid y, a))) \le \mu(L_f\|x-y\|),
    \end{multline*}
    for all $x,y\in X$ and $a\in A$. 
\end{lemma}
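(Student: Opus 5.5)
The plan is to prove the two inequalities separately. Both rest on a coupling argument together with Jensen's inequality for the concave modulus $\mu$.

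\emph{First inequality.} Fix $x,y\in X$ and $a\in A$, and write $\gamma := \Tau(\cdot\mid x,a)$ and $\gamma' := \Tau(\cdot\mid y,a)$. For any coupling $\pi\in\Pi(\gamma,\gamma')$ we have
\begin{align*}
\left|\int_X v\,d\gamma - \int_X v\,d\gamma'\right| = \left|\int_{X^2} (v(x')-v(y'))\,d\pi(x',y')\right| \le \int_{X^2} \mu(\|x'-y'\|)\,d\pi(x',y').
\end{align*}
Since $\mu$ is concave, Jensen's inequality bounds the right-hand side by $\mu\big(\int_{X^2}\|x'-y'\|\,d\pi\big)$.

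Next I take the infimum over $\pi$. Because $\mu$ is non-decreasing, I choose a minimizing sequence $\pi_k$ with $\int \|x'-y'\|\,d\pi_k \downarrow \W(\gamma,\gamma')$. This yields a bound of $\mu(\W(\gamma,\gamma')+\epsilon_k)$ for every $k$.

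The main obstacle is passing to the limit $\epsilon_k\to 0$, which needs right-continuity of $\mu$ at $\W(\gamma,\gamma')$. There are two ways to handle it.
\begin{itemize}
\item A concave non-decreasing function on $[0,\infty)$ is continuous on $(0,\infty)$. At $0$ it may jump, and in that case we can replace $\mu$ by its right-continuous version, which is still a modulus of continuity.
\item Alternatively, we use the existence of an optimal coupling. This holds because $\gamma,\gamma'$ are supported on the compact set $\Xabs$: $X_\safe$ and $W$ are compact and $f$ is continuous. Optimal couplings then exist by tightness together with lower semicontinuity of the cost.
\end{itemize}
I would use the optimal-coupling route, since it avoids any fuss about $\mu$.

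\emph{Second inequality.} Since $\mu$ is non-decreasing, it suffices to show $\W(\gamma,\gamma')\le L_f\|x-y\|$. I build an explicit coupling through the shared noise: let $\pi$ be the law of $(f(x,a,w),f(y,a,w))$ with $w\sim P_W$. Its marginals are exactly $\gamma$ and $\gamma'$ by the definition of $\Tau$. Assumption~\ref{ass:lipschitz} with $w=w'$ then gives
\begin{align*}
\W(\gamma,\gamma') \le \int_W \|f(x,a,w)-f(y,a,w)\|\,dP_W(w) \le L_f\|x-y\|.
\end{align*}
This bound is finite, so the Wasserstein distance in question is finite as well, and monotonicity of $\mu$ concludes the proof.
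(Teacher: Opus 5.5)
Your proposal is correct and follows essentially the same route as the paper: an optimal coupling of $\Tau(\cdot\mid x,a)$ and $\Tau(\cdot\mid y,a)$, the pointwise modulus bound, Jensen's inequality for the concave $\mu$, and the synchronous coupling $w\mapsto(f(x,a,w),f(y,a,w))$ for the second inequality, which you spell out where the paper only cites a standard Wasserstein estimate. One small correction: for $x\notin\Xsafe$ the kernel $\Tau(\cdot\mid x,a)$ need not be supported on $\Xabs$, but it is supported on the compact set $f(x,a,W)$, so your argument for the existence of an optimal coupling still holds.
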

Lemma~\ref{lemma:uc_value_functions} establishes uniform continuity of the value functions of $\Scal^\varphi$.
\begin{lemma}
    \label{lemma:uc_value_functions}
    Consider the value functions $(v_k)_{k \le T}$ of $\Scal^\varphi$ in Prop.~\ref{prop:RDP}. Then, each function $v_k(\cdot,z)$ are uniformly continuous on $X$ for all $z\in Z$.
\end{lemma}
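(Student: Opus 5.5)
The plan is to argue by induction on $k$, carrying along a modulus of continuity $\mu_k$ that works simultaneously for all $z\in Z$. The base case is immediate: $v_0(x,z)=\mathds{1}_{Z_\text{acc}}(z)$ does not depend on $x$, so $\mu_0\equiv 0$ works. For the inductive step, suppose every $v_k(\cdot,z)$ is uniformly continuous with a common modulus $\mu_k$, which is possible since $Z$ is finite. If $z\in Z_\text{acc}$, then $v_{k+1}(\cdot,z)\equiv 1$ and there is nothing to prove. Otherwise
\begin{equation*}
v_{k+1}(x,z)=\max_{a\in A}\int_W g_z\big(f(x,a,w)\big)\,dP_W(w),\qquad g_z(x'):=v_k\big(x',\delta(z,L(x'))\big).
\end{equation*}
A maximum of finitely many functions sharing a modulus has that same modulus, so it suffices to bound the difference of the integrals for one fixed $a$.

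The difficulty, and the reason Lemma~\ref{lemma:uniform_convergence_kernel} cannot simply be applied, is that $g_z$ is \emph{not} continuous. Since $L$ is constant on each $r\in R$, $g_z$ equals the uniformly continuous function $v_k(\cdot,z_r)$ on each $r$, where $z_r:=\delta(z,L(r))$, but it may jump across the boundaries $\partial r$. I will therefore work directly with the synchronous coupling that underlies Lemma~\ref{lemma:uniform_convergence_kernel}: feed the same disturbance $w$ to both states $x$ and $y$. Fix $x,y$ with $\|x-y\|\le\epsilon$. The two successors satisfy $\|f(x,a,w)-f(y,a,w)\|\le L_f\epsilon$ by Assumption~\ref{ass:lipschitz}. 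Split $W$ into two sets of disturbances.
\begin{itemize}
\item On the set of $w$ for which $f(x,a,w)$ and $f(y,a,w)$ lie in the same region $r$, the integrand difference is at most $\mu_k(L_f\epsilon)$.
\item On the complementary set, the integrand difference is at most $1$. On this set the segment between the two successors crosses some $\partial r$, so $f(x,a,w)\in N_\epsilon:=\big(\bigcup_{r\in R}\partial r\big)\oplus B(0,L_f\epsilon)$.
\end{itemize}
By Assumption~\ref{ass:density}, the second set has probability $\Tau(N_\epsilon\mid x,a)\le\overline\tau\,\Lcal(N_\epsilon)$. This yields the candidate modulus
\begin{equation*}
\mu_{k+1}(\epsilon):=\mu_k(L_f\epsilon)+\overline\tau\,\Lcal(N_\epsilon),
\end{equation*}
which is uniform in $x,y\in X$ and in $a$.

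The main obstacle is to show that $\Lcal(N_\epsilon)\to 0$ as $\epsilon\to 0$. Here I will use Assumption~\ref{ass:measure_0}. Each $r_i$ with $i<|R|$ is contained in $\Xsafe$, and $\partial r_{|R|}=\partial\Xsafe$. Hence $K:=\bigcup_{r}\partial r$ is a closed subset of the compact set $\Xsafe$, so $K$ is compact, and $\Lcal(K)=0$. The sets $N_\epsilon$ are bounded and decrease as $\epsilon\downarrow 0$. Their intersection is $K$, because $K$ is closed and each $N_\epsilon$ is the closed $L_f\epsilon$-neighbourhood of $K$. Continuity of $\Lcal$ from above on sets of finite measure then gives $\Lcal(N_\epsilon)\to\Lcal(K)=0$.

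Combining this with $\mu_k(L_f\epsilon)\to 0$ from the induction hypothesis gives $\mu_{k+1}(\epsilon)\to 0$, which closes the induction for all $k\le T$. Compactness of $K$ is what makes the bound hold uniformly on all of $X$, even if $X$ is unbounded. If a concave, non-decreasing modulus is needed later for Lemma~\ref{lemma:uniform_convergence_kernel}, I will replace $\mu_{k+1}$ by its least concave majorant. This majorant is finite because $v_{k+1}\in[0,1]$, and it still tends to $0$ at $0$.
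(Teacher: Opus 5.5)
Your proof is correct, but it takes a more direct route than the paper. The paper deals with the discontinuity of $v_z=v_k(\cdot,\delta(z,L(\cdot)))$ by decomposing it. It restricts $v_z$ to $X\setminus D_\epsilon$, the complement of an $\epsilon$-thickening of $\bigcup_r\partial r$, where it has a concave modulus. It extends that restriction to a uniformly continuous function on all of $X$ via McShane's extension theorem and clips it to $[0,1]$. It then applies Lemma~\ref{lemma:uniform_convergence_kernel} (Wasserstein bound plus Jensen) to the extension. The residual is paid for by $\Tau(D_\epsilon\mid x,a)+\Tau(D_\epsilon\mid y,a)\le 2\overline\tau\Lcal(D_\epsilon)$, and the argument finishes by choosing $\epsilon$ first and then $\beta$.

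You instead open up the synchronous (shared-disturbance) coupling that underlies Lemma~\ref{lemma:uniform_convergence_kernel} and bound the integrand pointwise in $w$. This has several advantages:
\begin{itemize}
\item It needs neither the extension theorem nor concavity.
\item It pays the boundary term at only one endpoint.
\item It yields an explicit recursive modulus $\mu_{k+1}(\epsilon)=\mu_k(L_f\epsilon)+\overline\tau\Lcal(N_\epsilon)$, uniform in $z$ and $a$.
\item It supplies the justification that the boundary-layer measure tends to $0$, via compactness and continuity from above. The paper takes this for granted.
\end{itemize}
What the paper's template buys is reuse. In Lemma~\ref{lemma:aux_2_term}, $\gamma_{x,a}^\eta$ and $\Tau^\eta(\cdot\mid x,a)$ are linked only through a Wasserstein bound, not a shared disturbance. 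The same extension-plus-residual scheme, with an extra tail bound on the coupling, is applied there. Your pointwise argument would need that additional step to transfer.

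One small repair. Your segment-crossing step needs $\partial$ to be the boundary in $\reals^n$. If $X\neq\reals^n$, then $\partial r_{|R|}=\partial(X\setminus\Xsafe)$ can contain parts of $\partial X$, so it need not equal $\partial\Xsafe$. For example, if $X$ is a half-space, $K$ is unbounded and $\Lcal(N_\epsilon)=\infty$. This is harmless: whenever the two successors lie in different regions, one of those regions is some $r_i\subseteq\Xsafe$ with $i<|R|$. So it suffices to take $K:=\bigcup_{i<|R|}\partial r_i$, which is a closed subset of the compact set $\Xsafe$ and has measure zero by Assumption~\ref{ass:measure_0}. The rest of your argument then goes through unchanged.
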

The following lemma compares the expected value of uniformly continuous functions evaluated at the successor state of $\Ucal^{\eta,\varphi}$, when said successor is distributed according to $\Tau^\eta$ and to some distribution in $\Gamma_{s,a}^\eta$, respectively. The lemma 
shows that fine partitions and a vanishing ambiguity diameter guarantee that the ambiguity in the transitions of $\Ucal^\eta$ has small effect on said expected value. More importantly, it guarantees that the convergence is uniform in $x\in \Xsafe$.
%
\begin{lemma}
\label{lemma:aux_2_term}
    Let $\lim_{\eta\to 0} \phi(\mathrm{Abs}_{\mathbb{U}}(\mathcal{S},\eta)) = 0$, where $\phi(\mathrm{Abs}_{\mathbb{U}}(\mathcal{S},\eta))$ is the ambiguity diameter of the UMDP $\Ucal^\eta = \mathrm{Abs}_{\mathbb{U}}(\mathcal{S},\eta)$ (see Definition~\ref{def:ambiguity_diameter}). Let $v \in \Vcal^\varphi$, with $v(\cdot, z')$ being uniformly continuous on $X$ for all $z\in Z$, and define $v_{z} := v(\cdot, \delta(z,L(\cdot)))$ for all $z\in Z$. 
    Further,
    for each $x\in \Xsafe$, $a\in A$ and $\eta > 0$, let $\gamma_{x,a}^\eta \in\Gamma_{s,a}^\eta$, with $s = \rho(x)$
    . Then it holds that, for each $z\in Z$ and $a\in A$,
    \begin{align*}
        \lim_{\eta\to 0} \left| \int_X v_z d\gamma_{x,a}^\eta - \int_X v_z d\Tau^\eta(\cdot \mid x, a) \right| = 0
    \end{align*}
    uniformly on $x\in \Xsafe$.
\end{lemma}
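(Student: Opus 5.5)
The plan is a coupling argument. Both measures lie in the same ambiguity set, so their Wasserstein distance is at most the ambiguity diameter. The only obstacle is that $v_z$ is discontinuous across boundaries of the regions of interest, and the mass that $\Tau^\eta$ places near those boundaries is controlled by Assumptions~\ref{ass:measure_0}--\ref{ass:density}.

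\emph{Step 1: reduce to a Wasserstein bound.} Fix $x\in\Xsafe$, $a\in A$, $z\in Z$ and put $s=\rho^\eta(x)$. By soundness (Definition~\ref{dfn:soundness}(i)) we have $\Tau^\eta(\cdot\mid x,a)\in\Gamma^\eta_{s,a}$, and by hypothesis $\gamma^\eta_{x,a}\in\Gamma^\eta_{s,a}$. Since $s\in S^\eta_\safe$, Definition~\ref{def:ambiguity_diameter} gives
\[
\W\bigl(\gamma^\eta_{x,a},\Tau^\eta(\cdot\mid x,a)\bigr)\le \phi^\eta:=\phi(\mathrm{Abs}_{\mathbb U}(\Scal,\eta)),
\]
a bound that does not depend on $x$. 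Take an optimal coupling $\pi$ of the two measures; it exists because they are finitely supported on $S^\eta$. Then
\[
\Bigl|\int_X v_z\,d\gamma^\eta_{x,a}-\int_X v_z\,d\Tau^\eta(\cdot\mid x,a)\Bigr|\le\int |v_z(y)-v_z(y')|\,d\pi(y,y').
\]

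\emph{Step 2: split the coupling.} Fix $\delta>0$. On $\{(y,y')\in S^\eta\times S^\eta\}$, consider three sets of pairs.

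The first set is pairs with $\|y-y'\|\ge\delta$. By Markov's inequality, $\pi(\|y-y'\|\ge\delta)\le\phi^\eta/\delta$, and the integrand is at most $1$.

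The second set is pairs with $\|y-y'\|<\delta$ and $y,y'$ in the same region $r\in R$. Then $L(y)=L(y')$, so $v_z(y)-v_z(y')=v(y,z')-v(y',z')$ with a common $z'$. Uniform continuity of the finitely many functions $v(\cdot,z')$ then gives a bound $\mu(\delta)$, where $\mu$ is a common modulus of continuity.

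The third set is pairs with $\|y-y'\|<\delta$ lying in different regions of $R$. The segment from $y'$ to $y$ leaves the region containing $y'$, so it meets some $\partial r$. Hence $y'$ lies within $\delta$ of $\partial R:=\bigcup_{r}\partial r$, and the mass of these pairs is at most $\Tau^\eta(\{s':\mathrm{dist}(s',\partial R)<\delta\}\mid x,a)$. Each such atom $s'$ carries the mass $\Tau(\rho^{-\eta}(s')\mid x,a)$ of a cell of diameter at most $2\eta$, and every such cell lies within $N_{\delta+2\eta}:=\partial R\oplus B(0,\delta+2\eta)$, intersected with a bounded set containing $\Xabs$. By Assumption~\ref{ass:density},
\[
\Tau^\eta(\{s':\mathrm{dist}(s',\partial R)<\delta\}\mid x,a)\le\overline\tau\,\Lcal(N_{\delta+2\eta}),
\]
again uniformly in $x$.

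\emph{Step 3: combine and choose $\delta$.} Together,
\[
\sup_{x\in\Xsafe}|\cdots|\le \frac{\phi^\eta}{\delta}+\mu(\delta)+\overline\tau\,\Lcal(N_{\delta+2\eta}).
\]
Take $\delta=\delta(\eta)=\max\{\sqrt{\phi^\eta},\eta\}$. Then $\phi^\eta/\delta\le\sqrt{\phi^\eta}\to0$, and $\delta\to0$, so $\mu(\delta)\to0$.

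For the last term, the sets $N_\epsilon$ restricted to the bounded region are decreasing as $\epsilon\downarrow0$, have finite measure, and decrease to the closed set $\partial R$. Assumption~\ref{ass:measure_0} gives $\Lcal(\partial R)=0$, and continuity from above of $\Lcal$ yields $\Lcal(N_{\delta+2\eta})\to0$.

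\emph{Main obstacle.} The third set of pairs is the hardest part. It requires (i) relating discontinuities of $v_z$ only to proximity of the $\Tau^\eta$-side point to $\partial R$, so that the density bound applies to that side and not to the uncontrolled $\gamma^\eta_{x,a}$, and (ii) correctly bounding the region-of-interest boundaries within the bounded set, so that continuity from above of $\Lcal$ applies. If the segment argument is delicate for boundary cases, I would instead use closed $\delta$-neighbourhoods of each $r$ directly.
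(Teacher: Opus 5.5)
Your proof is correct. It reaches the result through a different decomposition than the paper's, although both arguments rest on the same core ingredients: the Wasserstein bound from soundness, a Markov/Chebyshev estimate, and the density bound on a thickening of $\partial R$.

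The paper splits the \emph{function}:
\begin{itemize}
\item It extends $v_z$ from $X\setminus D_\epsilon$ (McShane) to a uniformly continuous $v_\epsilon$ with a concave modulus.
\item It controls $|\int v_\epsilon\,d\gamma^\eta_{x,a}-\int v_\epsilon\,d\Tau^\eta(\cdot\mid x,a)|\le\mu_{v_z,\epsilon}(\phi(\eta))$ through the Jensen argument of Lemma~\ref{lemma:uniform_convergence_kernel}.
\item It bounds the residual, which is supported on $D_\epsilon$, by $\gamma^\eta_{x,a}(D_\epsilon)+\Tau^\eta(D_\epsilon\mid x,a)$.
\item Because $\gamma^\eta_{x,a}$ has no density, it needs an extra transport step, $\gamma^\eta_{x,a}(D_\epsilon)\le\Tau^\eta(D_{2\epsilon}\mid x,a)+\phi(\eta)/\epsilon$.
\item It finishes by fixing $\epsilon$ first and then shrinking $\eta$.
\end{itemize}

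You split the \emph{pairs} of the optimal coupling instead. A discontinuity of $v_z$ is then detected directly by the $\Tau^\eta$-side endpoint lying within $\delta$ of $\partial R$. This removes three things from the paper's route: the extension theorem, the need for a concave modulus and Jensen, and any estimate of the $\gamma^\eta_{x,a}$-mass near the boundaries. Your choice $\delta(\eta)=\max\{\sqrt{\phi^\eta},\eta\}$ also yields a single explicit bound, $\phi^\eta/\delta+\mu(\delta)+\overline\tau\,\Lcal(N_{\delta+2\eta}\cap K)$.

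Two of your steps make explicit points the paper leaves implicit:
\begin{itemize}
\item Your segment (connectedness) argument is exactly what justifies the paper's claim that $v_z$ is uniformly continuous on $X\setminus D_\epsilon$ ``by finiteness of $|R|$''.
\item Intersecting with a bounded $K\supseteq\Xabs$ is what makes continuity from above, and hence $\Lcal(D_\epsilon)\to0$, legitimate.
\end{itemize}

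What the paper's route buys is modularity. The same ``uniformly continuous part plus boundary residual'' template, together with Lemma~\ref{lemma:uniform_convergence_kernel}, is reused verbatim in the proof of Lemma~\ref{lemma:uc_value_functions}. Your pair-splitting would, however, work there just as well.
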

The following lemma gives a similar uniform convergence result as that of Lemma~\ref{lemma:aux_2_term}, but instead considers expectations taken with respect to $\Tau^\eta$ and $\Tau$. Intuitively, this quantifies the error introduced by discretizing the transition kernel.
\begin{lemma}
\label{lemma:aux_3_term}
    Let 
    $v_z$ be as defined in Lemma~\ref{lemma:aux_2_term} for each $z\in Z$. 
    Then, it holds that, for each $z\in Z$ and $a\in A$,
    \begin{align*}
        \lim_{\eta\to 0} \left| \int_X v_z d\Tau^\eta(\cdot \mid x, a) - \int_X v_z d\Tau(\cdot \mid x, a) \right| = 0
    \end{align*}
    uniformly on $x\in \Xsafe$.
\end{lemma}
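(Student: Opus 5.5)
The plan is a direct estimate. The discretized kernel $\Tau^\eta(\cdot\mid x,a)$ places the mass $\Tau(b\mid x,a)$ at the representative point $s_b\in b$ of each region $b\in B^\eta$. So both integrals decompose over the same partition, and the error reduces to the oscillation of $v_z$ inside single cells, which is controlled by the cell size $2\eta$. Fix $z\in Z$, $a\in A$, and $x\in\Xsafe$.

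\textbf{Step 1: both measures live on $\Xabs$.} By the definition of $\Xabs$, for $x\in\Xsafe$ every successor $f(x,a,w)$ lies in $\Xabs$. Hence $\Tau(\Xabs\mid x,a)=1$, and since $B^\eta$ partitions $\Xabs$,
\begin{align*}
\int_X v_z\, d\Tau(\cdot\mid x,a) &= \sum_{b\in B^\eta}\int_b v_z(x')\,\Tau(dx'\mid x,a),\\
\int_X v_z\, d\Tau^\eta(\cdot\mid x,a) &= \sum_{b\in B^\eta} v_z(s_b)\,\Tau(b\mid x,a).
\end{align*}
Subtracting, the difference is bounded by $\sum_b \int_b |v_z(s_b)-v_z(x')|\,\Tau(dx'\mid x,a)$.

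\textbf{Step 2: $v_z$ has a fixed automaton component on each cell.} This is where the $R$-respecting property is used, and it is the only point needing care, because $v_z(x')=v(x',\delta(z,L(x')))$ is generally discontinuous across region boundaries. Since $R$ partitions $X$ and each $b$ is nonempty, $b$ cannot lie in $X\setminus r$ for every $r$. Hence each $b$ is contained in exactly one $r\in R$. The label $L$ is constant on $r$, so $\delta(z,L(\cdot))\equiv z'_b$ is constant on $b$. Therefore, on $b$ we have $v_z=v(\cdot,z'_b)$, which is uniformly continuous by hypothesis. This avoids any appeal to Assumption~\ref{ass:measure_0}.

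\textbf{Step 3: uniform bound.} For each $z'\in Z$, let $\mu_{z'}$ be a modulus of continuity of $v(\cdot,z')$, and set $\bar\mu:=\max_{z'\in Z}\mu_{z'}$. This is a maximum over finitely many moduli, so $\bar\mu(t)\to0$ as $t\to0$. For $x'\in b$ we have $\|x'-s_b\|\le \mathrm{diam}(b)\le 2\eta$, hence $|v_z(s_b)-v_z(x')|\le\bar\mu(2\eta)$. Summing over $b$ with total mass $1$ gives
\begin{align*}
\left|\int_X v_z\, d\Tau^\eta(\cdot\mid x,a)-\int_X v_z\, d\Tau(\cdot\mid x,a)\right|\le \bar\mu(2\eta).
\end{align*}
The right-hand side is independent of $x\in\Xsafe$, $z$, and $a$, and it tends to $0$ as $\eta\to0$. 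This gives the claimed uniform convergence.
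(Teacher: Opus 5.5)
Your proof is correct and follows essentially the same route as the paper. Both arguments decompose the two integrals over the cells of $B^\eta$, using that both kernels are supported on $\Xabs$, use the $R$-respecting property to make $v_z$ uniformly continuous on each cell, and bound the cellwise oscillation by a modulus at $2\eta$, independent of $x$. The only cosmetic difference is that you take the modulus over the finitely many $v(\cdot,z')$ and evaluate $\Tau^\eta$ exactly at the representative points, whereas the paper uses a modulus of $v_z$ on each region $r\in R$ together with a sup/inf oscillation bound.
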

The next lemma shows that fine partitions and a vanishing ambiguity diameter guarantee that the Bellman operators $\Bcal$ and $\Bcal^\eta$ yield similar outputs for uniformly continuous input functions that are identically zero at $z_\mathrm{unsafe}$. The lemma also shows that the difference in said outputs vanishes, uniformly on $x\in \Xsafe$, as the partitions are refined.
\begin{lemma}
\label{lemma:uc_operator}
    Let  $v\in\Vcal^\varphi$ with $v(\cdot, z_\text{unsafe}) = 0$, and $v(\cdot,z)$ be uniformly continuous on $X$ for all $z\in Z$. Assume that $\lim_{\eta\to0} \phi(\mathrm{Abs}(\Scal,\eta)) = 0$. 
    Then, for each $z\in Z$,
    \begin{align*}
        \lim_{\eta\to0} |\underline \Bcal^\eta[v](x,z) - \Bcal[v](x,z)| = 0
    \end{align*}
    uniformly on $x\in \Xsafe$.
\end{lemma}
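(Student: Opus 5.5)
The plan is to fix $z\in Z$ and reduce the statement to a per-action estimate, which then splits into three error terms. Two of these terms are handled by Lemmas~\ref{lemma:aux_2_term} and \ref{lemma:aux_3_term}; the third is a discretization error coming from the minorizer $v^\eta$.

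\textbf{Reductions.} If $z\in Z_\text{acc}$, then $\underline\Bcal^\eta[v](x,z)=\Bcal[v](x,z)=1$ and there is nothing to prove. So assume $z\notin Z_\text{acc}$. Since $A$ is finite and $|\max_a f_a-\max_a g_a|\le\max_a|f_a-g_a|$, it suffices to show, for each fixed $a\in A$, that $|\underline\Bcal_a^\eta[v](x,z)-\Bcal_a[v](x,z)|\to0$ uniformly in $x\in\Xsafe$; the uniformity then survives the finite maximum over $a$. For $x\in\Xsafe$ with $s=\rho^\eta(x)$, closedness of $\Gamma_{s,a}^\eta$ lets me choose a minimizer $\gamma_{x,a}^\eta\in\Gamma_{s,a}^\eta$ of $\gamma\mapsto\int_X v^\eta(\cdot,\delta(z,L(\cdot)))\,d\gamma$. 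Writing $v_z$ as in Lemma~\ref{lemma:aux_2_term} and $v^\eta_z:=v^\eta(\cdot,\delta(z,L(\cdot)))$, the triangle inequality gives
\begin{align*}
|\underline\Bcal_a^\eta[v](x,z)-\Bcal_a[v](x,z)|
&\le \Big|\int_X v^\eta_z\, d\gamma_{x,a}^\eta-\int_X v_z\, d\gamma_{x,a}^\eta\Big|
+\Big|\int_X v_z\, d\gamma_{x,a}^\eta-\int_X v_z\, d\Tau^\eta(\cdot\mid x,a)\Big|\\
&\quad+\Big|\int_X v_z\, d\Tau^\eta(\cdot\mid x,a)-\int_X v_z\, d\Tau(\cdot\mid x,a)\Big|.
\end{align*}

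\textbf{Second and third terms.} The second term vanishes uniformly on $\Xsafe$ by Lemma~\ref{lemma:aux_2_term}. This uses the vanishing-ambiguity hypothesis $\phi(\mathrm{Abs}(\Scal,\eta))\to0$ and the uniform continuity of each $v(\cdot,z')$. The third term vanishes uniformly on $\Xsafe$ by Lemma~\ref{lemma:aux_3_term}.

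\textbf{First term (main obstacle).} The difficulty is that $v_z$ is generally \emph{not} continuous: $\delta(z,L(\cdot))$ jumps across region boundaries. Hence one cannot simply compare $v^\eta_z$ with $v_z$ through a modulus of continuity of $v_z$. I would instead use two facts.

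1. $\gamma_{x,a}^\eta$ is supported on $S^\eta\subset\Xabs$, so only the values at representative points $s'$ matter. In particular, the convention $v^\eta=0$ outside $\Xabs$ never enters.
2. $B^\eta$ is $R$-respecting, so $L$ is constant on each block $b=\rho^{-\eta}(s')$. Therefore the automaton successor $z'':=\delta(z,L(s'))=\delta(z,L(y))$ is the same for all $y\in b$.

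Together these give
$$v^\eta_z(s')=\inf_{y\in b}v(y,z''),$$
and hence
$$0\le v_z(s')-v^\eta_z(s')\le \sup_{y\in b}|v(s',z'')-v(y,z'')|\le\mu_{z''}(2\eta),$$
where $\mu_{z''}$ is a modulus of continuity of $v(\cdot,z'')$. Taking $\bar\mu:=\max_{z''\in Z}\mu_{z''}$ (a finite maximum), the first term is at most $\bar\mu(2\eta)\to0$, independently of $x$.

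Where successors land in $\Xunsafe$, we have $z''=z_\mathrm{unsafe}$, and the hypothesis $v(\cdot,z_\mathrm{unsafe})=0$ makes both $v_z$ and $v^\eta_z$ equal to zero there. This keeps the unsafe absorbing part consistent with soundness condition (ii) and with the convention $\underline\Bcal^\eta=0$ off $\Xsafe$. Summing the three uniform bounds and taking the maximum over the finitely many $a\in A$ yields the claim.
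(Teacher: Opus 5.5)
Your proposal is correct and follows essentially the same route as the paper. You reduce to a fixed action via the finite maximum, split $|\underline{\Bcal}_a^\eta[v](x,z)-\Bcal_a[v](x,z)|$ into the same three terms using the minimizer $\gamma_{x,a}^{\eta}\in\Gamma_{s,a}^\eta$, dispatch the second and third terms with Lemmas~\ref{lemma:aux_2_term} and \ref{lemma:aux_3_term}, and bound the minorizer error by a modulus of continuity at $2\eta$. Your pointwise treatment of the first term at the representative points, where $v_z^\eta(s')=\inf_{y\in\rho^{-\eta}(s')}v(y,\delta(z,L(s')))$, is a slightly cleaner version of the paper's $\sup_b-\inf_b$ bound.
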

Intuitively, Lemma~\ref{lemma:uc_operator} quantifies the $1$-step error due to the conservatism in the abstraction, showing that it vanishes as the partitions are refined. 
%
%
By extending this result to multiple time steps, we are able to provide formal statements on optimality and completeness.
\begin{theorem}[Asymptotic Optimality]
\label{thm:asymptotic_optimality}
    Consider given system $\mathcal{S}$ and abstraction procedure $\mathrm{Abs}_\mathbb{U}$. Then, $\mathrm{Abs}_\mathbb{U}$ is asymptotically optimal if $\lim_{\eta\to 0}\phi(\mathrm{Abs}_\mathbb{U}(\mathcal{S}, \eta)) = 0$, i.e., if the ambiguity in the UMDP $\mathcal{U}^\eta = \mathrm{Abs}(\mathcal{S}, \eta)$ vanishes with $\eta$.
\end{theorem}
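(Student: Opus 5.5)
We argue by induction over the RDP iterations, using Lemma~\ref{lemma:uc_operator} as the one-step engine and Lemma~\ref{lemma:uc_value_functions} to keep its hypotheses available at every step. Fix $\varphi$ and $T$, and let $(v_k)$, $(\underline v_k^\eta)$ and $(\overline v_k^\eta)$ be the exact, pessimistic and optimistic sequences of Proposition~\ref{prop:RDP}. The claim to prove by induction on $k\le T$ is
\[
\lim_{\eta\to0}\sup_{x\in\Xsafe}|\underline v_k^\eta(x,z)-v_k(x,z)|=0 \quad \text{for every } z\in Z,
\]
together with the same statement for $\overline v_k^\eta$. Outside $\Xsafe$ all three sequences vanish for $k\ge1$. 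At $z_\mathrm{unsafe}$ they also vanish, because that state is absorbing and non-accepting. The base case $k=0$ is an identity, since all three sequences start from $\mathds{1}_{X^\varphi_\mathrm{acc}}$.

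For the inductive step I split
\[
|\underline v_{k+1}^\eta-v_{k+1}| \le |\underline\Bcal^\eta[\underline v_k^\eta]-\underline\Bcal^\eta[v_k]| + |\underline\Bcal^\eta[v_k]-\Bcal[v_k]|.
\]
The second term tends to zero uniformly on $\Xsafe$ by Lemma~\ref{lemma:uc_operator}. Its hypotheses hold: $v_k(\cdot,z)$ is uniformly continuous by Lemma~\ref{lemma:uc_value_functions}, and $v_k(\cdot,z_\mathrm{unsafe})=0$.

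The first term is the delicate one and the main obstacle. The operator $\underline\Bcal^\eta$ acts on the piecewise-constant minorizers $(\cdot)^\eta$. The map $v\mapsto v^\eta$ is $1$-Lipschitz in the sup norm, and the min/max over $\gamma$ and the max over $a$ are also nonexpansive, so
\[
|\underline\Bcal^\eta[u]-\underline\Bcal^\eta[w]|\le \sup_{x\in\Xabs,\,z}|u-w|.
\]
The induction hypothesis, however, only controls the difference on $\Xsafe$, while successors land in $\Xabs$. On $\Xabs\setminus\Xsafe$ both functions are zero for $k\ge1$: $\underline v^\eta_k$ by definition, and $v_k$ effectively so, since the DFA moves to $z_\mathrm{unsafe}$ and the integrand $v(\cdot,\delta(z,L(\cdot)))$ vanishes there. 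So the sup over the relevant region is controlled, and since $Z$ is finite the first term is $o(1)$ uniformly. The optimistic sequence is handled identically, with $\overline\Bcal^\eta$ in place of $\underline\Bcal^\eta$. For that I would either note that the proof of Lemma~\ref{lemma:uc_operator} applies verbatim to $\overline\Bcal^\eta$, via Lemmas~\ref{lemma:aux_2_term}--\ref{lemma:aux_3_term} applied to the maximizing $\gamma$, or bound $\overline\Bcal^\eta_a[v]-\underline\Bcal^\eta_a[v]$ using Lemma~\ref{lemma:aux_2_term} with two arbitrary selections from $\Gamma^\eta_{s,a}$.

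Evaluating at $z=\delta(z_\mathrm{init},L(x))$ with $k=T$ then gives uniform convergence of $\underline p^\eta$ and $\overline p^\eta$ to $p=\mathrm{Pr}^*_\cdot[\varphi,T]$. For initial states outside $\Xsafe$ there is nothing to prove, because all quantities equal $0$ when $\square\Prop_\safe$ is part of $\varphi$. Finally, Proposition~\ref{prop:RDP} gives the sandwich
\[
\underline p^\eta\le \mathrm{Pr}^{\kappa^\eta}_\cdot[\varphi,T]\le \mathrm{Pr}^*_\cdot[\varphi,T]\le\overline p^\eta,
\]
so
\[
|\mathrm{Pr}^{\kappa^\eta}_\cdot-\mathrm{Pr}^*_\cdot|\le \overline p^\eta-\underline p^\eta\to0
\]
uniformly. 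This is asymptotic optimality in the sense of Definition~\ref{def:asymptotic_optimality}.
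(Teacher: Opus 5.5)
Your proof is correct and follows essentially the same argument as the paper. The paper unrolls your induction into the telescoping bound $\sup_{x\in\Xsafe,z}|v_T-\underline v_T^\eta|\le\sum_{k<T}\sup_{x\in\Xsafe,z}|\Bcal[v_k]-\underline\Bcal^\eta[v_k]|$, using the same split, Lemma~\ref{lemma:uc_operator} for the consistency term, and a separate nonexpansiveness lemma for the other term (which restricts the sup to $\Xsafe$ via $\delta(z,L(x'))=z_\mathrm{unsafe}$ off $\Xsafe$); it then treats $\overline p^\eta$ ``similarly'' and closes with the same sandwich. One small imprecision: $v_k$ does not vanish outside $\Xsafe$ in general (e.g.\ $v_k(\cdot,z)\equiv1$ for $z\in Z_\mathrm{acc}$), but as you note, for non-accepting $z$ the integrand only evaluates $(x',z_\mathrm{unsafe})$ there, so the argument is unaffected.
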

The following corollary is a direct result of Theorem~\ref{thm:asymptotic_optimality}.
\begin{corollary}[Completeness of Algorithm~\ref{alg:algo}]
\label{cor:near_optimal}
    Let $\mathcal{S}$ and $\mathrm{Abs}_\mathbb{U}$ be given. If $\mathrm{Abs}_\mathbb{U}$ is asymptotically optimal, then 
    Algorithm~\ref{alg:algo} solves Problem~\ref{prob:prob} in \emph{finite time} for any \LTLf formula $\varphi$, horizon $T\in \naturals_0$, and $\varepsilon > 0$.
\end{corollary}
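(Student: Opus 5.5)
The plan is to argue by contradiction-free direct reasoning on the termination condition of Algorithm~\ref{alg:algo}. Since $\mathrm{Abs}_\mathbb{U}$ is asymptotically optimal, Definition~\ref{def:asymptotic_optimality} gives that, for the fixed $\varphi$ and $T$, both $\underline p^\eta$ and $\overline p^\eta$ converge uniformly to $\mathrm{Pr}_{\cdot}^*[\varphi,T]$ as $\eta\to 0$, and so does $\mathrm{Pr}_{\cdot}^{\kappa^\eta}[\varphi,T]$. By the triangle inequality,
\begin{align*}
\sup_{x}\bigl(\overline p^\eta(x)-\underline p^\eta(x)\bigr) \le \sup_x|\overline p^\eta(x)-\mathrm{Pr}_x^*[\varphi,T]| + \sup_x|\mathrm{Pr}_x^*[\varphi,T]-\underline p^\eta(x)|,
\end{align*}
so the gap tends to $0$ uniformly. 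Hence there is $\bar\eta>0$ with gap at most $\varepsilon$ for all $\eta\le\bar\eta$. Since the input sequence satisfies $\eta_i\downarrow 0$, there is a finite index $i^*$ with $\eta_{i^*}\le\bar\eta$, so the while-loop condition fails at latest at iteration $i^*$. Each iteration is a finite computation (finite UMDP, finite DFA, $T$ RDP steps), so the algorithm terminates in finite time. One must also note that the maximum in the loop is effectively over $\Xabs$ (outside, both bounds are $0$ by definition of the Bellman operators), which is a finite max over the partition since $\underline p,\overline p$ are piecewise constant.

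Next I verify the three conditions of Problem~\ref{prob:prob} at termination with $\eta=\eta_i$ the final index. Condition (iii) is exactly the stopping criterion. Condition (ii) follows from the last item of Proposition~\ref{prop:RDP}, since soundness of the abstraction gives $\mathrm{Pr}^{\kappa^\eta}_{x_0}[\varphi,T]\in[\underline p^\eta(x_0),\overline p^\eta(x_0)]$; the returned controller $\kappa^\varepsilon$ is precisely the refinement $\kappa^\eta$ of $\sigma^{\eta,\varphi}$. Condition (i) follows because the same proposition places $\mathrm{Pr}^*_{x_0}[\varphi,T]$ in the same interval, so the two quantities differ by at most the interval width, which is at most $\varepsilon$.

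The only delicate point is handling initial states outside $\Xabs$ (or outside $\Xsafe$), where Proposition~\ref{prop:RDP} is stated only on $\Xabs$. I would treat this by observing that for $x_0\notin\Xsafe$ the trajectory violates $\square\Prop_\safe$ immediately, so all of $\mathrm{Pr}^{\kappa}_{x_0}$, $\mathrm{Pr}^*_{x_0}$, $\underline p$, $\overline p$ equal $0$ there and the three conditions hold trivially. Notably, the termination argument itself uses only the uniform convergence of the bounds; uniform convergence of $\mathrm{Pr}^{\kappa^\eta}$ is not needed beyond what the bracketing provides.
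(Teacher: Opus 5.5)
Your proposal is correct and follows essentially the same route as the paper's proof. Uniform convergence of the bounds forces the gap below $\varepsilon$ at some finite index of $(\eta_i)$, and Proposition~\ref{prop:RDP} places both $\mathrm{Pr}^{*}_{x_0}[\varphi,T]$ and $\mathrm{Pr}^{\kappa^\varepsilon}_{x_0}[\varphi,T]$ in $[\underline p^\varepsilon(x_0),\overline p^\varepsilon(x_0)]$, which yields conditions (i)--(iii); the paper invokes Theorem~\ref{thm:asymptotic_optimality} where you appeal directly to Definition~\ref{def:asymptotic_optimality}, and you additionally treat initial states outside $\Xsafe$ explicitly, both of which are slightly cleaner than the paper's version.
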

We highlight the generality of Theorem~\ref{thm:asymptotic_optimality} and Corollary~\ref{cor:near_optimal}, as they do not impose restrictive assumptions on the partitions or discretization scheme: only that the partitions must be $R$-respecting and that $(\eta_i)_{i\in \naturals}$ must decrease monotonously to $0$
(see Algorithm~\ref{alg:algo}), i.e., the size of each region in the partition vanishes in the limit. For example, one could choose the shape of $S^{\eta_i}$ in an adaptive way, namely, depending on $S^{\eta_{i-1}}$ and on the results of iteration $i-1$.
\begin{remark}
    %
    Determining whether or not the \emph{sufficient} condition in Theorem~\ref{thm:asymptotic_optimality} is also \emph{necessary} for asymptotic optimality of Algorithm~\ref{alg:algo} is not straightforward, as our proofs employ several overapproximations and. Furthermore, our definition of asymptotic optimality in Definition~\ref{def:asymptotic_optimality} requires the abstraction procedure to yield, in the limit as $\eta \to 0$ and via RDP, optimal results for every \LTLf property. In consequence, we do not rule out the possibility that there may exist some \LTLf property and an abstraction procedure that yield an optimal controller as $\eta \to 0$ and/or that Algorithm~\ref{alg:algo} returns a $\varepsilon$-optimal controller for some $\varepsilon > 0$. However, we highlight the importance of designing an abstraction procedure that is guaranteed to provide $\varepsilon$-optimal results for every \LTLf formula and $\varepsilon>0$. Otherwise, it is not clear if a non-asymptotically optimal procedure will work for the formula and $\varepsilon$ of interest.
\end{remark}

\section{Common UMDP Abstractions}

In this section, we examine common UMDP abstraction procedures, namely, SMDP and IMDP abstraction via reachability computations, and study their behavior as the partition is refined. We show that, while these SMDPs are asymptotically optimal, IMDPs (obtained as per \cite{10347405}) are not. We highlight that our conclusions apply to the specific procedure of obtaining IMDP and SMDP abstractions, and provide no claim towards optimality of the abstraction class, as different abstraction procedures may lead to different optimality features. We first define the $1$-step reachable set of regions $b \subseteq X$ and $c \subseteq W$ under action $a\in A$ as the set
\begin{align*}
    \mathcal{R}(b,a,c) := \{f(x,a,w) \in X : x \in b, w \in c\}.
\end{align*}

\subsection{SMDPs}
An SMDP $\Ucal^{\mathbb{S}}$ is a UMDP class whose ambiguity set $\Gamma^{\mathbb{S}}$ has the following structure \cite{gracia2025beyond}: each probability distribution $\gamma \in \Gamma^{\mathbb{S}}$ corresponds to transitioning to a set of states (or \emph{cluster}) with a given probability, and the actual successor state inside that cluster being distributed according to some conditional probability distribution. In consequence, given a conditional probability distribution for each cluster, a distribution $\gamma \in \Gamma^{\mathbb{S}}$ is identified. Therefore, the ambiguity in the transitions of $\Ucal^{\mathbb{S}}$ comes from allowing the conditional distributions to take any value within the corresponding probability simplices. In the following definition we formalize this intuition and show how to compute the clusters of the SMDP abstraction.
\begin{definition}[SMDP Abstraction \cite{gracia2025beyond}]
    \label{def:mdp_st_abstraction}
    Consider System $\Scal$, the $R$-respecting partition $B^\eta$ and the partition $C^\eta$ of $W$. For all $s\in S_\safe^\eta$, $a\in A$, $c\in C^\eta$, define cluster $q_{s,a,c} := S^\eta\cap \mathcal{R}(\rho^{-1}(s),a,c)$, and let $Q_{s,a}^\eta := \{q_{s,a,c}: c\in C^\eta\}$. For each $q_{s,a,c} \in Q_{s,a}^\eta$, let $\theta_{s,a,c} \in \Pcal(q_{s,a,c})$. For each 
    $$\theta_{s,a} := \big(\theta_{s,a,c} \big)_{c \in C^\eta} \in \bigtimes_{c\in C^\eta} \mathcal{P}(q_{s,a,c}) =: \Theta_{s,a}^\eta,$$ denote $\gamma_{\theta_{s,a}} \in \mathcal{P}(S^\eta)$ the probability distribution such that
    \begin{align}
        \label{eq: SMDP exact successor prob}
        \gamma_{\theta_{s,a}}(s') = \sum_{c \in C^\eta : s' \in q_{s,a,c}} \theta_{s,a,c}(s') P_W(c),
    \end{align}
    for $s' \in S^\eta$. We define the SMDP abstraction of System $\Scal$ as $\Ucal^{\mathbb{S}, \eta} := (S^\eta,A,\Gamma^{\mathbb{S}, \eta}, \AP, L)$, where
    \begin{align}
    \label{eq:Gamma_mdp_st}
    \Gamma_{s,a}^{\mathbb{S}, \eta} := \{\gamma_{\theta_{s,a}} : \theta_{s,a} \in \Theta_{s,a}\}
    \qquad \forall s\in S_\safe^\eta,\;a\in A,
    \end{align}
    and $\Gamma^{\mathbb{S}, \eta}_{s,a} := \{\delta_{s}\}$ for all $s\notin S_\safe^\eta$ and $a \in A$.
\end{definition}
As shown in \cite{gracia2025beyond}, such an SMDP abstraction is Sound as per Definition~\ref{dfn:soundness}. The following theorem establishes optimality of such an abstraction process.
\begin{theorem}
\label{thm:optimality_smdp}
    Let $\mathrm{Abs}_\mathbb{S}$ be the abstraction procedure that abstracts $\Scal$ into an SMDP as per Definition~\ref{def:mdp_st_abstraction}. Then, 
    \begin{itemize}
    
        \item $\lim_{\eta \to 0} \phi(\mathrm{Abs}_\mathbb{S}(\Scal,\eta)) = 0$,
        \item $\mathrm{Abs}_\mathbb{S}$ is asymptotically optimal,
        \item Algorithm~\ref{alg:algo} and $\mathrm{Abs}_\mathbb{S}$ solve Problem~\ref{prob:prob} in finite time.
    \end{itemize}
\end{theorem}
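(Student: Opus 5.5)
Only the first bullet needs new work. Once $\lim_{\eta\to0}\phi(\mathrm{Abs}_\mathbb{S}(\Scal,\eta))=0$ is established, Theorem~\ref{thm:asymptotic_optimality} yields asymptotic optimality of $\mathrm{Abs}_\mathbb{S}$. Corollary~\ref{cor:near_optimal} then yields finite-time termination of Algorithm~\ref{alg:algo} with a solution of Problem~\ref{prob:prob}. Soundness of the SMDP abstraction is taken from \cite{gracia2025beyond}, so the plan reduces to bounding the Wasserstein diameter of each $\Gamma_{s,a}^{\mathbb{S},\eta}$ uniformly over $s\in S_\safe^\eta$ and $a\in A$ by a quantity that vanishes with $\eta$.

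Fix $s\in S^\eta_\safe$, $a\in A$, and two elements $\gamma_{\theta},\gamma_{\theta'}\in\Gamma_{s,a}^{\mathbb{S},\eta}$ with $\theta=(\theta_c)_{c\in C^\eta}$ and $\theta'=(\theta'_c)_{c\in C^\eta}$. By \eqref{eq: SMDP exact successor prob}, both measures are mixtures with the same weights $P_W(c)$: $\gamma_\theta=\sum_{c}P_W(c)\,\theta_c$ and $\gamma_{\theta'}=\sum_c P_W(c)\,\theta'_c$, where $\theta_c,\theta'_c\in\Pcal(q_{s,a,c})$. I would couple them cluster by cluster, taking $\pi:=\sum_c P_W(c)\,(\theta_c\otimes\theta'_c)\in\Pi(\gamma_\theta,\gamma_{\theta'})$. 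Under $\pi$, both points of each sampled pair lie in the same cluster $q_{s,a,c}$. Hence
\[
\W(\gamma_\theta,\gamma_{\theta'})\le\sum_{c}P_W(c)\,\mathrm{diam}(q_{s,a,c})\le\max_{c\in C^\eta}\mathrm{diam}(q_{s,a,c}).
\]
(If some $P_W(c)=0$, those terms vanish and cause no difficulty.)

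It remains to bound the cluster diameters. Each $s'\in q_{s,a,c}$ lies in $\mathcal{R}(\rho^{-\eta}(s),a,c)$, so $s'=f(x,a,w)$ for some $x\in\rho^{-\eta}(s)$ and $w\in c$. By Assumption~\ref{ass:lipschitz}, any two such points satisfy $\|f(x,a,w)-f(x',a,w')\|\le L_f(\|x-x'\|+\|w-w'\|)\le L_f(2\eta+2\eta)$, using $\mathrm{diam}(b)\le2\eta$ for $b\in B^\eta$ and $\mathrm{diam}(c)\le 2\eta$ for $c\in C^\eta$. Thus $\mathrm{diam}(q_{s,a,c})\le\mathrm{diam}(\mathcal{R}(\rho^{-\eta}(s),a,c))\le 4L_f\eta$, and so $\phi(\mathrm{Abs}_\mathbb{S}(\Scal,\eta))\le 4L_f\eta\to0$.

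I expect the main subtlety to be bookkeeping rather than analysis. First, the representative points in $q_{s,a,c}$ are themselves points of the reachable set; this is fine because $q_{s,a,c}\subseteq\mathcal{R}(\cdot)$ by definition, so no extra $\eta$ term appears. Had clusters instead been defined as the representatives of regions intersecting $\mathcal{R}$, a $+4\eta$ term would appear, which still vanishes. Second, the Wasserstein distance is computed in $X$ with the Euclidean metric, so the measures on $S^\eta$ must be viewed as measures on $X$, consistent with the convention in Basic Notation. Third, the bound is uniform in $(s,a)$, and the states $s\notin S_\safe^\eta$ are excluded by Definition~\ref{def:ambiguity_diameter}, since their ambiguity sets are singletons anyway.
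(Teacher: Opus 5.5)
Your proposal is correct and follows essentially the same route as the paper. You mix per-cluster couplings with weights $P_W(c)$ to bound $\W(\gamma,\gamma')$ by the largest cluster diameter, bound that diameter via the Lipschitz estimate on $\mathcal{R}(\rho^{-\eta}(s),a,c)$, and then invoke Theorem~\ref{thm:asymptotic_optimality} and Corollary~\ref{cor:near_optimal} for the other two items. The differences are cosmetic: you use the product coupling $\theta_c\otimes\theta'_c$ where the paper uses an optimal coupling, and you get $4L_f\eta$ instead of the paper's $(4L_f+2)\eta$, with the extra additive slack mattering only under the alternative ``regions intersecting the reach set'' reading of the clusters, which you already handle.
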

The proof of Theorem~\ref{thm:optimality_smdp} first shows that the first item in the theorem holds. This implies that Theorem~\ref{thm:asymptotic_optimality} and Corollary~\ref{cor:near_optimal} also hold, proving the second and third items. In order to show that $\phi(\mathrm{Abs}_\mathbb{S}(\Scal,\eta))\to 0$, the proof first bounds the Wasserstein distance between conditional distributions on each cluster $q_{s,a,c}$  of $\Ucal^{\mathbb{S},\eta} = \mathrm{Abs}_\mathbb{S}(\Scal,\eta)$. The bounds for all clusters are then combined to obtain a bound on the ambiguity diameter $\phi(\mathrm{Abs}_\mathbb{S}(\Scal,\eta))$, which is then shown to vanish with $\eta$, as the diameters of the clusters $q_{s,a,c}$ shrink when the partitions $S^\eta$ and $C^\eta$ are refined.

\subsubsection*{Smaller SMDP Abstraction}

Here we show that, when abstracting System~$\Scal$ to an SMDP, there is no need to partition the entirety of set $\Xabs$, but only the safe set $\Xsafe$. The key is to treat $\Xunsafe$ as a single unsafe set in $B^\eta$, and letting its representative point be some $s_\unsafe \in \Xunsafe$. This reduces the size of the abstraction, alleviating memory and computational complexity. The following theorem shows that performing RDP on such an SMDP yields the same results, i.e., the same controller and performance bounds, as if we used an SMDP as per Definition~\ref{def:mdp_st_abstraction}.
\begin{theorem}[SMDP Abstraction with Smaller Partition]
\label{thm:optimality_smdp_smaller_partition}
    Consider System $\Scal$, the $R$-respecting partition $B^\eta$ and the partition $C^\eta$ of $W$. Let $\widetilde  B^\eta := \big( \cup_{b\in B^\eta: b \subseteq \Xsafe} \{b\} \big) \cup\{\Xunsafe\}$, $\widetilde S^\eta := \big( S^\eta \cap \Xsafe\big) \cup \{s_\unsafe\}$, with $s_\unsafe$ being the representative point of the region $\Xunsafe$ Let $\tilde \rho$ be the same as $\rho$ for states $s\in \Xsafe$, $\tilde \rho^{\eta}(x) = s_\unsafe$ for all $x\in \Xunsafe$ and $\tilde\rho^{-\eta}(s_\unsafe) = \Xunsafe$. Define the abstraction function $\widetilde{\mathrm{Abs}}_{\mathbb{S}}$ that abstracts System $\Scal$ into the SMDP $\widetilde \Ucal^{\mathbb{S}, \eta} := (\widetilde S^\eta, A,\widetilde \Gamma^{\mathbb{S}, \eta}, \AP, \widetilde L)$ as per Definition~\ref{def:mdp_st_abstraction}, but using the sets $\widetilde S^\eta$ and $\widetilde B^\eta$ and the function $\tilde \rho$ defined here. Then, 
    \begin{itemize}
        \item $\widetilde{\mathrm{Abs}}_\mathbb{S}$ is asymptotically optimal,
        \item Algorithm~\ref{alg:algo} and $\widetilde{\mathrm{Abs}}_\mathbb{S}$ solve Problem~\ref{prob:prob} in finite time.
    \end{itemize}
\end{theorem}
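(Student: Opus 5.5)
The plan is to reduce Theorem~\ref{thm:optimality_smdp_smaller_partition} to Theorem~\ref{thm:optimality_smdp}. I will show that, for every $\eta$, running RDP on $\widetilde\Ucal^{\mathbb{S},\eta}$ yields exactly the same pessimistic and optimistic value functions on $\Xsafe\times Z$, and the same strategy on safe states, as RDP on $\Ucal^{\mathbb{S},\eta}$. Asymptotic optimality then transfers directly, and completeness follows from Corollary~\ref{cor:near_optimal}.

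\textbf{Step 1 (unsafe states are irrelevant).} Since $\varphi=\square\Prop_\safe\land\varphi'$, every label of a state in $\Xunsafe$ sends the DFA to the absorbing, non-accepting state $z_\unsafe$. By definition of the ambiguous Bellman operators, $\underline\Bcal^\eta[v](x,z)=\overline\Bcal^\eta[v](x,z)=0$ for $x\notin\Xsafe$. A short induction on $k$ then gives $\underline v_k(x,z)=\overline v_k(x,z)=0$ for all $x\in\Xunsafe$ and all $k\ge 1$, and for $k=0$ as well, since $z_\unsafe\notin Z_{\rm acc}$ and unsafe DFA successors are always $z_\unsafe$. The only place unsafe states enter a safe-state backup is through $v^\eta(\cdot,\delta(z,L(\cdot)))$ at successors in $\Xunsafe$. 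There the DFA component is $z_\unsafe$, and the value is $0$ in both abstractions. This holds because the minorizer over the single region $\Xunsafe$ is an infimum of zeros.

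\textbf{Step 2 (ambiguity sets agree after lumping).} For $s\in\widetilde S^\eta\cap\Xsafe=S^\eta_\safe$, I will compare the clusters. The new cluster is $\tilde q_{s,a,c}=(q_{s,a,c}\cap\Xsafe)\cup\{s_\unsafe\}$ when $\mathcal R(\rho^{-1}(s),a,c)$ meets $\Xunsafe$, and $q_{s,a,c}$ otherwise. (Here I read clusters as the representatives of regions hit by the reach set, consistent with soundness.) Consider the map that sends every unsafe representative to $s_\unsafe$ and fixes safe ones. Its pushforward maps $\mathcal P(q_{s,a,c})$ onto $\mathcal P(\tilde q_{s,a,c})$. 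By \eqref{eq: SMDP exact successor prob}, the pushforward of $\Gamma^{\mathbb{S},\eta}_{s,a}$ is therefore exactly $\widetilde\Gamma^{\mathbb{S},\eta}_{s,a}$.

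Now consider the integrand $v^\eta(\cdot,\delta(z,L(\cdot)))$. It is $0$ on all unsafe representatives in both models, and it coincides on safe representatives, because the safe regions of $B^\eta$ and $\widetilde B^\eta$ are the same and $\rho=\tilde\rho$ there. Hence
\[
\min_{\gamma\in\Gamma^{\mathbb{S},\eta}_{s,a}}\int v^\eta\,d\gamma=\min_{\tilde\gamma\in\widetilde\Gamma^{\mathbb{S},\eta}_{s,a}}\int \tilde v^\eta\,d\tilde\gamma ,
\]
and likewise for the maximum. By induction on the RDP iterations of Proposition~\ref{prop:RDP}, $\underline v_k,\overline v_k$ coincide for both abstractions on $\Xsafe\times Z$ and vanish on $\Xunsafe$. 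Consequently $\underline p^\eta,\overline p^\eta$ coincide, and the argmax sets coincide, so the same controller $\kappa^\eta$ is obtained. Soundness of $\widetilde\Ucal^{\mathbb{S},\eta}$ follows from the same pushforward argument applied to $\Tau^\eta$.

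\textbf{Step 3 (conclusion).} By Theorem~\ref{thm:optimality_smdp}, the outputs of RDP on $\mathrm{Abs}_\mathbb{S}(\Scal,\eta)$ converge uniformly to $\mathrm{Pr}^*_{\cdot}[\varphi,T]$. By Step 2 the outputs of $\widetilde{\mathrm{Abs}}_\mathbb{S}$ are identical, so they converge as well, which proves asymptotic optimality. Corollary~\ref{cor:near_optimal} then gives finite termination of Algorithm~\ref{alg:algo}.

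Note that one cannot simply apply Theorem~\ref{thm:asymptotic_optimality} here. The region $\Xunsafe$ does not shrink, so the Wasserstein ambiguity diameter of $\widetilde\Ucal$ need not be defined the same way, and this is exactly why the reduction argument is used. The main obstacle I expect is Step 2: carefully verifying that the cluster structure and the pushforward are exactly onto, including the edge case where several clusters contain unsafe points. I will handle it cluster by cluster, using the fact that the pushforward of a product of simplices over $c$ is the product of the pushed simplices, with the weights $P_W(c)$ unchanged.
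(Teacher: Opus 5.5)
Your proposal is correct and follows essentially the paper's proof: an induction over the RDP iterates shows that the reduced SMDP reproduces the pessimistic and optimistic value functions of the full SMDP on $\Xsafe\times Z$ (with values vanishing at $z_\unsafe$), after which asymptotic optimality carries over and completeness follows from Corollary~\ref{cor:near_optimal}. The only difference is how you get the one-step identity: you push $\Gamma^{\mathbb{S},\eta}_{s,a}$ forward under the map that lumps the unsafe representatives into $s_\unsafe$, which also handles the max case and soundness at once and makes explicit the ``regions hit by the reach set'' reading of the clusters that the paper tacitly relies on, whereas the paper uses the closed form $\sum_{c\in C^\eta}P_W(c)\min_{s'\in q_{s,a,c}}\underline v_k(s',\delta(z,L(s')))$ for the inner minimum and compares clusters directly.
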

%

\subsection{IMDPs}
An IMDP $\Ucal^{\mathbb{I}}$ is a UMDP class whose ambiguity set $\Gamma^{\mathbb{S}}$ constrains the probability of transitioning to each state by an interval. In the following definition we describe the IMDP abstraction procedure of \cite{10347405}.
\begin{definition}[IMDP Abstraction \cite{10347405}]
    \label{def:imdp_abstraction}
    Consider System $\Scal$, the $R$-respecting partition $B^\eta$ and the partition $C^\eta$ of $W$. For each $(s, a, s') \in S_\safe^\eta \times A \times S^\eta$, let 
    \begin{align*}
        &\underline P^\eta(s,a,s') \\
        &:=\sum_{c \in C^\eta}P_W(c) \boldsymbol{1}\Big(\mathcal{R}(\rho^{-\eta}(s), a, c) \subseteq \rho^{-\eta}(s') \Big),\\
        &\overline P^\eta(s,a,s') \\
        &:=\sum_{c \in C^\eta}P_W(c) \boldsymbol{1}\Big(\mathcal{R}(\rho^{-\eta}(s), a, c) \cap \rho^{-\eta}(s') \neq \emptyset \Big),
    \end{align*}
    where $\boldsymbol{1}$ returns $1$ if its argument is true and $0$ otherwise.
    We define the \emph{Interval MDP} (IMDP) abstraction of System $\Scal$ as the tuple $\Ucal^{\mathbb{I}, \eta} = (S^\eta, A,\Gamma^{\mathbb{I}, \eta}, \AP, L)$, where 
    \begin{multline}
        \label{eq:Gamma_poly}
        \Gamma_{s,a}^{\mathbb{I}, \eta} := \big\{ \gamma \in \mathcal{P}(S^\eta) : \: \forall s' \in S^\eta,\\
        \underline P^\eta(s,a,s') \le \sum_{s'\in S^\eta} \gamma(s') \le \overline P^\eta(s,a,s')
         \big\}
    \end{multline}
for all $s\in S_\safe^\eta$ and $a\in A$, and $\Gamma_{s, a}^{\mathbb{I}, \eta} := \{\delta_s\}$ for all $s \notin S^\eta_\safe$ for all $a \in A$.
\end{definition}
\cite{10347405} proves that such an IMDP abstraction is Sound as per Definition~\ref{dfn:soundness}. The following theorem shows that this abstraction procedure might fail under very mild assumptions.
\begin{theorem}
\label{thm:optimality_imdp}
    Consider the abstraction procedure $\mathrm{Abs}_\mathbb{I}$ that abstracts $\Scal$ into an IMDP via a uniform partition and let $A = \{a\}$. Assume that $f$ is not one-step contractive, namely, there exist a set $X_0 \subset \Xsafe$ and $w \in W$ such that $\mathrm{diam}(\mathcal{R}(X_0,a,\{w\})) > \mathrm{diam}(X_0)$. Then,
    \begin{itemize}
        \item $\lim_{\eta\to0} \phi(\mathrm{Abs}_\mathbb{I}(\Scal,\eta)) \neq 0$,
        \item $\mathrm{Abs}_\mathbb{I}$ is not asymptotically optimal,
        \item There exists a specification $\varphi$ and $\varepsilon > 0$ for which Algorithm~\ref{alg:algo} does not solve Problem~\ref{prob:prob}.
    \end{itemize}
\end{theorem}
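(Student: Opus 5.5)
The plan has three parts: show that the ambiguity diameter does not vanish, then build a concrete specification on which the IMDP value bounds keep a fixed gap, and finally read off the failure of Algorithm~\ref{alg:algo}. Throughout I use only the non-contractiveness hypothesis and the structure of the interval bounds in Definition~\ref{def:imdp_abstraction}.

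\textbf{Step 1: the diameter stays bounded away from zero.} Write $\delta := \mathrm{diam}(\mathcal{R}(X_0,a,\{w\})) - \mathrm{diam}(X_0) > 0$. For each $\eta$, pick a state $s\in S^\eta_\safe$ whose region $\rho^{-\eta}(s)$ lies near $X_0$, and a cell $c\in C^\eta$ containing $w$ with $P_W(c)>0$. Continuity of $f$ (Assumption~\ref{ass:lipschitz}) controls how the reach set moves as the cell shrinks.

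The key observation is this. For every $s'$ whose region meets $\mathcal{R}(\rho^{-\eta}(s),a,c)$ without containing it, Definition~\ref{def:imdp_abstraction} gives $\underline P^\eta(s,a,s')$ no contribution from $c$, while $\overline P^\eta(s,a,s')$ receives the full mass $P_W(c)$. Hence the IMDP may place the whole mass $P_W(c)$ on any single such $s'$, as long as the remaining constraints are satisfied. I would take two such admissible distributions, concentrating the $c$-mass on two representative points $s'_1,s'_2$ that are at distance about $\mathrm{diam}(\mathcal{R})$ apart, and fixing the remaining mass by a common feasible allocation.

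This step is the main obstacle, for two reasons. First, feasibility of the interval polytope must be checked when mass is shifted. Second, $P_W(c)\to0$ as $C^\eta$ is refined, so single-cell arguments give vanishing Wasserstein distance. What I would try first is to take the union over all cells $c$ near $w$ in a fixed neighborhood $N$ of positive $P_W$-mass. Because the reach sets of neighboring cells overlap heavily while the regions $\rho^{-\eta}(s')$ are tiny, the interval bounds let the adversary shift mass of order $P_W(N)$ across a distance of order $\delta$. This yields $\mathrm{diam}_\W(\Gamma^{\mathbb I,\eta}_{s,a}) \ge \kappa>0$ uniformly in $\eta$. The lower bounds $\underline P^\eta$ vanish for regions smaller than the reach set, and this is exactly where non-contractiveness enters: a reach set strictly larger than a region can never be contained in a single region, so $\underline P^\eta(s,a,\cdot)\equiv 0$ on the relevant cells.

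\textbf{Step 2: a specification with a persistent gap.} Choose a region of interest $r$ meeting one extreme part of the reach set but not the other; by Assumption~\ref{ass:measure_0} it may be taken with null boundary. Take the reachability formula $\varphi=\square\Prop_\safe\land\bigcirc\Prop_r$ with $T=1$. By Proposition~\ref{prop:RDP},
\[
\overline p^\eta(x_0)-\underline p^\eta(x_0)=\max_{\gamma\in\Gamma^{\mathbb{I},\eta}_{s,a}}\gamma(r)-\min_{\gamma\in\Gamma^{\mathbb{I},\eta}_{s,a}}\gamma(r),
\]
and the two distributions from Step~1 make this at least a fixed constant $\kappa'>0$ for all small $\eta$. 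Since $A=\{a\}$, the controller is trivial and $\mathrm{Pr}^{\kappa^\eta}=\mathrm{Pr}^*$. Therefore the failure of asymptotic optimality must come from the bounds: $\underline p^\eta$ and $\overline p^\eta$ cannot both converge uniformly to $\mathrm{Pr}^*$, which gives the second item.

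\textbf{Step 3: the algorithm does not terminate.} Choosing $\varepsilon<\kappa'$, the loop condition in Algorithm~\ref{alg:algo} holds at every iteration $i$, since $\eta_i\to0$ and the gap stays at least $\kappa'$. Hence the algorithm never terminates, which is the third item.
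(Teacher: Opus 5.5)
\textbf{The gap is in Step~1, which carries the whole argument.} Your first construction moves the $c$-mass between two points of a single reach set $\mathcal{R}(\rho^{-\eta}(s),a,c)$. By Assumption~\ref{ass:lipschitz} that set has diameter at most $4L_f\eta$, so such shifts cost only $O(\eta)$ in Wasserstein distance. This is exactly the computation in the proof of Theorem~\ref{thm:optimality_smdp}, which shows these shifts are harmless.

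Your repair pools the cells $c$ in a fixed neighborhood $N$ and asserts that mass of order $P_W(N)$ can be moved a distance of order $\delta$. This is stated, not proved, and $\delta$ is the wrong scale. It measures how $f(\cdot,a,w)$ stretches the fixed set $X_0$, whereas $\Gamma^{\mathbb{I},\eta}_{s,a}$ involves only initial states in the single tiny region $\rho^{-\eta}(s)$. Nothing in the hypothesis on $X_0$ tells you how far the adversary can relocate mass out of that cell.

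Heavy overlap of neighbouring reach sets is also not the point by itself: the SMDP clusters $q_{s,a,c}$ overlap just as much, and their ambiguity diameter vanishes. What is missing is the mechanism specific to the interval constraints of Definition~\ref{def:imdp_abstraction}. Those constraints are decoupled from the disturbance cells, so each $P_W(c)$ is credited to $\overline P^\eta(s,a,s')$ for \emph{every} region that $\mathcal{R}(\rho^{-\eta}(s),a,c)$ meets. Once $\underline P^\eta(s,a,\cdot)\equiv 0$, the only obstruction to concentrating all mass in a set is whether the upper bounds there sum to at least one.

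\textbf{This is the counting the paper uses, and it also repairs your Step~2.} The paper fixes $r_\mathrm{goal}$ in advance. It arranges the example so that the cells $C_1^\eta$ whose reach sets lie in $r_\mathrm{goal}$ carry mass $P_W(\cup_{c\in C_1^\eta}c)=1/2$, and so that every reach set meets at least two regions. This gives $\sum_{s'\,:\,\rho^{-\eta}(s')\subseteq r_\mathrm{goal}}\overline P^\eta(s,a,s')\ge 2\cdot\tfrac12=1$. Hence some $\gamma\in\Gamma^{\mathbb{I},\eta}_{s,a}$ puts all its mass in $r_\mathrm{goal}$ (and some $\gamma'$ puts none there), so $\underline p^\eta(x)=0$ and $\overline p^\eta(x)=1$ for all small $\eta$. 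Mass is relocated across the support of $\Tau(\cdot\mid x,a)$, a scale independent of $\eta$, rather than within one reach set.

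In your Step~2, by contrast, the two distributions are never constructed, and $r$ is chosen relative to an $\eta$-dependent reach set, although the specification must be fixed before refinement. A lower bound on a Wasserstein diameter would not by itself yield a fixed set on which two admissible distributions differ by a fixed amount.

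Finally, you do not need a direct proof of the first item. Once the persistent value gap is established, $\phi(\mathrm{Abs}_\mathbb{I}(\Scal,\eta))\not\to 0$ follows from the contrapositive of Theorem~\ref{thm:asymptotic_optimality}, which is how the paper obtains it.
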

The proof is by counterexample. We show that for specific system dynamics and a simple reach-avoid specification, strategy synthesis on the IMDP abstraction yields trivial bounds, i.e., $\underline p_1(x_0) = 0$ and $\overline p_1(x_0) = 1$ in the probability of satisfying the specification for some state $x_0$. We show that these bounds cannot be tightened by refining the partitions.

We note that, even though it is common practice to employ partitions of different granularities, adapting Theorem~\ref{thm:optimality_imdp} to consider such partitions becomes challenging. However, we highlight that the key phenomenon in the proof of the theorem, i.e., some reachable sets intersecting more than one region in the partition, would still happen in that case, thus providing evidence that the theorem extends to such partitions.

\section{Empirical Validation}
\label{sec:empirical_validation}

We validate our theoretical results through two case studies: a $1$D temperature regulation benchmark with a reach-avoid goal and a $2$D nonlinear cart model with an \LTLf goal. For more details on the case studies, see
Appendix~\ref{app:additional_results}.

The $1$D system is given in \cite{gracia2025data}, and its results are shown in Figure~\ref{fig:comparison}. We compare the bounds in the satisfaction probabilities of SMDPs and IMDPs. The goal is to regulate the system's temperature to the interval $r_\mathrm{goal} =[21.25, 20.75]$ under $T = 20$ steps while staying within the bounds $\Xsafe = [19, 22]$. Figure~\ref{fig:comparison}(b) shows the iterations of Algorithm~\ref{alg:algo}, plotting the termination criterion for IMDP and SMDP abstractions as a function of the iteration number. As expected, while the SMDP-based approach asymptotically converges to $0$, shrinking the error bound within $2\%$ in $8$ iterations, the results of the IMDP approach do not improve via refinement. Figure~\ref{fig:comparison}(a) shows the bounds in the satisfaction probability for the $8$th iteration, with $|S^\eta| = 1280$ and $|C^\eta| = 256$.
\begin{figure}[t]
    \centering
    \begin{subfigure}{0.49\linewidth}
        \centering
        \includegraphics[width=\linewidth]{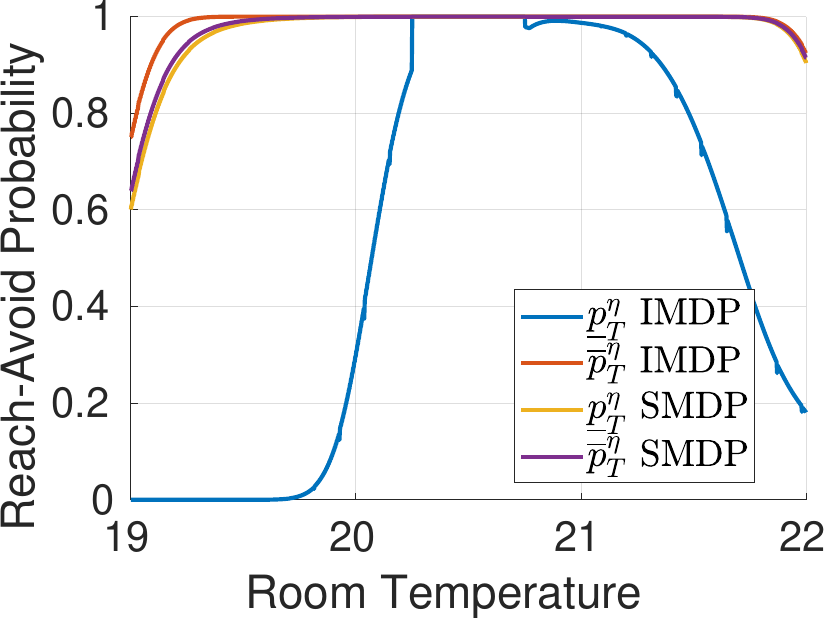}
        \caption{}
    \end{subfigure}
    \hfill
    \begin{subfigure}{0.49\linewidth}
        \centering
        \includegraphics[width=\linewidth]{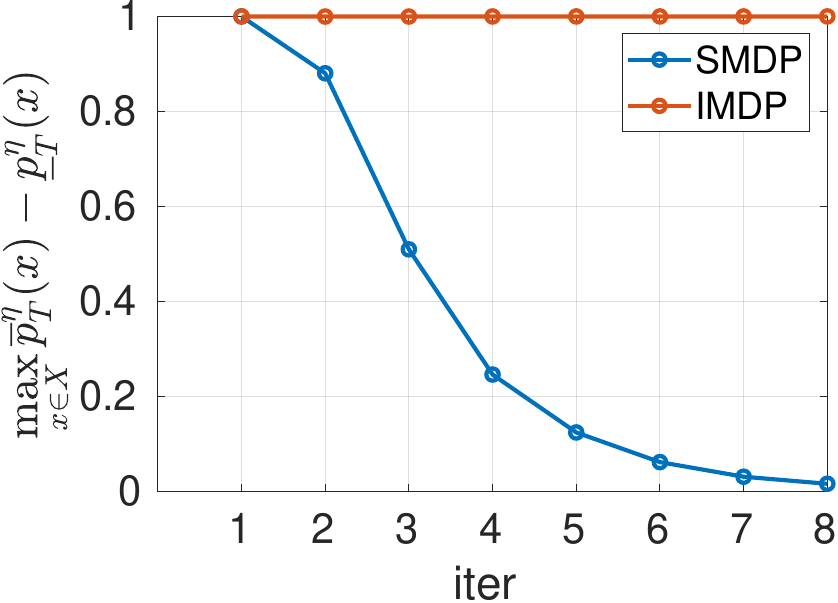}
        \caption{}
    \end{subfigure}
    \caption{Temperature regulation benchmark: (a) Lower and upper bounds in the satisfaction probabilities for fixed discretization. (b) Gap in the satisfaction probability bounds v.s. iteration number in Algorithm~\ref{alg:algo}}
\label{fig:comparison}
\end{figure}
For further illustrations, we also studied a $2$D nonlinear model of a cart under an \LTLf specification $\varphi_2$ taken from \cite{vazquez2018learning}. 
The goal is to reach the $\mathrm{charge}$ station while remaining in $\Xsafe$, which excludes the obstacles, and if the system travels through a region with $\mathrm{water}$, then it must first dry in the $\mathrm{carpet}$ before reaching the $\mathrm{charge}$ station. We seek to satisfy the specification within $T=60$ time steps. Figure~\ref{fig:2d} shows the setup, the satisfaction probability bounds obtained with the SMDP and IMDP approaches, for two partitions, as well as closed-loop trajectories. The trend is similar to the one observed for the $1$D case study. 
\begin{figure}[h]
    \centering

    \begin{subfigure}{0.49\linewidth}
        \centering
        \includegraphics[width=\linewidth]{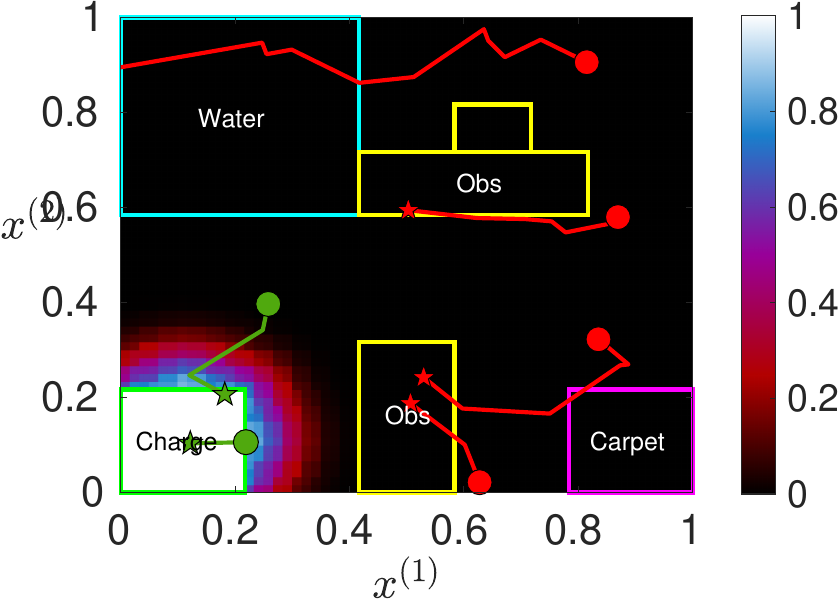}
        \caption{IMDP coarse}
    \end{subfigure}
    \begin{subfigure}{0.49\linewidth}
        \centering
        \includegraphics[width=\linewidth]{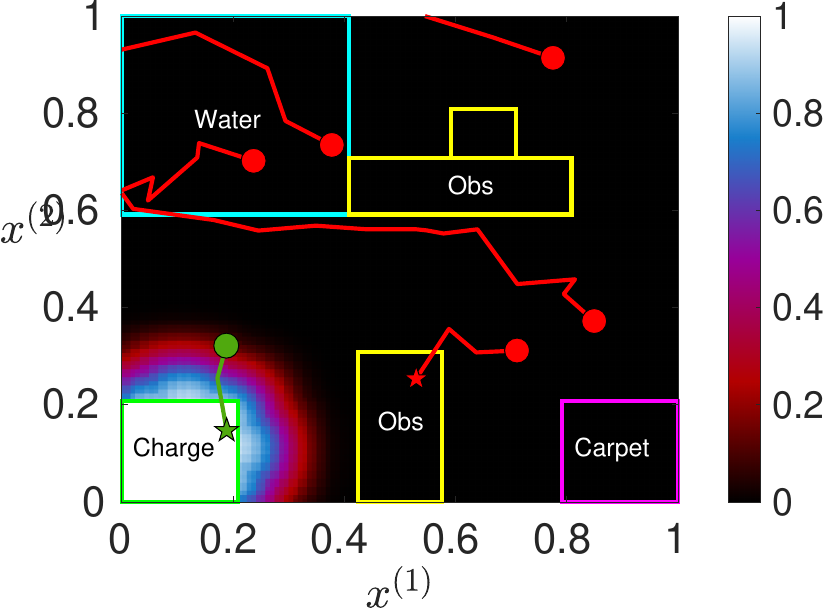}
        \caption{IMDP fine}
    \end{subfigure}

    \begin{subfigure}{0.49\linewidth}
        \centering
        \includegraphics[width=\linewidth]{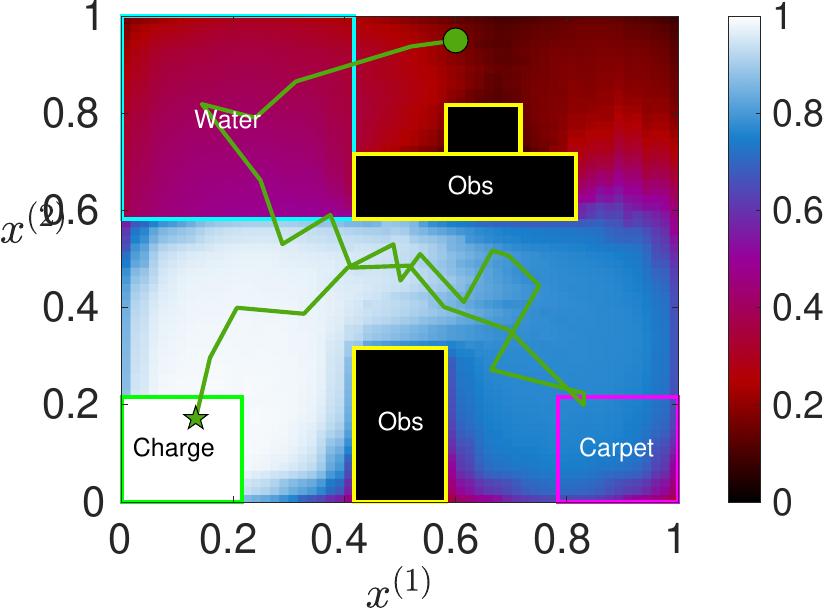}
        \caption{SMDP coarse}
    \end{subfigure}
    \begin{subfigure}{0.49\linewidth}
        \centering
        \includegraphics[width=\linewidth]{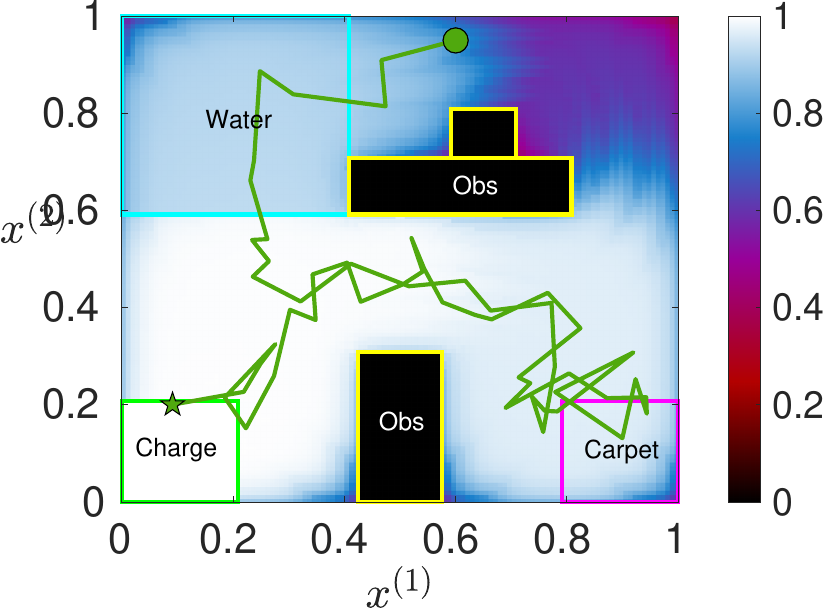}
        \caption{SMDP fine}
    \end{subfigure}
    \caption{$2$D case study. Lower bound in the probability of satisfying $\varphi_2$ within $T = 60$ time steps. The color indicates the value of $\underline p_T^\eta(x_0)$. In green, trajectories that satisfy $\varphi_2$. In red, those that do not.}
    \label{fig:2d}
\end{figure}

\appendix
\subsection{Proofs}
\label{app:proofs}

\begin{proof}[Proof of Lemma~\ref{lemma:uniform_convergence_kernel}]
    Pick $x,y\in X$, $a\in A$ and a coupling $\pi \in \arg\min_{\pi \in \Pi(\Tau(\cdot \mid x,a), \Tau(\cdot \mid y,a))} \int_{X\times X} \|x' - y'\| d\pi(x',y')$. Then, it holds that $|\int_X v d\Tau(\cdot \mid a, x) - \int_X v d\Tau(\cdot \mid a, y)| = |\int_{X\times X} v(x') d\pi(x',y') - \int_{X\times X} v(y') d\pi(x',y')| \le |\int_{X\times X} (v(x') - v(y')) d\pi(x',y')| \le \int_{X\times X} |v(x') - v(y')| d\pi(x',y') \le \int_{X\times X} \mu(\|x'-y'\|) d\pi(x',y')$. By Jensen's inequality, the previous expression is bounded by $\mu( \int_{X\times X} \|x'-y'\| d\pi(x',y') ) = \mu(\W(\Tau(\cdot \mid x,a), \Tau(\cdot \mid y,a)))$. Standard results on Wasserstein distances give \cite{villani2009optimal} $\W(\Tau(\cdot \mid x,a), \Tau(\cdot \mid y,a)) \le L_f \|x-y\|$, which concludes the proof.
\end{proof}

\begin{proof}[Proof of Lemma~\ref{lemma:uc_value_functions}]
    We  proceed by induction. Assume that $v_k(\cdot,z)$ is uniformly continuous on $X$ for each $z$, which is trivially satisfied at $k = 0$. We want to prove uniform continuity of $v_{k+1}$. Let $x,y \in X$ and $z \in Z$, and note that 
    \begin{multline*}
        |v_{k+1}(x,z) - v_{k+1}(y,z)| = |\Bcal[v_k](x,z) - \Bcal[v_k](y,z)| \\
    = |\max_{a\in A} \Bcal_a[v_k](x,z) - \max_{a\in A} \Bcal_a[v_k](y,z)|
    \end{multline*}
    Let $a^* \in\arg\max\{ \Bcal_a[v_k](x,z) : a\in A\}$. Then, 
\begin{multline*}
    |\max_{a\in A} \Bcal_a[v_k](x,z) - \max_{a\in A} \Bcal_a[v_k](x,z)|\\
    \le | B_{a^*}^\varphi[v_k](x,z) - B_{a^*}^\varphi[v_k](x,z)|\\
    \le \max_{a\in A} |\Bcal_a[v_k](x,z) - \Bcal_a[v_k](x,z)|.
\end{multline*}
Next, we bound the right-hand side. Let $D_\epsilon := \big( \cup_{r\in R}\partial r \big) \oplus B(0,\epsilon)$ be the $\epsilon$-thickening of the zero-measure set $\cup_{r\in R}\partial r$. Fix $z\in Z$ and note that $v_z := v_k(\cdot, \delta(z,L(\cdot))) \in \Vcal$ is uniformly continuous on each $r \in R$, because $v_k(\cdot, z')$ is uniformly continuous on $X$ for each $z'\in Z$ and $\delta(z, L(\cdot))$ is constant on each $r\in R$. This result, in conjunction with finiteness of $|R|$, makes $v_z$ uniformly continuous on $X\setminus D_\epsilon$, with some modulus of continuity $\mu_{v_z, \epsilon}$ that is concave and non-decreasing. By \cite{mcshane1934extension}, there exists a function $v_\epsilon' :X\rightarrow \reals$ that is uniformly continuous on $X$, agrees with $v_z$ on $X\setminus D_\epsilon$ and for which $\mu_{v_z,\epsilon}$ is also a valid modulus of continuity. We define $v_\epsilon\in\Vcal$ as $v_\epsilon(x) := \max\{0,\min\{v_\epsilon'(x), 1\}\}$ for all $x\in X$, and note that $\mu_{v_z, \epsilon}$ is also a modulus of continuity of $v_\epsilon$. Define also $r_\epsilon \in \Vcal$ as $r_\epsilon := v_z - v_\epsilon$, which is zero on $X \setminus D_\epsilon$. Pick $a\in A$. Since $|r_\epsilon| \le 1$, then
    \begin{align*}
        &|\Bcal_a[v_k](x,z) - \Bcal_a[v_k](y,z)| \le\\
        &\le | \int_{X} v_z  d\Tau(\cdot \mid x, a) -\int_{X} v_z d\Tau(\cdot \mid y, a) |\\
        &\le | \int_{X} v_\epsilon d\Tau(\cdot \mid x, a) - \int_{X} v_\epsilon d\Tau(\cdot \mid y, a) | +\\
        &+ | \int_{X} r_\epsilon d\Tau(\cdot \mid x, a) - \int_{X} r_\epsilon d\Tau(\cdot \mid y, a) |\\
        &\le | \int_{X} v_\epsilon d\Tau(\cdot \mid x, a) - \int_{X} v_\epsilon d\Tau(\cdot \mid y, a) | +\\
        &+\Tau(D_\epsilon \mid x, a) + \Tau(D_\epsilon \mid y, a).
    \end{align*}
    By Proposition~\ref{lemma:uniform_convergence_kernel}, 
    $| \int_{X} v_\epsilon d\Tau(\cdot \mid x, a) - \int_{X} v_\epsilon d\Tau(\cdot \mid y, a) | \le \mu_{v_z,\epsilon}(L_f\|x-y\|)$. 
    %
    %
    Taking also into account Assumption~\ref{ass:density}, the bound becomes
    \begin{align*}
        | \int_{X} v_z d\Tau(\cdot \mid x, a) - \int_{X} v_z d\Tau(\cdot \mid y, a) |\\ \le \mu_{v_z,\epsilon}(L_f\|x-y\|) + 2\overline \tau \mathcal{L}(D_\epsilon).
    \end{align*}
    Let $\theta > 0$ and choose $\epsilon$ such that $2\overline \tau \mathcal{L}(D_\epsilon) \le \theta/2$. Given that $\epsilon$, pick $\beta>0$ such that $\mu_{v_z,\epsilon} (L_f\|x-y\|) \le \theta/2$ for all $\|x-y\| \le \beta$, which is always possible since $\mu_{v_z,\epsilon}$ is a valid modulus of continuity. It is evident that for all $x,y\in X$ such that $\|x - y\| \le\beta$ it holds that $| \int_{X} v_z d\Tau(\cdot \mid x, a) - \int_{X} v_z d\Tau(\cdot \mid y, a) | \le \theta$. Since such a $\beta$ exists for each $\theta > 0$ and $a\in A$, we conclude that $v_{k+1}(\cdot, z)$ is uniformly continuous on $X$. 
    Furthermore, this holds for each $z\in Z$.
    The proof is concluded by induction.
\end{proof}

\begin{proof}[Proof of Lemma~\ref{lemma:aux_2_term}]
    Similarly to the proof of Lemma~\ref{lemma:uc_value_functions}, let $D_\epsilon := \cup_{r\in R}\partial r \oplus B(0, \epsilon)$. Fix $z\in Z$ and $a\in A$. Note that $v_z$ is uniformly continuous on each $r \in R$, because $v(\cdot, z')$ is uniformly continuous on $X$ for each $z'\in Z$ and $\delta(z, L(\cdot))$ is constant on each $r\in R$. This result, in conjunction with finiteness of $|R|$, makes $v_z$ uniformly continuous on $X\setminus D_\epsilon$, with some modulus of continuity $\mu_{v_z, \epsilon}$ that is concave and non-decreasing. By \cite{mcshane1934extension}, there exists a function $v_\epsilon' :X\rightarrow \reals$ that is uniformly continuous on $X$, agrees with $v_z$ on $X\setminus D_\epsilon$ and for which $\mu_{v_z,\epsilon}$ is also a valid modulus of continuity. We define $v_\epsilon\in\Vcal$ as $v_\epsilon(x) := \max\{0,\min\{v_\epsilon'(x), 1\}\}$ for all $x\in X$, and note that $\mu_{v_z, \epsilon}$ is also a modulus of continuity for $v_\epsilon$. Define also $r_\epsilon \in \Vcal$ as $r_\epsilon = v_z - v_\epsilon$, which is zero on $X \setminus D_\epsilon$. By Proposition~\ref{lemma:uniform_convergence_kernel} and since $|r_\epsilon| \le 1$, 
    \begin{align*}
        &| \int_X v_z d\gamma_{x,a}^\eta - \int_X v_z d\Tau^\eta(\cdot \mid x, a) |\\
        &\le | \int_X v_\epsilon d\gamma_{x,a}^\eta - \int_X v_\epsilon d\Tau^\eta(\cdot \mid x, a) | +\\
        &+ | \int_X r_\epsilon d\gamma_{x,a}^\eta - \int_X r_\epsilon d\Tau^\eta(\cdot \mid x, a) |\\
        &\le \mu_{v_z,\epsilon}(\W(\gamma_{x,a}^\eta, \Tau^\eta(\cdot \mid x, a))) + \\
        &+ \gamma_{x,a}^\eta(D_\epsilon) + \Tau^\eta(D_\epsilon \mid x, a).
    \end{align*}
    By soundness of $\Ucal^\eta$, $\Tau^\eta(\cdot \mid x,a) \in \Gamma_{s,a}^\eta$. Since $\W(\gamma_{x,a}^\eta, \Tau^\eta(\cdot \mid x, a)) \le \mathrm{diam}_\W(\Gamma_{s,a}^\eta) \le \phi(\eta)$. Next, we bound $\gamma_{x,a}^\eta(D_\epsilon)$. The previous result implies that there exists a coupling $\pi \in \Pcal(X\times X)$ with marginals $\Tau^\eta(\cdot \mid x, a)$ and $\gamma_{x,a}^\eta$, implying that we can express $\gamma_{x,a}^\eta(D_\epsilon) = \pi(X \times D_\epsilon)$, and which has transport cost $\int_{X\times X} |x' - y'|d\pi(x',y') \le \phi(\eta)$. Now, the event $X \times D_\epsilon$ satisfies
    \begin{align*}
        \Xabs\times D_\epsilon &= \{(x,y)\in X\times X : y \in D_\epsilon\}\\
        &\subseteq \{(x',y')\in X\times X : x' \in D_{2\epsilon}\} \cup\\
        &\cup \{(x',y')\in X\times X : |x'-y'| > \epsilon\}. 
    \end{align*}
    Therefore, 
    \begin{align*}
        \gamma_{x,a}^\eta(D_\epsilon) \le \Tau^\eta(D_{2\epsilon} \mid x, a) + \pi(\{(x',y') : |x'-y'| > \epsilon\}),
    \end{align*}
    where Chebyshev's inequality gives us $\pi(\{(x',y') : |x'-y'| > \epsilon\}) \le (1/\epsilon)\int_{X\times X}|x' - y'|d\pi(x',y') \le \phi(\eta)/\epsilon$. Next, we bound $\Tau^\eta(D_{2\epsilon} \mid x, a)$. Noting that $\Tau^\eta(\cdot \mid x, a)$ and $\Tau(\cdot \mid x, a)$ are supported on $\Xabs$ and $\mathrm{diam}(b) \le 2\eta$ for all regions $b\in B^\eta$,
    \begin{align*}
        &\sum_{b\in B^\eta} \Tau^\eta(b\cap D_{2\epsilon}  \mid x, a) = \sum_{b\in B^\eta : b\cap D_{2\epsilon} \neq \emptyset} \Tau^\eta(b \cap D_{2\epsilon} \mid x, a)\\
        &\le \sum_{b\in B^\eta : b\cap D_{2\epsilon} \neq \emptyset} \Tau^\eta(b \mid x, a) = \Tau(\bigcup_{b\in B^\eta
        : b\cap D_{2\epsilon} \neq \emptyset} b \mid x, a)
        \\
        &\le \Tau(D_{2\epsilon + 2\eta}  \mid x, a) \le \overline \tau \mathcal{L}(D_{2\epsilon + 2\eta}).
    \end{align*}
    %
    %
    Putting the previous bounds together we obtain that, for $\epsilon>0$ small enough,
    \begin{align*}
        &| \int_X v_z d\gamma_{x,a}^\eta - \int_X v_z d\Tau^\eta(\cdot \mid x, a) |\\
        &\le \mu_{v_z,\epsilon}(\phi(\eta)) + 
        \phi(\eta)/\epsilon + \overline \tau (\mathcal{L}(D_{2\epsilon + 2\eta}) + \mathcal{L}(D_{\epsilon + 2\eta}) ).
    \end{align*}
    Let $\theta > 0$ and choose $\epsilon, \eta_0 > 0$ such that $\overline \tau (\mathcal{L}(D_{2\epsilon + 2\eta}) + \mathcal{L}(D_{\epsilon + 2\eta}) ) \le \theta/2$ for all $\eta \in (0, \eta_0]$. Given that $\epsilon$, pick $\eta_0' \in (0, \eta_0]$ such that $\mu_{v_z,\epsilon}(\phi(\eta)) + \phi(\eta)/\epsilon \le \theta/2$ for all $\eta \in (0, \eta_0']$, which is always possible since $\mu_{v_z,\epsilon}$ is a valid modulus of continuity. Note that picking $\eta_0' \le \eta_0 $ only makes the terms $\overline \tau (\mathcal{L}(D_{2\epsilon + 2\eta}) + \mathcal{L}(D_{\epsilon + 2\eta}) )$ smaller. It is evident that for all $\eta\in (0, \eta_0']$ it holds that $| \int_X v_z d\gamma_{x,a}^\eta - \int_X v_z d\Tau^\eta(\cdot \mid x, a) | \le \theta$. Since such a $\eta_0'$ exists for each $\theta > 0$ independently from $x\in \Xsafe$, we conclude that $\lim_{\eta\to 0}| \int_X v_z d\gamma_{x,a}^\eta - \int_X v_z d\Tau^\eta(\cdot \mid x, a) |=0$  uniformly in $x \in \Xsafe$. The proof is concluding by observing that $z\in Z$ and $a\in A$ were arbitrary choices.
\end{proof}

\begin{proof}[Proof of Lemma~\ref{lemma:aux_3_term}]
For each $r\in R$, let $\mu_{v,z,r}$ denote a modulus of continuity of $v_z$ over $r$, meaning that
\begin{align*}
    | v_z(x') - v_z(y') | \le \mu_{v,z,r}( \|x' - y'\| ),
\end{align*}
for all $x', y' \in r$. Let also $\mu_{v,z}(\eta) := \max_{r\in R} \mu_{v,z,r}(\eta)$ for all $\eta \ge 0$. Denote by $r(s') \in R$ the region of interest that contains $s'\in S^\eta$. Then, since $\Tau^\eta(\cdot \mid x, a)$ and $\Tau(\cdot \mid x, a)$ are supported on $\Xabs$,
\begin{multline*}
    |\int_X v_z d\Tau^\eta(\cdot \mid x, a) - \int_X v_z d\Tau(\cdot \mid x, a) |\\
    \le \sum_{b\in B^\eta} \bigg|  \Tau^\eta(b \mid x, a) \sup_{x'\in b} v_z(x') - \Tau(b \mid x, a) \inf_{y'\in b} v_z(y') \bigg|\\
    = \sum_{b \in B^\eta} \Tau(b \mid x, a) \sup_{x', y'\in b} | v_z(x')  -  v_z(y') |\\
    \le \sum_{s'\in S^\eta} \Tau(\rho^{-\eta}(s') \mid x, a) \mu_{v,z,r(s')}(2\eta)
    \le \mu_{v_z}(2\eta),
\end{multline*}
%
uniformly on $x\in \Xsafe$. Observing that $\lim_{\eta \to 0} \mu_v(2\eta) = 0$ concludes the proof.
\end{proof}

\begin{proof}[Proof of Lemma~\ref{lemma:uc_operator}]
Let $x \in \Xsafe$, $s = \rho(x)$ and $z \in Z$. Note that
\begin{multline*}
    |\underline \Bcal^\eta[v](x,z) - \Bcal[v](x,z)|\\
    = |\max_{a\in A} \underline \Bcal_a^{\eta}[v](x,z) - \max_{a\in A} \Bcal_a[v](x,z)|.
\end{multline*}
Let $a^* \in\arg\max\{\underline \Bcal_a^{\eta}[v](x,z) : a\in A\}$. Then, 
\begin{multline}
\label{eq:max_a}
    |\max_{a\in A} \underline \Bcal_a^{\eta}[v](x,z) - \max_{a\in A} \Bcal_a[v](x,z)|\\
    \le |\underline \Bcal_{a^*}^{\eta, \varphi}[v](x,z) - B_{a^*}^\varphi[v](x,z)|\\
    \le \max_{a\in A} |\underline \Bcal_a^{\eta}[v](x,z) - \Bcal_a[v](x,z)|.
\end{multline}
Therefore, it suffices to prove uniform convergence of $|\underline \Bcal_a^{\eta}[v](x,z) - \Bcal_a[v](x,z)|$ for all $a\in A$. Pick $a \in A$ and denote $$\gamma^{\eta*}_{x,a} \in \arg\min_{\gamma\in\Gamma_{s,a}^\eta}\int_X v^\eta(\cdot,\delta(z,L(\cdot))) d\gamma.$$ Let $v_z := v(\cdot,\delta(z,L(\cdot)))$ and $v_z^\eta := v^\eta(\cdot,\delta(z,L(\cdot)))$. Then,
\begin{multline}
\label{eq:uc_terms}
    |\underline \Bcal_a^{\eta}[v](x,z) - \Bcal_a[v](x,z)| \\
    \le | \int_X v_z^\eta d\gamma_{x,a}^{\eta*} - \int_X v_z d\Tau(\cdot \mid x, a) |\\
    \le | \int_X v_z^\eta d\gamma_{x,a}^{\eta*} - \int_X v_z d\gamma_{x,a}^{\eta*} |\\
    + | \int_X v_z d\gamma_{x,a}^{\eta*} - \int_X v_z d\Tau^\eta(\cdot \mid x, a) |\\
    + | \int_X v_z d\Tau^\eta(\cdot \mid x, a) - \int_X v_z d\Tau(\cdot \mid x, a) |.
\end{multline}
Noting that $\gamma_{x,a}^{\eta*}$ and $\Tau^\eta(\cdot \mid x, a)$ are supported on $\Xabs$, we obtain the following bound on the first term in Equation~\eqref{eq:uc_terms}: 
\begin{align*}
    & \sum_{b\in B^\eta} \int_b |v_z^\eta(x') - v_z(x')| d\gamma_{x,a}^{\eta*}(x')\\
    \le& \sum_{b\in B^\eta} | \sup_{x'\in b} v_z(x') - \inf_{x'\in b} v_z(x') | \gamma_{x,a}^{\eta*}(b)\\
    \le& \sum_{b\in B^\eta} \sup_{x', y'\in b} |  v_z(x') - v_z(x') | \gamma_{x,a}^{\eta*}(b)\\
    \le& \sum_{s'\in S^\eta} \mu_{v,z,r(s')}(2\eta) \gamma_{x,a}^{\eta*}(\rho^{-\eta}(s')) \le \mu_{v}(2\eta).
\end{align*}
%
Since the previous bound holds for all $x\in \Xsafe$, the first term in \eqref{eq:uc_terms} converges to $0$ uniformly on $x\in \Xsafe$.

Next, we bound the second term in \eqref{eq:uc_terms}. Since $\phi(\eta)$ and the function $v_z \in \Vcal$ satisfy the conditions of Lemma~\ref{lemma:aux_2_term} and $\gamma_{x,a}^{\eta,*} \in \Gamma_{s,a}^\eta$, the second term in \eqref{eq:uc_terms} converges to $0$ uniformly on $x\in \Xsafe$.

Finally we bound the third term in \eqref{eq:uc_terms}. Since $v$ satisfies the conditions of Lemma~\ref{lemma:aux_3_term}, the third term in \eqref{eq:uc_terms} converges to $0$ uniformly over $x\in \Xsafe$.

By the previous reasoning, $|\underline \Bcal_a^{\eta}[v](x,z) - \Bcal_a[v](x,z)|$ converges to $0$, as $\eta \to 0$, uniformly on $x \in \Xsafe$ for all actions $a\in A$. By finiteness of $A$, it follows that $|\underline \Bcal^\eta[v](x,z) - \Bcal[v](x,z)|$ also converges to $0$ uniformly on $x \in \Xsafe$, concluding the proof.
\end{proof}
Before proving Theorem~\ref{thm:asymptotic_optimality} we state the following technical lemma.
\begin{lemma}
\label{lemma:non_expansive}
    Let $v,v'\in \Vcal^\varphi$ with $v(\cdot, z_\text{unsafe}) = v'(\cdot, z_\text{unsafe}) = 0$. Then,
    \begin{multline*}
        \sup_{x\in \Xsafe, z\in Z} |\underline \Bcal^\eta[v](x,z) - \underline \Bcal^\eta[v'](x,z)|\\
        \le \sup_{x\in \Xsafe, z\in Z} |v^\eta(x, z) - v'^\eta(x,z)|.
    \end{multline*}
\end{lemma}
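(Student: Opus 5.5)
The argument is the standard non-expansiveness of a max–min Bellman operator, applied one layer at a time. Fix $x\in\Xsafe$ with $s=\rho(x)$ and $z\in Z$.

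If $z\in Z_\text{acc}$, both $\underline\Bcal^\eta[v](x,z)$ and $\underline\Bcal^\eta[v'](x,z)$ equal $1$ and there is nothing to prove. We may therefore assume $z\notin Z_\text{acc}$, where
\[
\underline\Bcal_a^\eta[v](x,z)=\min_{\gamma\in\Gamma_{s,a}^\eta}\int_X v_z^\eta\,d\gamma,
\qquad v_z^\eta:=v^\eta(\cdot,\delta(z,L(\cdot))),
\]
and similarly for $v'$.

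\textbf{Step 1: remove the max over actions.} This is the same manipulation as in \eqref{eq:max_a}, but applied symmetrically. Take $a^*$ maximizing $\underline\Bcal_a^\eta[v](x,z)$. Then $\underline\Bcal^\eta[v](x,z)-\underline\Bcal^\eta[v'](x,z)\le \underline\Bcal_{a^*}^\eta[v](x,z)-\underline\Bcal_{a^*}^\eta[v'](x,z)$, and the reverse difference is handled the same way. Hence
\[
|\underline\Bcal^\eta[v]-\underline\Bcal^\eta[v']|\le\max_{a\in A}|\underline\Bcal_a^\eta[v]-\underline\Bcal_a^\eta[v']|.
\]

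\textbf{Step 2: remove the min over the ambiguity set.} Fix $a$. By closedness of $\Gamma_{s,a}^\eta$, take $\gamma'$ attaining the minimum for $v'$. Then
\[
\min_\gamma\int v_z^\eta\,d\gamma-\int v_z'^\eta\,d\gamma'\le\int (v_z^\eta-v_z'^\eta)\,d\gamma'\le \sup_{x'\in\operatorname{supp}\gamma'}|v_z^\eta(x')-v_z'^\eta(x')|.
\]
The symmetric bound holds with the roles of $v$ and $v'$ swapped. So the one-step difference is at most the sup of $|v^\eta(x',z')-v'^\eta(x',z')|$ over the successor states $x'$ that the abstraction can reach, with $z'=\delta(z,L(x'))$.

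\textbf{Step 3: reduce to the safe set.} This is the step I expect to need the most care, since the right-hand side of the lemma only takes the supremum over $x\in\Xsafe$, while the measures $\gamma\in\Gamma^\eta_{s,a}$ are supported on $S^\eta\subset\Xabs$ and may charge unsafe representative points. Take such a point $x'\in\Xunsafe$. Its label does not contain $\Prop_\safe$, so $\delta(z,L(x'))=z_\text{unsafe}$, because the DFA moves to the absorbing state $z_\text{unsafe}$ on such labels. Since $v(\cdot,z_\text{unsafe})=v'(\cdot,z_\text{unsafe})=0$, their minorizers also satisfy $v^\eta(\cdot,z_\text{unsafe})=v'^\eta(\cdot,z_\text{unsafe})=0$, so the integrand vanishes at $x'$. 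Every other support point lies in $\Xsafe$, because the partition is $R$-respecting.

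\textbf{Conclusion.} Combining the steps, the difference is bounded by $\sup_{x\in\Xsafe,z\in Z}|v^\eta(x,z)-v'^\eta(x,z)|$. Taking the supremum over $x\in\Xsafe$ and $z\in Z$ on the left-hand side completes the proof.
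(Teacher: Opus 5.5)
Your proof is correct and follows the paper's argument essentially step for step: strip the max over actions as in \eqref{eq:max_a}, bound the min over $\Gamma^\eta_{s,a}$ by evaluating both integrals at a minimizing distribution, and make the unsafe support points contribute nothing via $\delta(z,L(x'))=z_\text{unsafe}$ and $v(\cdot,z_\text{unsafe})=v'(\cdot,z_\text{unsafe})=0$. You bound each one-sided difference separately, taking the maximizing action and the minimizing distribution for $v$ and for $v'$ in turn. This is slightly more careful than the paper, which uses only $a^*$ and $\gamma'^*$ and writes the resulting one-sided bounds with absolute values.
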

\begin{proof}
    Pick $v, v' \in \Vcal^\varphi$, $x\in \Xsafe$ and let $s = \rho(x)$. Define $v_z := v(\cdot, \delta(z,L(\cdot)))$ and $v_z' := v'(\cdot, \delta(z,L(\cdot)))$ for all $z\in Z$. Then, by \eqref{eq:max_a}, $|\underline \Bcal^\eta[v](x,z) - \underline \Bcal^\eta[v'](x,z)| \le \max_{a\in A}|\underline \Bcal_a^\eta[v](x,z) - \underline \Bcal_a^\eta[v'](x,z)|$. Fix $a\in A$ and define $\gamma'^* \in \arg\min_{\gamma\in\Gamma_{s,a}^\eta}\int_X v_z'^\eta d\gamma$. Then, noting that $\gamma'^*$ is supported on $\Xabs$, we obtain
\begin{align*}
    |\underline \Bcal_a^\eta[v](x,z) - \underline \Bcal_a^\eta[v'](x,z)|\\
    \le |\int_X v_z^\eta d\gamma'^* - \int_X v_z'^\eta d\gamma'^*| \le \sum_{b \in B^\eta} \int_b |v_z^\eta - v_z'^\eta| d\gamma'^*\\
    \le \sum_{b \in B^\eta} \gamma'^*(b) \sup_{x'\in b}|v^\eta(x', \delta(z,L(x'))) -\\
    - v'^\eta(x', \delta(z,L(x')))|.
\end{align*}
%
Taking into account that $\delta(z, L(x')) = z_\text{unsafe}$ for all $x' \notin \Xsafe$ and that $v(\cdot, z_\text{unsafe}) = v'(\cdot, z_\text{unsafe}) = 0$, the previous expression is bounded by
\begin{align*}
    \sum_{b \in B^\eta} \gamma'^*(b) \sup_{x'\in \Xsafe}|v^\eta(x', \delta(z,L(x'))) -\\
    - v'^\eta(x', \delta(z,L(x')))| \le \sup_{x'\in \Xsafe, z\in Z} |v^\eta(x', z) - v'^\eta(x',z)|.
\end{align*}
Since the previous bound is uniform on $a\in A$, it is also a bound on $|\underline \Bcal^\eta[v](x,z) - \underline \Bcal^\eta[v'](x,z)|$. Observing that $x\in \Xsafe$ and $z\in Z$ were chosen arbitrarily concludes the proof.
\end{proof}

\begin{proof}[Proof of Theorem~\ref{thm:asymptotic_optimality}]
Consider the RDP sequences $(v_k)_{k \le T}, (\underline v_k^\eta)_{k \le T} \subset\Vcal^\varphi$ defined in Proposition~\ref{prop:RDP}. Since $z_\text{unsafe}$ is absorbing, one can prove by induction that $v_k(\cdot, z_\text{unsafe}) = \underline v_k^\eta(\cdot, z_\text{unsafe}) = 0$ for all $k\le T$. By Lemma~\ref{lemma:non_expansive},
%
%
\begin{multline}
    \label{eq:DP_aux}
    \sup_{x\in \Xsafe, z\in Z} |v_T(x,z) - \underline v_T^\eta(x,z)| \\
    = \sup_{x\in \Xsafe, z\in Z} |\Bcal[v_{T-1}](x,z) - \underline \Bcal^\eta[\underline v_{T-1}^\eta](x,z)| \\
    \le \sup_{x\in \Xsafe, z\in Z}|\Bcal[v_{T-1}](x,z) - \underline \Bcal^\eta[v_{T-1}](x,z)| + \\
    + \sup_{x\in \Xsafe, z\in Z}|\underline \Bcal^\eta[v_{T-1}](x,z) - \underline \Bcal^\eta[\underline v_{T-1}^\eta](x,z)|\\
    \le \sup_{x\in \Xsafe, z\in Z}|\Bcal[v_{T-1}](x,z) - \underline \Bcal^\eta[ v_{T-1}](x,z)| + \\
    + \sup_{x\in \Xsafe, z\in Z}|v_{T-1}(x,z) - \underline v_{T-1}^\eta(x,z)|\\
    \le \sum_{k = 0}^{T-1} \sup_{x\in \Xsafe, z\in Z}|\Bcal[v_k](x,z) - \underline \Bcal^\eta[v_k](x,z)|.
\end{multline}

Pick $\epsilon > 0$, let $k \in \{0, 1, \dots, T-1\}$ and note that, by Lemma~\ref{lemma:uc_operator}, for all $\epsilon>0$ I can pick $\eta_k^0 > 0$ small enough so that for all $\eta \in (0, \eta_k^0 ]$ it holds that $\sup_{x\in \Xsafe, z\in Z}|\Bcal[v_k](x,z) - \underline \Bcal^\eta[v_k](x,z)| < \epsilon/(T-1)$. Then, by Equation~\eqref{eq:DP_aux}, letting $\eta_0 := \min\{\eta_0^k : k \in \{0, 1, \dots, T-1\}\}$ guarantees that $\sup_{x\in \Xsafe, z\in Z}|v_T(x,z) - \underline v_T^\eta(x,z)| < \epsilon$ for all $\eta \in (0, \eta_0]$. Therefore, $\lim_{\eta \to 0} |v_T(x,z) - \underline v_T^\eta(x,z)| = 0$ uniformly in $x\in \Xsafe$ and $z\in Z$. 
This implies that $\underline p_T^\eta(x) = \underline v_T^\eta(x, \delta(z_\mathrm{init}, L(x)))$ converges to $p_T(x) = v_T(x, \delta(z_\mathrm{init}, L(x)))$ uniformly on $x\in \Xsafe$. Furthermore, note that if $x\notin \Xsafe$ it holds that $\delta(z_\mathrm{init},L(x)) = z_\text{unsafe}$, which implies that $p_T(x) := v_T(x,\delta(z_\mathrm{init}, L(x))) = 0$ and $\underline p_T^\eta(x) := \underline  v_T^\eta(x,\delta(z_\mathrm{init}, L(x))) = 0$. This, together with the previous result yields $\lim_{\eta\to 0} |p_T(x) - \underline p_T^\eta(x)| = 0$ uniformly on $X$. 

By a similar reasoning, we observe that the upper bound $\overline p_T^\eta \in \Vcal^\varphi$ defined in Proposition~\ref{prop:RDP} also converges uniformly to $p_T$. Finally, Lemma~\ref{lemma:uc_value_functions} gives us $P_{x_0}^{\kappa^\varepsilon}[\varphi, T] = [\underline p^\varepsilon(x_0), \overline p^\varepsilon(x_0)]$ for all $x_0\in X$. Furthermore, soundness of the abstraction procedure ensures that $P_{x_0}^*[\varphi, T]$ also meets this bounds uniformly on $x_0\in X$. By a similar reasoning as that of \cite[Section 5.1]{jackson2021strategy}, $|P_{x_0}^*[\varphi, T] - P_{x_0}^{\kappa^\varepsilon}[\varphi, T]|\le \overline p(x_0) - \underline p(x_0)|$ which converges to $0$ uniformly on $x_0\in X$. This concludes the proof.

\end{proof}

\begin{proof}[Proof of Corollary~\ref{cor:near_optimal}]
    The proof uses a similar reason than that of \cite[Section 5.1]{jackson2021strategy}. Note that Algorithm~\ref{alg:algo} iteratively refines the partition in such a way that $\eta_i \to 0$. By Theorem~\ref{thm:asymptotic_optimality}, the difference $\overline p(x_0) - \underline p(x_0)$ will eventually be bounded by $\varepsilon$ uniformly on $x_0\in X$ for some iteration number $i\in\naturals_0$, thus exiting the for loop. Next, observe that Proposition~\ref{prop:RDP} gives us $P_{x_0}^{*}[\varphi, T], P_{x_0}^{\kappa^\varepsilon}[\varphi, T] = [\underline p^\varepsilon(x_0), \overline p^\varepsilon(x_0)]$ for all $x_0\in X$. This implies that $|P_{x_0}^*[\varphi, T] - P_{x_0}^{\kappa^\varepsilon}[\varphi, T]|\le \overline p(x_0) - \underline p(x_0)| \le \varepsilon$ for all $x_0\in X$, which concludes the proof.
\end{proof}

\begin{proof}[Proof of Theorem~\ref{thm:optimality_smdp}]
Note that, by Theorem~\ref{thm:asymptotic_optimality} and Corollary~\ref{cor:near_optimal}, the first condition implies the others, thus it suffices to prove the former condition. Pick $x\in X_\safe$ and $a\in A$, and let $s = \rho(s)$. Let also $\gamma, \gamma' \in \Gamma_{s,a}^\eta$ which, by Definition~\ref{def:mdp_st_abstraction}, corresponds
%
%
to the distributions $\alpha_{c_1}, \dots, \alpha_{c_{|C^\eta|}}$, supported on $q_{s,a,c_{1}}, \dots, q_{s,a,c_{|C^\eta|}}$ respectively. Now, for each $c\in C^\eta$, let $\pi_c^* \in \Pcal(q_{s,a,c}^\eta\times q_{s,a,c}^\eta)$ be the optimal coupling between $\alpha_c$ and $\alpha_c'$. Define the coupling $\pi := \sum_{c\in C^\eta} P_W(c) \pi_c^* \in \Pcal(q_{s,a,c}^\eta\times q_{s,a,c}^\eta)$, and note its marginals are
\begin{align*}
    &\sum_{s_1 \in S^\eta} \pi(s_1, s_2) = \sum_{c\in C^\eta}P_W(c) \sum_{s_1\in S^\eta} \pi_c^*(s_1, s_2) =\\
    &= \sum_{c\in C^\eta} P_W(c) \alpha_c'(s_2) = \gamma'(s_2)\\
    &\sum_{s_2\in S^\eta}\pi(s_1, s_2) = \sum_{c\in C^\eta}P_W(c)\sum_{s_2\in S^\eta}\pi_c^*(s_1, s_2) =\\
    &= \sum_{c\in C^\eta}P_W(c)\alpha_c(s_1) = \gamma(s_1),
\end{align*}
for all $s,s' \in S^\eta$, which implies that $\pi$ is a coupling between $\gamma$ and $\gamma'$. Then, the Wasserstein distance between these two distributions is bounded by
\begin{align*}
    &\W(\gamma, \gamma') = \int_{x_1, x_2 \in X} |x - x'|d\pi(x_1, x_2) \\
    &\le \sum_{s_1, s_2\in S^\eta} |s_1 - s_2|\sum_{c\in C^\eta}P_W(c)\pi_c^*(s_1,s_2)\\
    &\le \sum_{c\in C^\eta}P_W(c) \sum_{s_1, s_2\in S^\eta} |s_1 - s_2| \pi_c^*(s_1, s_2)\\
    &= \sum_{c\in C^\eta}P_W(c)\sum_{s_1, s_2 \in q_{s,a,c}^\eta} |s_1 - s_2| \pi_c^*(s_1, s_2)\\ &\le \max_{s_1, s_2 \in q_{s,a,c}^\eta, c\in C^\eta} \|s_1 - s_2\|\\
    &\le \max_{c\in C^\eta} \mathrm{diam}(\mathcal{R}(\rho^{-\eta}(s), a, c)) + 2\eta\\
    &\le (4L_f + 2)\eta.
\end{align*}
Note that this bound holds for all $x\in \Xsafe$ and $a\in A$, and vanishes as $\eta \to 0$. This concludes the proof.
\end{proof}

\begin{proof}[Proof of Theorem~\ref{thm:optimality_smdp_smaller_partition}]
    For the sake of notation, we omit the superscript $\mathbb{S}$ and keep the dependence on $\eta$ implicit. Let $(\underline v_k)_{k \le T}$ and $(\overline v_k)_{k \le T}$ be the sequences of value functions obtained via RDP on $\Ucal^\varphi$ as per Proposition~\ref{prop:RDP}. Similarly, let $(\underline v_k')_{k \le T}$ and $(\overline v_k')_{k \le T}$ be the ones obtained for $\widetilde \Ucal^\varphi$. Note that, by Proposition~\ref{prop:RDP} and an induction argument, one can prove that $\underline v_k(\cdot, z_\unsafe) = \overline v_k(\cdot, z_\unsafe) = \underline v_k'(\cdot, z_\unsafe) = \overline v_k'(\cdot, z_\unsafe) = 0$ for all $k \le T$. We first show, by induction on the sequences $(\underline v_k')_{k\le T}$ and $(\underline v_k)_{k\le T}$, that $\underline p_T'(x) := \underline v_T(x, \delta(z_\mathrm{init}, L(x))) = \underline p_T(x)$ for all $x\in X$. Fix $k \le T$ and assume that $\underline v_k'(x,z) = \underline v_k(x,z)$ for all $x\in \Xsafe$ and $z\in Z$, which is trivially satisfied at $k = 0$. Pick $x\in \Xsafe$ and $z\in Z$, and let $s = \tilde\rho(x)$. Note that, by Proposition~\ref{prop:RDP} and, since $\underline v_k$ is its own piecewise constant minorizer on the partition $B$,
    \begin{align*}
        \underline v_{k+1}(x, z) = \underline \Bcal^\eta[\underline v_k](x,z)\\
        = \mathds{1}_{Z_\text{acc}}(z) + \mathds{1}_{Z\setminus Z_\text{acc}}(z) \max_{a\in A} \min_{\gamma \in \Gamma_{s,a}} \int_X \underline v_k(\cdot, \delta(z, L(\cdot))) d\gamma.
    \end{align*}
    We also obtain a similar expression for $\underline v_{k+1}'$ as a function of $\underline v_k'$. Note that if $z \in Z_\text{acc}$, then $\underline v_{k+1}'(x, z) = \underline v_{k+1}(x, z)$ = 1. We now consider the case $z \notin Z_\text{acc}$. By \cite[Theorem 5]{gracia2025beyond},
    \begin{multline}
    \label{eq:aux_reduced_complexity}
        \underline v_{k+1}(x, z) = \max_{a \in A} \min_{\gamma \in \Gamma_{s,a}} \int_X \underline v_k(\cdot, \delta(z, L(\cdot))) d\gamma\\ 
        = \max_{a\in A} \sum_{c\in C^\eta} P_W(c) \min_{s' \in q_{s,a,c}} \underline v_k(s', \delta(z, L(s'))).
    \end{multline}
    By the previous reasoning we also arrive to a similar expression for the case of $\underline v_{k+1}'$ as a function of $\underline v_k'$. Note that if a cluster $q_{s,a,c}$ of $\Ucal$ contains any unsafe state $s' \in \Xunsafe$, then, the corresponding cluster $\tilde q_{s,a,c}$ of $\widetilde \Ucal$ contains $s_\unsafe$. In that case, since $\delta(z, L(s_\unsafe)) = z_\unsafe$, it follows that $\min_{s' \in q_{s,a,c}} \underline v_k(s', \delta(z, L(s'))) = \min_{s' \in \tilde q_{s,a,c}} \underline v_k'(s', \delta(z, L(s'))) = \underline v_k'(s', z_\unsafe) = 0$. On the other hand, if $q_{s,a,c}$ does not contain any unsafe state, then $\tilde q_{s,a,c} = q_{s,a,c}$. These results, together with Equation~\eqref{eq:aux_reduced_complexity} and the induction hypothesis, imply that $\underline v_{k+1}'(x,z) = \underline v_{k+1}(x,z)$ for all $x\in \Xsafe$ an $z\in Z$. Therefore, by induction, $\underline v_k'(x, z) = \underline v_k(x, z)$ for all $x\in \Xsafe$, $z\in Z$ and $k \le T$. This implies that $\underline p_T'(x) = \underline v_T'(x, \delta(z_\mathrm{init}, L(x))) = \underline v_T(x, \delta(z_\mathrm{init}, L(x))) = \underline v_k(x)$ for all $x\in \Xsafe$. Additionally, for all $x\notin \Xsafe$, $\delta(z_\mathrm{init}, L(x)) = z_\unsafe$, which implies that $\underline p_T'(x) = \underline v_T'(x, \delta(z_\mathrm{init}, L(x))) = \underline v_T(x, \delta(z_\mathrm{init}, L(x))) = \underline v_T(x, z_\unsafe) = 0$. Putting the previous results together we have that $\underline p_T' = \underline p$.
    A similar reasoning reveals that $\overline p_T' = \overline p_T$ Since these results hold for all partition sizes $\eta>0$, asymptotic optimality directly follows from the reasoning in the proof of Theorem~\ref{thm:asymptotic_optimality}, and completeness of Algorithm~\ref{alg:algo} follows from Corollary~\ref{cor:near_optimal}.
\end{proof}
\begin{proof}[Proof of Theorem~\ref{thm:optimality_imdp}]
    The proof is by counterexample. Consider the reach-avoid specification $\varphi =  \Diamond \Prop_\text{goal} \land \square \Prop_\text{safe}$, which we seek to satisfy under $T = 1$ time steps. Informally, $\varphi$ means ``eventually reach the goal $r_\mathrm{goal}$ region while remaining in $\Xsafe$''. Without loss of generality, let $B^\eta$ be a uniform partition and $x$ be an interior point of $X_0$. Since the partitions become finer as $\eta \to 0$, there exists $\eta_0 > 0$ such that for $\eta \in (0, \eta_0]$ it holds that $\rho^{-\eta}(x) \subseteq X_0$. Define $W_1^\eta := \{w \in W : \mathcal{R}(\rho^{-\eta}(x),a,w)) \subseteq r_\text{goal}\}$ and $C_1^\eta := \{c \in C^\eta : \mathcal{R}(\rho^{-\eta}(x), a, c) \subseteq r_\text{goal}\}$. Note that it is possible to define the partitions $B^\eta$ and $C^\eta$ and the system dynamics be such that $P_W(W_1^{\eta}) = P_W(\cup_{c\in C_1^\eta} c) = 0.5$ for all $\eta \in (0, \eta_0]$. By non-contractiveness of $f$ and since the partition $B^\eta$ is uniform, it holds that each reachable set $\mathcal{R}(\rho^{-\eta}(x), a, c)$, with $c\in C^\eta$, always intersects more than one region $b \in B^\eta$. By the first expression in~\eqref{eq:Gamma_poly}, this implies that $\underline P^\eta(s, a, s') = 0$ for all $s' \in S^\eta$, where $s = \rho^\eta(x)$. 
    By the second expression in~\eqref{eq:Gamma_poly}, the total probability of landing in $r_\text{goal}$ in one step is upper bounded by
    \begin{multline}
    \label{eq:aux_bound}
        \sum_{\substack{s'\in S^\eta :\\ \rho^{-\eta}(s') \cap r_\text{goal} \neq\emptyset}} \overline P^\eta(s, a, s') = \sum_{\substack{s'\in S^\eta :\\ \rho^{-\eta}(s') \cap r_\text{goal} \neq\emptyset}}\sum_{c\in C^\eta}P_W(c)\\
        \boldsymbol{1}(\mathcal{R}(\rho^{-\eta}(s), a, c) \cap\rho^{-\eta}(s') \neq \emptyset).
    \end{multline}
    Swapping the order of the sum operator, we observe that, for each $c\in C^\eta$,
    \begin{multline}
        \label{eq:aux_counterexample}
        \sum_{\substack{s'\in S^\eta :\\ \rho^{-\eta}(s') \cap r_\text{goal} \neq\emptyset}} \boldsymbol{1}(\mathcal{R}(\rho^{-\eta}(s), a, c) \cap \rho^{-\eta}(s') \neq \emptyset)\\
        = |\{s' \in S^\eta : \mathcal{R}(\rho^{-\eta}(s), a, c) \cap \rho^{-\eta}(s') \neq \emptyset \land\\
        \land \rho^{-\eta}(s') \cap r_\text{goal} \neq\emptyset\}|.
    \end{multline}
    Note that $\rho^{-\eta}(s') \cap r_\text{goal}$ is equivalent to $\rho^{-\eta}(s') \subseteq r_\text{goal}$, as $B^\eta$ is an $R$-respecting partition. 
    Adding up the terms of the form \eqref{eq:aux_counterexample} for all $c\in C^\eta_1$, we obtain the following lower bound on Equation~\eqref{eq:aux_bound}:
    \begin{align*}
        &\sum_{c\in C_1^\eta} P_W(c)|\{s'\in S^\eta : \mathcal{R}(\rho^{-\eta}(s),a,c) \cap\\
        &\cap
        \rho^{-\eta}(s') \neq \emptyset \land \rho^{-\eta}(s') \subseteq r_\text{goal}\}|\\
        &=\sum_{c\in C_1^\eta} P_W(c)|\{s'\in S^\eta \cap \mathcal{R}(\rho^{-\eta}(s),a,c)\}|\\
        &\ge 2\sum_{c \in C_1^\eta} P_W(c) = 2P(\cup_{c \in C_1^\eta} c) = 1.
    \end{align*}
    It follows that the abstraction admits probability distributions $\gamma, \gamma' \in \Gamma_{s,a}^\eta$ with $\gamma(r_\text{goal}) = 1$ and $\gamma'(r_\text{goal}) = 0$. By Proposition~\ref{prop:RDP}, $\underline p_T(x) = 0$ and $\overline p_T(x) = 1$. Letting $\varepsilon = 0.9$, we note that $\overline p_T^\eta(x) - \overline p_T^\eta(x) > \varepsilon$, which holds for all $\eta \in (0, \eta_0]$. In consequence, we conclude that the abstraction process is not asymptotically optimal. It is also evident that Algorithm~\ref{alg:algo} never terminates, as the condition in the ``$\mathrm{while}$'' loop is never met. Furthermore, as $\phi(\mathrm{Abs}_\mathbb{I}(\Scal, \eta)) \to 0$ is a sufficient condition for asymptotic optimality, absence of asymptotic optimality together with a contrapositive argument yields that $\phi(\mathrm{Abs}_\mathbb{I}(\Scal, \eta)) \not\to 0$, which concludes the proof.

    We remark that it is common practice to not partition the goal state and, instead, considering it as a single state in the abstraction. However, even in this case, Theorem~\ref{thm:optimality_imdp} still holds, but one should adapt the proof to consider reaching the set of states for which the value function $\underline v^\eta_k$ is nonzero, for some $k \le T$ big enough. Then, one could easily show that the probability of reaching this set in one time step is $0$ in the worst case. By the reasoning in \cite{Gracia:L4DC:2025}, if this is the case for a sufficient number of states, RDP would terminate early, yielding $\underline p^\eta_{k+1}(x_0) = 0$ and $\overline p^\eta_{k+1}(x_0) = 1$ for some $x_0 \in \Xsafe$. Then the conclusions in the reach-avoid counterexample would apply, thus proving the theorem.
\end{proof}
\subsection{Additional Empirical Results}
\label{app:additional_results}

Here we discuss in more detail the case studies in Section~\ref{sec:empirical_validation}. First, regarding the temperature regulation case study, its reach-avoid specification is expressed as the \LTLf formula $\varphi_1 = \Diamond \Prop_\mathrm{goal}\land \square\Prop_\mathrm{safe}$.

Next we discuss the $2$-dimensional case study. The system's states are its $2$D position, while the control is the heading angle, taking $8$ values in the range $[-\pi, \pi]$. The  system dynamics are as follows:
\begin{align*}
    x_{t+1} = x_t + v_l\begin{pmatrix}
        \cos(u_t)\\
        \sin(u_t)
    \end{pmatrix} + c_d \begin{pmatrix}
        \tanh(c_m(x_t) w^{(1)}_t)\\\tanh(c_m(x_t) w^{(2)}_t)
    \end{pmatrix},
\end{align*}
where $c_m(x_t) = 1 - 0.4  e^{(-100((x^{(1)}_t - 0.5)^2 + (x^{(2)}_t - 0.5)^2))}$. The disturbance is Gaussian with covariance $\mathrm{diag}(0.01,0.01)$. Note that its effect is highly dependent on the system's position, being smaller in the vicinity of $x = [0.5,0.5]$. The disturbance coefficient is $c_d = 0.25$ and the linear velocity is $v_l = 0.08$. The specification is represented by the \LTLf formula $\varphi_2 = \square\Prop_\mathrm{safe}\land \square(\Prop_\mathrm{water} \to (\neg \Prop_\mathrm{charge} \mathbin{\mathrm{U}} \Prop_\mathrm{carpet}))\land\Diamond(\Prop_\mathrm{charge})$, where $\varphi'\to\varphi'' \equiv \neg \varphi' \lor \varphi''$. The setup and results are shown in Figure~ \ref{fig:2d}, which compares the results of IMDP and SMDP abstractions. The comparison shows a similar trends to our $1$D case study: the bounds in the satisfaction probabilities are tighter in the SMDP case, and refinement improves them. On the other hand, refinement does not seem to improve the bounds in the IMDP case, and the obtained bounds are vacuous, i.e., practically $[0,1]$ in a big portion of the state space.

Figure~\ref{fig:2du} shows the upper bound in the probability of satisfying the specification for the IMDP and SMDP abstraction and for two different discretizations: a coarse one and a fine one.

\begin{figure}[h]
    \centering

    \begin{minipage}{0.49\linewidth}
        \centering
        \includegraphics[width=\linewidth]{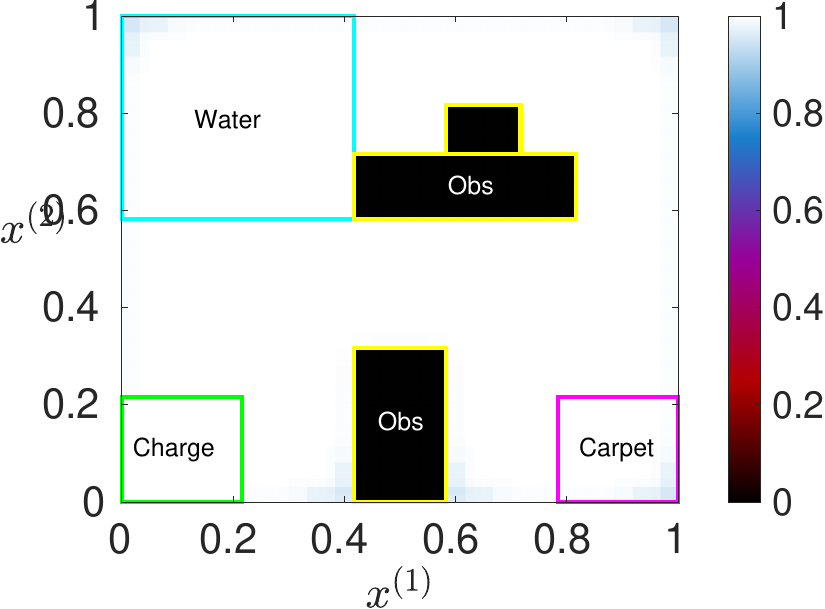}
        \caption{IMDP coarse}
    \end{minipage}
    \hfill
    \begin{minipage}{0.49\linewidth}
        \centering
        \includegraphics[width=\linewidth]{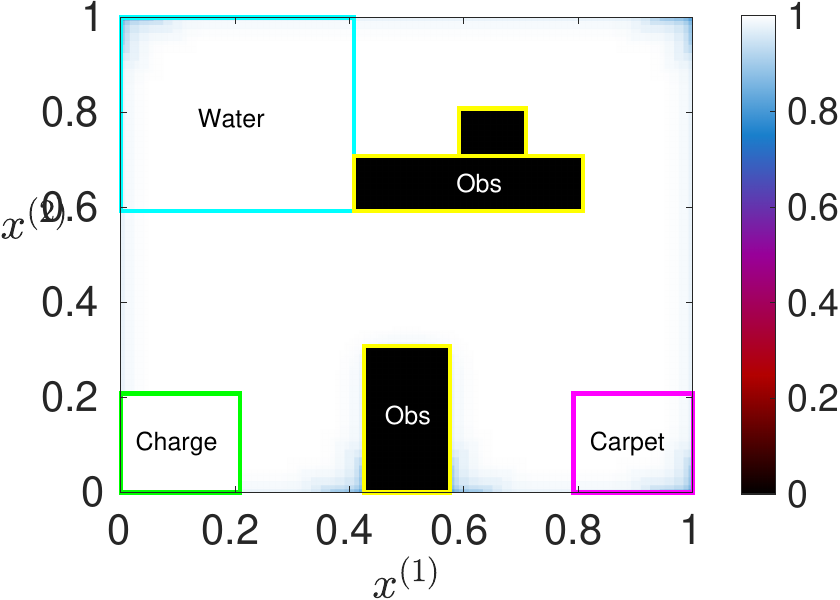}
        \caption{IMDP fine}
    \end{minipage}

    \begin{minipage}{0.49\linewidth}
        \centering
        \includegraphics[width=\linewidth]{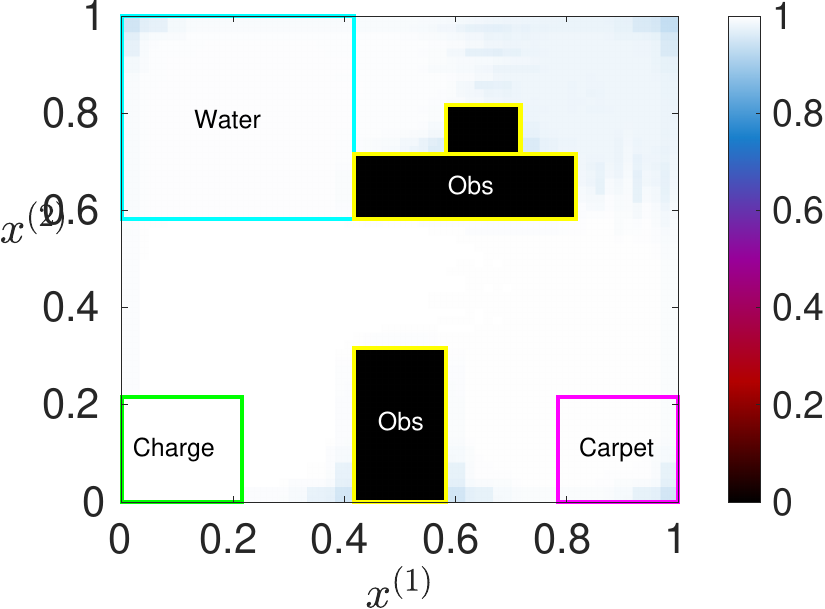}
        \caption{SMDP coarse}
    \end{minipage}
    \hfill
    \begin{minipage}{0.49\linewidth}
        \centering
        \includegraphics[width=\linewidth]{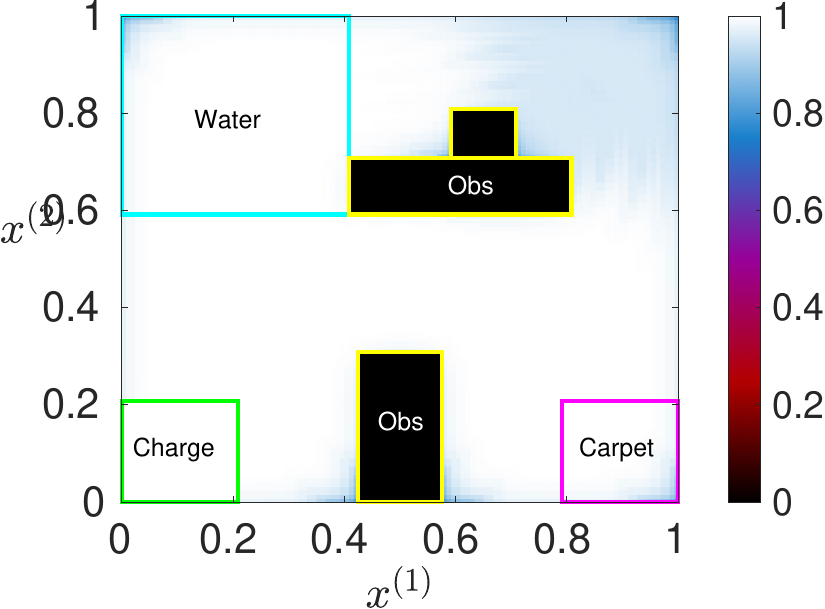}
        \caption{SMDP fine}
    \end{minipage}
    \caption{$2$D case study. Upper bound in the probability of satisfying $\varphi_2$ within $T = 60$ time steps. The color indicates the value of $\overline p_T^\eta(x_0)$.}
    \label{fig:2du}

\end{figure}

\bibliographystyle{IEEEtran}
\bibliography{refs}

\begin{thebibliography}{99}

\bibitem{c1} G. O. Young, ÒSynthetic structure of industrial plastics (Book style with paper title and editor),Ó 	in Plastics, 2nd ed. vol. 3, J. Peters, Ed.  New York: McGraw-Hill, 1964, pp. 15Ð64.
\bibitem{c2} W.-K. Chen, Linear Networks and Systems (Book style).	Belmont, CA: Wadsworth, 1993, pp. 123Ð135.
\bibitem{c3} H. Poor, An Introduction to Signal Detection and Estimation.   New York: Springer-Verlag, 1985, ch. 4.
\bibitem{c4} B. Smith, ÒAn approach to graphs of linear forms (Unpublished work style),Ó unpublished.
\bibitem{c5} E. H. Miller, ÒA note on reflector arrays (Periodical styleÑAccepted for publication),Ó IEEE Trans. Antennas Propagat., to be publised.
\bibitem{c6} J. Wang, ÒFundamentals of erbium-doped fiber amplifiers arrays (Periodical styleÑSubmitted for publication),Ó IEEE J. Quantum Electron., submitted for publication.
\bibitem{c7} C. J. Kaufman, Rocky Mountain Research Lab., Boulder, CO, private communication, May 1995.
\bibitem{c8} Y. Yorozu, M. Hirano, K. Oka, and Y. Tagawa, ÒElectron spectroscopy studies on magneto-optical media and plastic substrate interfaces(Translation Journals style),Ó IEEE Transl. J. Magn.Jpn., vol. 2, Aug. 1987, pp. 740Ð741 [Dig. 9th Annu. Conf. Magnetics Japan, 1982, p. 301].
\bibitem{c9} M. Young, The Techincal Writers Handbook.  Mill Valley, CA: University Science, 1989.
\bibitem{c10} J. U. Duncombe, ÒInfrared navigationÑPart I: An assessment of feasibility (Periodical style),Ó IEEE Trans. Electron Devices, vol. ED-11, pp. 34Ð39, Jan. 1959.
\bibitem{c11} S. Chen, B. Mulgrew, and P. M. Grant, ÒA clustering technique for digital communications channel equalization using radial basis function networks,Ó IEEE Trans. Neural Networks, vol. 4, pp. 570Ð578, July 1993.
\bibitem{c12} R. W. Lucky, ÒAutomatic equalization for digital communication,Ó Bell Syst. Tech. J., vol. 44, no. 4, pp. 547Ð588, Apr. 1965.
\bibitem{c13} S. P. Bingulac, ÒOn the compatibility of adaptive controllers (Published Conference Proceedings style),Ó in Proc. 4th Annu. Allerton Conf. Circuits and Systems Theory, New York, 1994, pp. 8Ð16.
\bibitem{c14} G. R. Faulhaber, ÒDesign of service systems with priority reservation,Ó in Conf. Rec. 1995 IEEE Int. Conf. Communications, pp. 3Ð8.
\bibitem{c15} W. D. Doyle, ÒMagnetization reversal in films with biaxial anisotropy,Ó in 1987 Proc. INTERMAG Conf., pp. 2.2-1Ð2.2-6.
\bibitem{c16} G. W. Juette and L. E. Zeffanella, ÒRadio noise currents n short sections on bundle conductors (Presented Conference Paper style),Ó presented at the IEEE Summer power Meeting, Dallas, TX, June 22Ð27, 1990, Paper 90 SM 690-0 PWRS.
\bibitem{c17} J. G. Kreifeldt, ÒAn analysis of surface-detected EMG as an amplitude-modulated noise,Ó presented at the 1989 Int. Conf. Medicine and Biological Engineering, Chicago, IL.
\bibitem{c18} J. Williams, ÒNarrow-band analyzer (Thesis or Dissertation style),Ó Ph.D. dissertation, Dept. Elect. Eng., Harvard Univ., Cambridge, MA, 1993. 
\bibitem{c19} N. Kawasaki, ÒParametric study of thermal and chemical nonequilibrium nozzle flow,Ó M.S. thesis, Dept. Electron. Eng., Osaka Univ., Osaka, Japan, 1993.
\bibitem{c20} J. P. Wilkinson, ÒNonlinear resonant circuit devices (Patent style),Ó U.S. Patent 3 624 12, July 16, 1990. 






\end{thebibliography}

\end{document}